\numberwithin{equation}{section}
\DeclareFontFamily{OT1}{pzc}{}
\DeclareFontShape{OT1}{pzc}{m}{it}{<-> s * [1.10] pzcmi7t}{}
\DeclareMathAlphabet{\mathpzc}{OT1}{pzc}{m}{it}
\newtheorem{theorem}{Theorem}[section]
\newtheorem{lemma}[theorem]{Lemma}
\newtheorem{deflemma}[theorem]{Definition and Lemma}
\newtheorem{prop}[theorem]{Proposition}
\newtheorem{rmk}[theorem]{Remark}
\newtheorem{coroll}[theorem]{Corollary}
\newcommand\R{{\mathbb R}}
\renewcommand\d{\partial}
\newcommand{\definedas}{\mathrel{\raise.095ex\hbox{\rm :}\mkern-5.2mu=}}
\newcommand{\asdefined}{\mathrel{=\mkern-5.2mu}\raise.095ex\hbox{\rm :}\;}
\newcommand\beq{\begin{equation}}
\newcommand\eeq{\end{equation}}
\newcommand\ben{\begin{enumerate}}
\newcommand\een{\end{enumerate}}
\newcommand\bit{\begin{itemize}}
\newcommand\eit{\end{itemize}}
\newcommand{\Ric}{\operatorname{Ric}}
\newcommand{\Scal}{\operatorname{R}}
\newcommand{\sphere}{\mathbb{S}^{2}}
\newcommand\Crit{\operatorname{Crit}(u)}
\newcommand\CritN{\operatorname{Crit}(N)}
\newcommand\diver{\operatorname{div}}
\newcommand{\abs}[1]{\left\vert #1 \right\vert}
\newcommand{\sign}{\operatorname{sign}}
\title{Uniqueness of equipotential photon surfaces in $4$-dimensional static vacuum asymptotically flat spacetimes for positive, negative, and zero mass --\\ and a new partial proof of the Willmore inequality}
\author{Carla Cederbaum\thanks{cederbaum@math.uni-tuebingen.de, Mathematics Department}\,,  Albachiara Cogo\thanks{albachiara.cogo@uni-tuebingen.de, Mathematics Department}\,, and  Axel Fehrenbach\thanks{axel.fehrenbach@uni-tuebingen.de, Computer Science Department}\\Eberhard Karls Universit\"at T\"ubingen}
\date{}
\begin{document}
\maketitle
\vspace{-3ex}
\begin{abstract}
We present different proofs of the uniqueness of $4$-dimensional static vacuum asymptotically flat spacetimes containing a connected equipotential photon surface or in particular a connected photon sphere. We do not assume that the equipotential photon surface is outward directed or non-degenerate and hence cover not only the positive but also the negative and the zero mass case which has not yet been treated in the literature. Our results partially reproduce and extend beyond results by Cederbaum and by Cederbaum and Galloway. 

In the positive and negative mass cases, we give three proofs which are based on the approaches to proving black hole uniqueness by Israel, Robinson, and Agostiniani--Mazzieri, respectively. 

In the zero mass case, we give four proofs. One is based on the positive mass theorem, the second one is inspired by Israel's approach and in particular leads to a new proof of the Willmore inequality in $(\R^{3},\delta)$, under a technical assumption. The remaining two proofs are inspired by proofs of the Willmore inequality by Cederbaum and Miehe and by Agostiniani and Mazzieri, respectively. In particular, this suggests to view the Willmore inequality and its rigidity case as a zero mass version of equipotential photon surface uniqueness.
\end{abstract}

%\newpage
\section{Introduction}
Trapping of light is a very intriguing phenomenon in General Relativity, central to geometric optics and black hole imaging, and relevant in studying the dynamical stability of e.g.\ the Kerr spacetime. In some cases, trapped light rays can combine to so-called photon spheres, e.g.\ in the Schwarzschild spacetime. 

Here, a \emph{photon sphere} is a timelike hypersurface made out of trapped light, i.e., ruled by null geodesics bounded away from any black holes and naked singularities as well as from the asymptotic end of the spacetime under consideration. The prime example is the hypersurface $\{r=3m\}$ in the Schwarzschild spacetime of positive mass $m>0$. It turns out that this is indeed the only example of a photon sphere (see \Cref{sec:prelims} for a precise definition) arising as the inner boundary of a $4$-dimensional static vacuum asymptotically flat spacetime. In other words ``static vacuum photon spheres have no hair'', just like static vacuum black holes.

This was first proven by Cederbaum~\cite{Ced}, adapting Israel's approach to proving static vacuum black hole uniqueness~\cite{israel1967event}, under the technical assumption that the static lapse function regularly foliates the static spacetime. This approach is restricted to a single or in other words connected photon sphere. Subsequently, adapting Bunting and Masood-ul-Alam's method to proving static vacuum black hole uniqueness~\cite{BMuA}, Cederbaum and Galloway~\cite[Theorem 3.1]{carlagregpmt} proved the same result without the connectedness assumption on the photon sphere. However, they assume that the photon spheres are \emph{non-degenerate}, i.e., that the differential of the static lapse function does not vanish along the photon sphere. One of the goals of this paper is to show that the approaches to proving static vacuum black hole uniqueness by Robinson~\cite{Rob} and by Agostiniani and Mazzieri~\cite{Mazz} can also be adapted to proving uniqueness of static vacuum connected photon spheres, be they degenerate or non-degenerate.

Going beyond photon spheres to more general \emph{photon surfaces}, i.e., to timelike hypersurfaces ruled by null geodesics, Cederbaum--Jahns--Vi\v{c}\'anek-Mart\'inez~\cite[Theorems 3.7, 3.9, and 3.10]{CJV} proved that the Schwarzschild spacetime of any mass $m\in\R$ hosts a zoo of spherically symmetric photon surfaces, while Cederbaum and Galloway~\cite[Theorem 3.8 and Corollary 3.9]{CedGalSurface} proved that all photon surfaces in the Schwarzschild spacetime of any mass $m\in\R$ must be either spherically symmetric or pieces of certain timelike isotropic coordinate hyperplanes. Moreover, Cederbaum and Galloway~\cite[Theorem 4.1]{CedGalSurface} proved uniqueness of so-called ``outward directed equipotential'' photon surfaces, that is, any static vacuum asymptotically flat spacetime with an inner boundary consisting of outward directed equipotential photon surfaces must be isometric to a suitable piece of a Schwarzschild spacetime of positive mass. Their approach is similar in spirit to that in \cite[Theorem 3.1]{carlagregpmt}. Here, a photon surface is called \emph{equipotential} if the static lapse function is constant on each ``time-slice'' of the photon surface and \emph{outward directed} if the normal derivative of the static lapse function is strictly positive along the photon surface with respect to the normal pointing to infinity (see \Cref{sec:prelims} for precise definitions). Note that non-degenerate photon spheres are outward directed equipotential photon surfaces by \cite[Lemma 2.6, Remark 2.7]{carlagregpmt}. Also, being outward directed implies being non-degenerate.

The first main goal of this paper is to demonstrate uniqueness of $4$-dimensional static vacuum asymptotically flat spacetimes with a non-degenerate connected equipotential photon surface inner boundary. In particular, we will show that the outward directedness assumption made in \cite[Theorem 4.1]{CedGalSurface} is unnecessary in the connected case; this will allow us to prove equipotential photon surface uniqueness also in the ``negative mass case'', showing rigidity of negative mass Schwarzschild spacetimes. This part of our results has recently also been established by Cederbaum, Cogo, Leandro, and Paulo dos Santos~\cite{CCLP}.

The second main goal of this paper is to demonstrate that the non-degeneracy assumption made in \cite{carlagregpmt,CCLP} is unnecessary in the connected case in $4$ spacetime dimensions; this will allow us to prove equipotential photon surface uniqueness also in the so far unstudied ``zero mass case'', showing rigidity of the Minkowski spacetime. 

Altogether, we will prove the following partially new results.

\begin{theorem}[Uniqueness of equipotential photon surfaces]\label{MainThm}
Let $(M^{3},g)$ be a smooth Riemannian $3$-manifold with connected inner boundary $\partial M$. Suppose that $M$ carries a smooth positive lapse function $N\colon M\to\R^{+}$ such that $(M,g,N)$ solves the static vacuum equations and is asymptotically flat of mass $m\in\R$ and decay rate $\tau\geq0$. Suppose further that $\partial M$ arises as a time-slice of an equipotential photon surface in the static warped product spacetime $
(\R\times M,-N^{2}dt^{2}+g)$. Then $(M,g)$ is isometric to the exterior region of a round annulus in the Schwarzschild system $(M_{m}^{3},g_{m})$ of mass $m$, with $N$ corresponding to the according restriction of  the Schwarzschild lapse $N_{m}$ under this isometry. 

Furthermore, $\partial M$ is outward directed if and only if $m>0$, inward directed if and only if $m<0$, and degenerate if and only if $m=0$. 
\end{theorem}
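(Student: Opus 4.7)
The plan is to adapt an Israel-type foliation approach using the level sets of $N$, supplemented by the positive mass theorem in the degenerate (zero mass) case. To start, I would extract the boundary geometry from the equipotential photon surface hypothesis. The photon surface condition forces the time slice $\partial M \subset M$ to be totally umbilic in $(M, g)$, while the equipotential condition gives $N|_{\partial M} = N_{0}$ for some constant $N_{0} > 0$. Combining these with the static vacuum equations and the Riccati-type evolution along null generators of the spacetime photon surface should yield that the mean curvature $H_{0} \definedas H|_{\partial M}$ and the outward normal derivative $\nu_{0} \definedas \nu(N)|_{\partial M}$ are constant on $\partial M$ and satisfy a specific algebraic relation among $N_{0}$, $H_{0}$, $\nu_{0}$, and $m$ matching the one satisfied by any Schwarzschild coordinate sphere of mass $m$. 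In particular, the sign of $\nu_{0}$ should already distinguish the outward directed ($m > 0$), inward directed ($m < 0$), and degenerate ($m = 0$) cases asserted in the theorem.

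For $m \neq 0$, I would then use $N$ as a coordinate and foliate $M$ by the level sets $\Sigma_{t} = \{N = t\}$. The first step, classical in the Israel program, is to establish that $|\D N| > 0$ on the interior of $M$; this would be attempted via a Bochner-type identity and a maximum principle exploiting the static vacuum equations together with the asymptotic behavior of $N$. Writing $g = |\D N|^{-2}\, dN^{2} + h_{t}$ along the foliation, the static vacuum equations yield an evolution system for $h_{t}$ and its second fundamental form, and I would construct a Robinson/Israel-type divergence identity whose integrand controls the traceless second fundamental form of each $\Sigma_{t}$. Integrating over $M$, the boundary contribution at $\partial M$ should vanish by the umbilicity and the algebraic identity derived in the previous step, the asymptotic contribution is read off from the expansion of $N$ and $g$ and matches the Schwarzschild value, and Gauss--Bonnet on each $\Sigma_{t}$ (topologically a sphere by connectedness of $\partial M$ and asymptotic flatness) forces all $\Sigma_{t}$ to be totally umbilic round spheres. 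This identifies $(M, g, N)$ with a piece of the Schwarzschild system of mass $m$.

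For $m = 0$ the boundary is degenerate, $dN|_{\partial M} = 0$, so the foliation argument breaks down at $\partial M$ and a separate argument is required. Here I would appeal to the positive mass theorem: since static vacuum metrics have vanishing scalar curvature, if $(M, g)$ can be extended across $\partial M$ to a complete asymptotically flat manifold of mass $0$ with nonnegative scalar curvature, the PMT rigidity statement forces it to be isometric to $(\R^{3}, \delta)$, and then $N$ must be the constant $1$ and the spacetime Minkowski. A natural candidate extension is a doubling across $\partial M$ or a conformal compactification exploiting the umbilicity of the boundary. The main obstacles are accordingly twofold: establishing $|\D N| > 0$ on the interior of $M$ for $m \neq 0$, and producing a PMT-compatible extension across the degenerate boundary when $m = 0$; both are notoriously delicate steps in the respective classical arguments, and the photon surface boundary data must be used precisely to arrange that the boundary terms in the divergence identity, respectively the regularity of the extension, come out just right.
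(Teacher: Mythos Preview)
Your outline matches one combination of strategies the paper pursues (Israel for $m\neq0$, the positive mass theorem for $m=0$), but there is a genuine gap in the $m\neq0$ part. You propose to first prove $\vert\nabla N\vert>0$ on the interior of $M$ via a Bochner identity and the maximum principle, and only then run the Israel foliation. This is precisely the step that the Israel approach does \emph{not} deliver: in the paper, the Israel-style proof (\Cref{thm:Israel}) explicitly carries $\nabla N\neq0$ as an additional hypothesis, and no Bochner/maximum-principle argument is known that removes it a priori. The paper's \emph{unconditional} proofs for $m\neq0$ instead use Robinson's divergence identity (\Cref{thm:Robinson}) or the Agostiniani--Mazzieri conformal/cylindrical ansatz (\Cref{thm:AM}); in both, the divergence inequality is shown to extend continuously across $\CritN$, the boundary and asymptotic contributions yield two opposite inequalities which combine (via the photon surface constraints and Gau\ss--Bonnet) to equality, and only \emph{then} does one read off $\nabla W=-\tfrac{8NW}{1-N^{2}}\nabla N$, from which the absence of critical points follows a posteriori. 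Also, in the Israel argument the boundary term at $\partial M$ does not vanish; it contributes concrete expressions involving $H_{0}$, $N_{0}$, $r_{0}$ that are compared with the asymptotic term to produce inequalities (see \eqref{ineq:1}, \eqref{ineq:2}), and the photon surface constraint \eqref{eq: israel_mc2} is what closes them to equality.

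For $m=0$, your PMT idea is the right one, but the extension is not a doubling or conformal compactification. The paper first shows (\Cref{lem:zero}) that degeneracy at a single boundary point plus the Hopf lemma/maximum principle force $N\equiv1$, hence $\Ric=0$ and $(M,g)$ is flat; a local photon-surface classification in flat space then gives that $\partial M$ is an intrinsically and extrinsically round sphere of some radius $r_{0}$. One then glues in a flat round ball of radius $r_{0}$ to obtain a complete, Ricci-flat, asymptotically flat manifold of ADM mass zero, and applies PMT rigidity. Note that this step requires $\tau>\tfrac{1}{2}$; to reach the full range $\tau\geq0$ claimed in \Cref{MainThm} the paper invokes the Willmore-type rigidity of Agostiniani--Fogagnolo--Mazzieri (see \Cref{rmk:AFM}).
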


\begin{coroll}[Uniqueness of photon spheres]\label{coro:sphere}
Let $(M^{3},g)$ be a smooth Riemannian $3$-manifold with connected inner boundary $\partial M$. Suppose that $M$ carries a smooth positive lapse function $N\colon M\to\R^{+}$ such that $(M,g,N)$ solves the static vacuum equations and is asymptotically flat of mass $m\in\R$ and decay rate $\tau\geq0$. Suppose further that $\partial M$ arises as a time-slice of a photon sphere  in the static warped product spacetime $
(\R\times M,-N^{2}dt^{2}+g)$. Then $(M,g)$ is isometric to a suitable piece of the Schwarzschild system $(M_{m}^{3},g_{m})$ of mass $m>0$, with $N$ corresponding to the according restriction of  the Schwarzschild lapse $N_{m}$ under this isometry.\end{coroll}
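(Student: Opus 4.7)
The plan is to derive Corollary~\ref{coro:sphere} as a direct consequence of Theorem~\ref{MainThm}. The central observation is that any photon sphere is automatically a non-degenerate equipotential photon surface: by \cite[Lemma 2.6, Remark 2.7]{carlagregpmt}, the lapse $N$ is constant along each time-slice of a photon sphere, and $dN$ does not vanish along it. Hence the hypotheses of Theorem~\ref{MainThm} are satisfied, and moreover $\partial M$ is known to be non-degenerate.

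The first step is then simply to invoke Theorem~\ref{MainThm} to conclude that $(M,g)$ is isometric to the exterior region of a round annulus in the Schwarzschild system $(M_{m}^{3},g_{m})$ of mass $m$, with $N$ corresponding to the appropriate restriction of $N_{m}$. What remains is to pin down the sign of $m$ and identify the boundary precisely.

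Since the photon sphere is non-degenerate, the final clause of Theorem~\ref{MainThm} immediately excludes the case $m=0$. To exclude $m<0$, I would argue directly in Schwarzschild coordinates: under the isometry, $\partial M$ is mapped to a round coordinate sphere $\{r=r_{0}\}$ in the negative mass Schwarzschild space, and its time-evolution in the static spacetime must be a photon sphere in negative mass Schwarzschild. A short computation of the null geodesic equation, or equivalently of the totally umbilic photon sphere condition in Schwarzschild, shows that the required radius satisfies $r_{0}=3m$, which has no positive solution for $m<0$. Hence $m>0$.

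The same algebraic identification $r_{0}=3m$ in the remaining case $m>0$ pins down the boundary to be the standard Schwarzschild photon sphere, so $(M,g)$ is isometric to the region $\{r\geq 3m\}$ of positive mass Schwarzschild, which is the ``suitable piece'' referred to in the statement. I do not foresee any serious obstacle here: once Theorem~\ref{MainThm} is in hand, the proof of the corollary amounts to bookkeeping together with a brief computation of the photon sphere radius in Schwarzschild.
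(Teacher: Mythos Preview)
Your overall strategy --- apply Theorem~\ref{MainThm} and then pin down the sign of $m$ --- is the same as the paper's, but there is a genuine gap in how you exclude $m=0$. You assert that \cite[Lemma 2.6, Remark 2.7]{carlagregpmt} shows that $dN$ does not vanish along a photon sphere, i.e.\ that every photon sphere is non-degenerate. That is not what the cited result says: it shows that a \emph{non-degenerate} photon sphere is outward directed (and hence equipotential), but non-degeneracy is an \emph{assumption} there, not a conclusion. Indeed, the paper explicitly records ``no connected degenerate photon spheres'' as a new corollary, and its proof of \Cref{coro:sphere} treats the degenerate case separately: if the photon sphere were degenerate, then by \Cref{lem:zero} one has $N\equiv 1$ and $(M,g)$ is locally flat, so the photon sphere is locally embedded in Minkowski; but the only photon surfaces in Minkowski are pieces of timelike hyperplanes, which cannot have compact time-slices, a contradiction.

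The fix in your framework is minor: your own $r_{0}=3m$ computation already rules out $m\leq 0$ simultaneously, since $r_{0}>0$ forces $m>0$. So instead of invoking non-degeneracy to exclude $m=0$, just apply Theorem~\ref{MainThm} directly (a photon sphere is equipotential by definition, no non-degeneracy needed), transport the photon sphere to Schwarzschild via the resulting isometry, and observe that Schwarzschild of mass $m\leq 0$ hosts no photon spheres. This is in effect what the paper does, though it phrases the $m=0$ exclusion via the local Minkowski argument rather than the global isometry. In the non-degenerate case, the paper uses the \cite{carlagregpmt} result to get outward-directedness and reads off $m>0$ immediately from the last clause of Theorem~\ref{MainThm}, which is slightly quicker than your $r_{0}=3m$ route but equivalent in content.
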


In the positive mass case, our three proofs of \Cref{MainThm} are relatively closely related to the original black hole uniqueness proofs by Israel, Robinson, and Agostiniani--Mazzieri, respectively, applying and generalizing insights and techniques from the non-degenerate photon sphere case introduced in \cite{Ced}. To treat the negative mass case requires some small changes in each of the three approaches. The reader interested in gaining more insights into the relationship between the Robinson approach and the Agostiniani--Mazzieri approach is referred to \cite[Section 6]{CCLP}.

The zero mass case can be handled by applying the positive mass theorem (see \Cref{sec:zero}) as well as by strategies somewhat related to the approaches by Israel (see \Cref{subsec:zeroIsrael}), Robinson (see \Cref{subsec:zeroRobinson}), and Agostiniani--Mazzieri (see \Cref{subsec:zeroAM}). While our Israel style approach is new, the Robinson style approach is an adaptation of the proof of the classical Willmore inequality by Cederbaum and Miehe \cite{Anabel}, and the Agostiniani--Mazzieri style approach is an adaptation of the proof of the classical Willmore inequality by Agostiniani and Mazzieri \cite{MazzAgo}, see also \cite{Ago_Fog_Maz-1}. Moreover, our Israel style proof gives a new proof of the Willmore inequality in $(\R^{3},\delta)$ under a technical assumption, see \Cref{thm:Willmore}. This suggests to view the Willmore inequality and its rigidity assertion as some kind of zero mass version or zero mass limit of the equipotential photon surface uniqueness theorems.

The uniqueness results for black holes, non-degenerate photon spheres, and outward directed equipotential photon surfaces described above have been generalized to higher dimensions~\cite{Hwang,GIS,CedGalSurface,Mazz,CCLP,Raulot}. In this work however, we will stick to $4$ spacetime dimensions as Israel's approach~\cite{israel1967event} does not seem to lend itself to a generalization to higher dimensions\footnote{On the other hand, see \cite{CCLP} in which a higher dimensional version of Robinson's approach~\cite{Rob} is presented and related to the uniqueness results in \cite{Mazz}, extending the results of the present paper to higher dimensions, both in the positive and negative mass cases. For the zero mass case, a higher dimensional analog of the zero mass case can be obtained by suitably adapting the higher dimensional proof of the Willmore inequality \cite{MazzAgo,Anabel}.} (see \Cref{rmk:high}). The described uniqueness results have also been extended to other matter models~\cite{Cedrgal2,YazaLazov,YazaLazov2,yazadjiev,CJV,BCC}. In particular, Borghini, Cederbaum, and Cogo~\cite{BCC} address connected not necessarily outward directed equipotential photon surface uniqueness in $4$-dimensional electrostatic electro-vacuum asymptotically flat spacetimes; as they assume non-vanishing electric potential, the results of \cite{BCC} do not cover the vacuum case treated here. The same applies to the work by Yazadjiev and Lazov~\cite{YazaLazov} who address sub- and super-extremal photon sphere uniqueness in the same setting as~\cite{BCC}.

\subsection*{This paper is structured as follows:}
In \Cref{sec:prelims}, we will introduce our notation and definitions and give a first proof of the zero mass case of \Cref{MainThm}. We will also list some relevant facts about non-degenerate photon spheres and equipotential photon surfaces. In \Cref{sec:zero}, we will provide similar facts about degenerate photon spheres and equipotential photon surfaces. In the brief \Cref{sec:coro}, we will deduce \Cref{coro:sphere} from \Cref{MainThm}. \Cref{sec:israel,sec:robinson,sec:AM} will be dedicated to giving proofs of \Cref{MainThm} based on the approaches by Israel, Robinson, and Agostiniani--Mazzieri, respectively. In each of these sections, we will first treat the positive/negative mass case and then treat the zero mass case.

\subsection*{Acknowledgements}
The authors would like to thank Stefano Borghini, Lorenzo Mazzieri, and David Robinson for helpful comments and questions. The work of Carla Cederbaum is supported by the focus program on Geometry at Infinity (Deutsche Forschungsgemeinschaft, SPP 2026).

\section{Preliminaries}\label{sec:prelims}
Before we recall the definitions and properties of photon spheres and equipotential photon surfaces, let us briefly collect the definitions of static vacuum spacetimes and systems (\Cref{sec:static}), introduce our asymptotic decay assumptions, and collect a few immediate consequences. As we will only address $4$-dimensional spacetimes, we restrict all definitions to these dimensions for simplicity.

All manifolds, metrics, and functions in this work will be smooth unless explicitly stated otherwise, all manifolds will be assumed to be oriented, and all submanifolds will be assumed to be embedded and oriented. Our sign and scaling convention for the mean curvature $H$ of a ($2$-)surface in a $3$-dimensional Riemannian manifold $(M^{3},g)$ is such that the unit round sphere $\mathbb{S}^{2}$ in the Euclidean space $(\R^{3},\delta)$ has mean curvature $H=2$ with respect to the unit normal $\nu$ pointing towards infinity.

\subsection{Static vacuum spacetimes and systems}\label{sec:static}
In this paper, a \emph{(standard) static spacetime} is a $4$-dimensional manifold of the form $\R\times M^{3}$ for some $3$-dimensional manifold $M$, carrying a Lorentzian metric of the form
\begin{align}\label{eq:Lorentzian}
-N^{2}dt^{2}+g,
\end{align}
where $N\colon M\to\R^{+}$ is a positive function called the \emph{(static) lapse function}, $g$ is a Riemannian metric on $M$, and $t\in\R$. The same information can alternatively be encoded in the \emph{static system} $(M^{3},g,N)$. If $M$ has a boundary $\partial M$, we will assume that both $N$ and $g$ smoothly extend to $\partial M$, with $N>0$ on $\partial M$. Consequently,  \eqref{eq:Lorentzian} smoothly extends to $\R\times\partial M$. To allow for (non-warped product) photon surfaces to arise as the inner boundary of a static spacetime, we will slightly abuse notation and call a $4$-dimensional Lorentzian manifold $(\mathcal{L}^{4},\mathfrak{g})$ a \emph{static spacetime} if it is the closure of an open subset of a (standard) static spacetime $(\R\times M^{3},-N^{2}dt^{2}+g)$, with ``inner'' boundary $\partial\mathcal{L}\subset\R\times\left(M\cup\partial M\right)$.

Next, a static system $(M^{3},g,N)$ (and the associated static spacetime) are called \emph{vacuum} or said to satisfy the \emph{static vacuum equations} if 
\begin{align}\label{eq:SVE1}
N \Ric &= \nabla^2N,\\\label{eq:SVE2}
\Delta N &= 0 
\end{align}
hold on $M$, where $\Ric$, $\nabla$, $\nabla^{2}$, and $\Delta$ denote the Ricci tensor, Levi-Civita connection, Hessian, and Laplace--Beltrami operator with respect to $g$, respectively. As is well-known, the scalar curvature $\Scal$ of a static vacuum system $(M,g,N)$ vanishes by a combination of \eqref{eq:SVE1} and \eqref{eq:SVE2}, i.e.,
\begin{align}\label{scal0}
\Scal=0
\end{align}
holds on $M$. The undisputedly most important example of a static vacuum system is the \emph{Schwarz\-schild system $(M^{3}_{m},g_{m},N_{m})$ of mass $m\in\R$}, given by 
\begin{align}
\begin{split}\label{Schw}
    N_{m}(r) &= \sqrt{1-\frac{2m}{r}},\\
    g_{m}&=\frac{dr^{2}}{N_{m}(r)^{2}}+r^{2} g_{\sphere},
\end{split}
\end{align}
on $M^3_m= (2m,\infty) \times \sphere$ when $m>0$ and on $M^{3}_{m}=(0,\infty)\times\sphere$ when $m\leq0$, where $ g_{\sphere}$ denotes the canonical metric on $\sphere$. For $m=0$, the Schwarzschild system corresponds to Euclidean space with the origin removed and written in spherical polar coordinates; accordingly, the corresponding spacetime is isometric to the Minkowski spacetime away from the worldline of the coordinate origin. For $m>0$, the Schwarzschild system or suitable subsets thereof model the exterior region of a static, spherically black hole or star at one instant of ``static time $t$''. For $m<0$, the Schwarzschild system and the corresponding spacetime become singular as $r\to0+$ and are considered unphysical.

A static system $(M^{3},g,N)$ will be called \emph{asymptotically flat of mass $m\in\R$} if, outside some compact set $K\subset M$, it is diffeomorphic to the exterior region of a closed ball $B\subset\R^{3}$, $M\setminus K\approx \R^{3}\setminus B$, and if, in the coordinates $(x^{i})$ induced by this diffeomorphism, we have
\begin{align}\label{eq:asyg}
g_{ij}&=\delta_{ij}+o_{2}(\vert x\vert^{-\tau})\\\label{eq:asyN}
N &= 1 - \frac{m}{|x|}  + o_2(|x|^{-1})  \text{ as } |x| \rightarrow +\infty
\end{align}
as $|x| \to \infty$ in the coordinates induced by this diffeomorphism, with \emph{decay rate} $\tau\geq0$. Here, for a given smooth function $f \colon \R^{3} \rightarrow \R$, the notation $f = o_l (|x|^{\alpha})$ as $\vert x\vert\to\infty$ for some $l \in \mathbb{N}$, $\alpha \in \R$ is an abbreviation for
\begin{equation*}
 \sum_{|J|\leq l} |x|^{\alpha+|J|}|\partial^J f| = o(1)
\end{equation*}
as $|x|\rightarrow \infty$, where $J$ runs through all multi-indices with $\vert J\vert\leq l$. In this paper, we will not give many details about asymptotic considerations as these are all rather standard in the $3$-dimensional setting. We refer the interested reader e.g.\ to \cite[Lemma 2.5]{CCLP} for more details. In particular, we note that the mass parameter $m$ is geometric, i.e., independent of the choice of asymptotic coordinates.

Before we move on, let us note that asymptotically flat static systems $(M^{3},g,N)$ are necessarily metrically and geodesically complete (up to the boundary $\partial M$) and such that the necessarily at most finitely many connected components of $\partial M$ are all closed, see for example \cite[Appendix]{CGM}. Here, being \emph{geodesically complete up to the boundary} means that any geodesic $\gamma\colon I\to M$ which is not defined on all of $\R$, $I\neq\R$, can be smoothly extended to a geodesic $\widehat{\gamma}\colon J\to M\cup\partial M$, $J\supseteq I$, such that either $J=\R$, $J=[a,\infty)$, $J=(-\infty,b]$, or $J=[a,b]$ for some $a,b\in\R$ such that $\widehat{\gamma}(a), \widehat{\gamma}(b)\in\partial M$ (whenever applicable).

Now let $(M^{3},g,N)$ be an asymptotically flat static vacuum system with connected boundary $\partial M$ and mass $m$. We set
\begin{equation} \label{eq: israel_lambda}
    \lambda \definedas  \sign (m)
\end{equation}
for convenience, with the convention that $\sign(0)\definedas0$. For any regular value $s>0$ of $N$, let $\nu_{s}$ denote the unit normal to the (regular) level set $\Sigma^{2}_{s}\definedas \{N=s\}$ of $N$. Then by the Hopf lemma and the maximum principle\footnote{which can both be modified to allow for non-compact $M$ under our asymptotic flatness assumptions \eqref{eq:asyg}, \eqref{eq:asyN}; we implicitly use here that $(M,g)$ is complete up to the boundary.} applied to the static vacuum equation~\eqref{eq:SVE2}, the normal derivative $\nu_{s}(N)$ of $N$ on any regular level set $\Sigma^{2}_{s}$ of $N$ satisfies
\begin{align}\label{eq:signspre}
\sign(\nu_{s}(N))=\sign(1-s).
\end{align}
Equivalently, provided that $s$ is a regular value of $N$, one has
\begin{align}\label{eq:normals}
\nu_{s}=\sign(1-s)\frac{\nabla N}{\vert\nabla N\vert}
\end{align}
on $\Sigma^{2}_{s}$, where $\nabla N$ denotes the gradient of $N$ and $\vert\cdot\vert$ denotes the pointwise tensor norm with respect to $g$. Moreover, the \emph{Smarr formula}
\begin{align}\label{eq:Smarr}
\int_{\Sigma^{2}_{s}}\nu_{s}(N)\,dA=4\pi m
\end{align}
holds for all regular values $s$ of $N$ by a simple application of the divergence theorem to \eqref{eq:SVE2}, exploiting the geodesic completeness of $(M,g)$ up to the boundary and our asymptotic assumptions \eqref{eq:asyg}, \eqref{eq:asyN}. Here, $dA$ denotes the area element induced on $\Sigma^{2}_{s}$ by $g$. By \eqref{eq:signspre}, this implies 
\begin{align}\label{eq:signs}
\sign(\nu_{s}(N))=\sign(1-s)=\sign(m)=\lambda
\end{align}
for all regular values $s$ of $N$. We can summarize this as follows.

\begin{prop}[Sign of mass $m$]\label{prop:signm}
Let $(M^{3},g,N)$ be an asymptotically flat static vacuum system with connected boundary $\partial M$, mass $m$, and decay rate $\tau\geq0$. Suppose that $N\vert_{\partial M}=N_{0}$ for some constant $N_{0}>0$ and let $\nu$ denote the unit normal to $\partial M$ pointing towards infinity. Then 
\begin{align}\label{eq:signbdry}
\sign(1-N_{0})&=\sign(m)=\lambda.
\end{align}
Furthermore, if $N_{0}\neq1$, $\partial M$ is a regular level set of $N$, $\partial M=\Sigma^{2}_{N_{0}}$, with
\begin{align}\label{eq:normal}
\nu&=\lambda \frac{\nabla N}{\vert\nabla N\vert}
\end{align} 
on $\partial M$.
\end{prop}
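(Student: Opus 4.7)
The plan is to upgrade the sign identities \eqref{eq:signspre}--\eqref{eq:signs}, which the excerpt established only for regular level sets $\Sigma^{2}_{s}$, to the boundary level set $\partial M = \{N = N_{0}\}$, using the maximum principle and the Hopf boundary point lemma applied to the harmonic function $N$ on $M$. The first observation is that the Smarr-type identity \eqref{eq:Smarr} in fact continues to hold over $\partial M$ irrespective of whether $N_{0}$ is a regular value of $N$: integrating $\Delta N = 0$ by parts over the region bounded by $\partial M$ and a large coordinate sphere $\{\vert x\vert=R\}$, and letting $R\to\infty$ using \eqref{eq:asyN}, produces $\int_{\partial M}\nu(N)\,dA = 4\pi m$ directly.

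Next, I would split into three cases according to the sign of $1-N_{0}$. If $N_{0}=1$, then $N$ is a harmonic function on $M$ that equals $1$ on $\partial M$ and tends to $1$ at infinity; the maximum principle (applied with the truncation argument alluded to in the footnote after \eqref{eq:signspre}) forces $N\equiv 1$, hence $\nabla N\equiv 0$, and the boundary Smarr identity yields $m=0$. This gives $\sign(1-N_{0})=0=\sign(m)$, and since $N_{0}=1$ the second assertion \eqref{eq:normal} need not be proved.

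If $N_{0}<1$, then the weak maximum principle gives $N\geq N_{0}$ throughout $M$, so $N$ attains its infimum on $\partial M$; the Hopf boundary point lemma then delivers $\nu(N)>0$ pointwise on $\partial M$, where $\nu$ is the inward-into-$M$ unit normal, i.e.\ the normal pointing towards infinity. Substituting this into the boundary Smarr identity forces $m>0$, so $\sign(m)=1=\sign(1-N_{0})$; moreover the strict inequality $\nu(N)>0$ certifies that $\nabla N\neq 0$ on $\partial M$, so $\partial M=\Sigma^{2}_{N_{0}}$ is a regular level set with $\nabla N$ pointing in the same direction as $\nu$, yielding \eqref{eq:normal} with $\lambda=+1$. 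The case $N_{0}>1$ is entirely symmetric, with $N$ attaining its supremum on $\partial M$, $\nu(N)<0$ by Hopf, $m<0$ by Smarr, and $\nabla N$ antiparallel to $\nu$, giving \eqref{eq:normal} with $\lambda=-1$.

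The only genuinely delicate point is the legitimacy of the maximum principle and Hopf lemma on the non-compact manifold $M$. The footnote in the excerpt already asserts that these tools are available under \eqref{eq:asyg}--\eqref{eq:asyN}, completeness up to the boundary, and harmonicity of $N$; concretely one argues by comparing $N$ with $1\pm\varepsilon(1-\chi_{R})$ on annular regions $\{R\leq \vert x\vert\leq R'\}$ and letting $R'\to\infty$, using $N\to 1$ uniformly at infinity. Once this is granted, the remainder of the proof is essentially bookkeeping of signs, and I expect no additional obstacle.
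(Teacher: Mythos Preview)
Your proposal is correct and follows essentially the same approach as the paper: the paper's proof also splits into the cases $N_{0}\neq 1$ (Hopf lemma gives regularity of $\partial M$, then invoke \eqref{eq:signs}, which is exactly your Hopf-plus-Smarr argument pre-packaged) and $N_{0}=1$ (maximum principle with $N\to 1$ at infinity forces $N\equiv 1$, hence $m=0$). The only difference is organizational---the paper cites the already-established \eqref{eq:signs} and \eqref{eq:normals}, whereas you reprove them directly on $\partial M$.
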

\begin{proof}
First suppose that $N_{0}\neq1$ and note that by the Hopf lemma, $\partial M$ is a regular level set of $N$. The claim in \eqref{eq:signbdry} then follows readily from \eqref{eq:signs} while \eqref{eq:normal} follows from \eqref{eq:normals}. Second, suppose that $N_{0}=1$. Then as $N\to1$ asymptotically, the maximum principle applied to \eqref{eq:SVE2} implies that $N\equiv1$ and hence by definition of the asymptotic mass in \eqref{eq:asyN} we have $m=0$ and thus $\lambda=0$ which gives \eqref{eq:signbdry}.
\end{proof}

\subsection{Photon spheres and equipotential photon surfaces}\label{sec:photo}
Let us now recall the basic definitions and properties of equipotential photon surfaces and of photon spheres. For more detailed information, we refer the interested reader to \cite{CedGalSurface,CJV} and the references given therein. First, a timelike hypersurface $P^3\subset\mathfrak{L}^{4}$ in a smooth Lorentzian manifold $(\mathcal{L}^{4},\mathfrak{g})$ is called a \emph{photon surface} if any null geodesic initially tangent to $P$ remains tangent to $P$ as long as it exists or in other words if $P$ is \emph{null totally geodesic} in $(\mathcal{L}^{4},\mathfrak{g})$. Equivalently, $P$ is a photon surface if it is totally umbilic in $(\mathcal{L},\mathfrak{g})$. If the ambient spacetime is (standard) static with $(\mathcal{L}\subseteq\R\times M^{3},\mathfrak{g}=-N^{2}dt^{2}+g)$ then $P$ is \emph{equipotential} if the lapse function $N$ is constant along each connected component of each \emph{time-slice} $\Sigma^{2}(t) \definedas  P \cap \left(\{t\}\times M\right)$ of $P$. If $N\equiv\text{const.}$ along $P$ and $P=\R\times\Sigma^{2}$ for some connected surface $\Sigma^{2}$ then $P$ is called a \emph{photon sphere} (regardless of the topology of $\Sigma$).

If $P^{3}$ is a photon surface in a static spacetime with lapse $N$ then $P$ is called \emph{non-degenerate} if $\nabla N(x)\neq0$ for all $x\in P$, otherwise it is called \emph{degenerate}. If the underlying static system $(M^{3},g,N)$ is asymptotically flat and $\eta$ denotes the unit normal to $P$ pointing to infinity, then $P$ is called \emph{outward} respectively \emph{inward directed} if $\eta(N)>0$ respectively $\eta(N)<0$ on $P$. 

It is established in \cite{CedGalSurface,CJV} that each \emph{time-slice $\Sigma^{2}=\Sigma^{2}(t)$} of a non-degenerate equipotential photon surface in a static vacuum spacetimes is umbilic, $\mathring{h}=0$, has constant mean curvature $H_{0}\neq0$, constant normal derivative of the lapse $\nu(N)$, and constant scalar curvature $\Scal_{\Sigma}$, satisfying the \emph{photon surface constraints}
 \begin{align}\label{normalLapse}
2\nu(N) &= c N_{0}H_{0} ,\\\label{scalarMeanCurv}
 \Scal_{\Sigma} &= \left( c + \tfrac{1}{2} \right) H^2_{0}
 \end{align}
 for some constant $c\neq0$, where $N_{0}\definedas N\vert_{\partial M}>0$. Moreover, $H_{0}>0$ holds with respect to the outward pointing unit normal $\nu$ to $\Sigma^{2}$ if the ambient static vacuum spacetime is asymptotically flat with $\tau\geq0$. This is shown\footnote{In fact, both of these sources make stronger asymptotic decay assumptions than we do and, in addition, assume outward directedness, $\nu(N)>0$, and $H_{0}\nu(N)>0$ on $\partial M$, respectively. As $\partial M$ is connected in our setting, neither of the second assumptions is needed to conclude as one sees in the corresponding proofs. This is because these conditions are only needed to handle potential other boundary components. Moreover, our asymptotic decay assumptions imply that large coordinate spheres have positive mean curvature which is the other ingredient needed to conclude in these sources.} in \cite[Lemma 2.6]{CedGalSurface}, \cite[Theorem 5.22]{CJV}.
 
 Specifically, $c=1$ holds for non-degenerate photon spheres. The photon surface constraints relate the topology of $\Sigma$ and the constant $c$ via Gau\ss--Bonnet theorem, i.e.,
     \begin{equation} \label{eq: Gauss-Bonnet for Sigma}
         4 \pi \chi (\Sigma) = \int_{\Sigma} \left( c + \frac{1}{2} \right) H^2_{0}\,dA = \left( c+ \frac{1}{2} \right) H^2_{0}  \abs{\Sigma},
     \end{equation}
     where $\chi (\Sigma)$ is the Euler characteristic of $\Sigma$. In particular, $c > -\tfrac{1}{2}$ holds if and only if $\Sigma$ is topologically a sphere (i.e., $\chi (\Sigma) = 2$).
    
As discussed in the introduction, the Schwarzschild spacetimes host a zoo of equipotential photon surfaces which are all spherically symmetric. If $m\leq0$, the Schwarzschild spacetimes host no photon spheres. If $m>0$, they host a unique photon sphere at $r=3m$. One can check e.g.\ with the help of \Cref{prop:signm} that all equipotential photon surfaces in a positive mass Schwarzschild spacetime are outward directed while those in a negative mass Schwarzschild spacetime are inward directed. All photon surfaces in the zero mass Schwarzschild (aka Minkowski) spacetime are degenerate. One easily computes that $\sign(c)=\sign(m)$ for equipotential photon surfaces in the Schwarzschild spacetimes of mass $m$, with all above-mentioned properties of non-degenerate equipotential photon surfaces, including \eqref{normalLapse}, \eqref{scalarMeanCurv} readily extending to spherically symmetric photon surfaces in the zero mass Schwarzschild (aka Minkowski) spacetime, with $c=0$. Moreover, $c>-\frac{1}{2}$ in negative mass Schwarzschild spacetimes readily follows from spherical symmetry of equipotential photon surfaces.

\section{The degenerate equipotential photon surface or zero mass case}\label{sec:zero}
Let us now discuss the case of a degenerate equipotential photon surface. To this end, let $(M^{3},g,N)$ be an asymptotically flat static vacuum system with connected boundary $\partial M$, mass $m$, and decay rate $\tau\geq0$. Assume that $\partial M$ arises as a time-slice of a degenerate equipotential photon surface $P^{3}\subset\R\times M$ with at least one of the degenerate points lying on $\partial M$ (otherwise switch to a different time-slice of the photon surface). By the Hopf lemma and the maximum principle, the existence of one degenerate point on $\partial M$ implies that $N$ is constant on $M$, thus by our asymptotic assumption \eqref{eq:asyN}, $N\equiv1$ on $M$ as claimed and hence by definition $m=0$. Consequently, the entire equipotential photon surface with time-slice $\partial M$ is everywhere degenerate, giving $\nabla N\equiv0$ on $P$.

Next, by the static vacuum equation \eqref{eq:SVE1}, $N\equiv1$ gives $\Ric=0$ so that $(M^{3},g)$ is necessarily flat. It is well-known that each connected photon surface in the Minkowski spacetime arises as a subset of a timelike hyperplane or of a rotationally symmetric one-sheeted hyperboloid. The arguments to show this are completely local, hence the same applies to suitably small connected subsets of photon surfaces in our spacetime $(\R\times M^{3},-dt^{2}+g)$, in the sense of their intrinsic and extrinsic properties. As the photon surface we are investigating is connected with closed, connected time-slices, we can exclude the hyperplane case even locally by continuity of the (constant) mean curvature and conclude that $\partial M$ is necessarily intrinsically and extrinsically a round sphere of some radius $r_{0}>0$. This establishes the following lemma.

\begin{lemma}[Properties of degenerate equipotential photon surfaces]\label{lem:zero}
Let $(M^{3},g,N)$ be an asymptotically flat static vacuum system with connected boundary $\partial M$, mass $m$, and decay rate $\tau\geq0$. Assume that $\partial M$ arises as a time-slice of a degenerate equipotential photon surface $P^{3}\subset\R\times M$. Then $m=0$, $N\equiv1$, $(M,g)$ is flat, and $\partial M$ is isometric to the round sphere $\sphere_{r_{0}}$ for some radius $r_{0}$, is umbilic, and has constant mean curvature $H_{0}=\frac{2}{r_{0}}$ in $(M,g)$. In particular, the photon surface constraints \eqref{normalLapse}, \eqref{scalarMeanCurv} are satisfied on $\partial M$ for $c=0$. Moreover, for each point in $\partial M$, there is a one-sided tubular neighborhood isometric to a one-sided tubular neighborhood of $\sphere_{r_{0}}$ in $(\R^{3},\delta)$.
\end{lemma}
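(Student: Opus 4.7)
The plan is to follow the outline that is already essentially sketched in the two paragraphs preceding the lemma, organizing it into a proof. Since $\partial M$ is a time-slice of the equipotential photon surface $P$, the lapse $N$ is constant along $\partial M$, say $N|_{\partial M}\equiv N_{0}$. First I would combine this with \Cref{prop:signm}: if $N_{0}\neq 1$, that proposition would imply that $\partial M$ is a regular level set of $N$ with $\nabla N\neq 0$ everywhere on $\partial M$, contradicting the existence of a degenerate point on $\partial M$. Hence $N_{0}=1$, and then the maximum principle applied to the harmonic function $N-1$ on $M$ (which vanishes on $\partial M$ and tends to $0$ at infinity by \eqref{eq:asyN}) forces $N\equiv 1$ throughout $M$. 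The expansion \eqref{eq:asyN} then reads off $m=0$, and $\nabla N\equiv 0$ on all of $P$, so the entire photon surface is degenerate as claimed.

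Substituting $N\equiv 1$ into the static vacuum equation \eqref{eq:SVE1} gives $\Ric\equiv 0$ on $M$; in three dimensions, Ricci-flatness is equivalent to flatness of the full Riemann tensor. Consequently the ambient spacetime $(\R\times M,-dt^{2}+g)$ is locally isometric to Minkowski space, and since the photon surface condition (null total umbilicity) is purely local, any sufficiently small piece of $P$ is locally isometric to a piece of a photon surface in Minkowski. By the classification of photon surfaces in Minkowski recalled at the end of \Cref{sec:photo} (see \cite{CedGalSurface,CJV}), every such local piece sits inside either a timelike hyperplane or a rotationally symmetric timelike one-sheeted hyperboloid.

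To choose between these two models, I would invoke connectedness and closedness of $\partial M$. The mean curvature of $\partial M$ in $(M,g)$ is locally constant, taking the value $0$ on any portion modeled by a hyperplane time-slice and $2/r_{0}$ on any portion modeled by a sphere time-slice of radius $r_{0}$; by continuity the same local model applies throughout $\partial M$. Since closed connected surfaces in $\R^{3}$ cannot be minimal (by the maximum principle applied to the Euclidean coordinate functions restricted to $\partial M$), the hyperplane case is excluded, and $\partial M$ is intrinsically and extrinsically a round sphere $\sphere_{r_{0}}$, umbilic with constant mean curvature $H_{0}=2/r_{0}$. The photon surface constraints \eqref{normalLapse} and \eqref{scalarMeanCurv} with $c=0$ then hold by direct inspection: $\nu(N)\equiv 0$ since $\nabla N\equiv 0$, and $\Scal_{\Sigma}=2/r_{0}^{2}=\tfrac{1}{2}H_{0}^{2}$. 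Finally, the one-sided tubular neighborhood statement follows from the fundamental theorem of hypersurface theory, since a one-sided tubular neighborhood of an umbilic round sphere in a flat $3$-manifold is uniquely determined up to isometry by its first and second fundamental forms, which match those of $\sphere_{r_{0}}\subset(\R^{3},\delta)$. The main conceptual obstacle is making sure the Minkowski photon surface classification really transfers to our flat-but-possibly-topologically-nontrivial setting; this is guaranteed by the purely local nature of the photon surface condition and the fact that we only need local isometry of tubular neighborhoods, not a global embedding of $(M,g)$ into Euclidean space.
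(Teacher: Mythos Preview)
Your proof is correct and follows essentially the same approach as the paper, which presents the argument in the two paragraphs immediately preceding the lemma. You are slightly more explicit in verifying the photon surface constraints with $c=0$ and the tubular neighborhood claim via the fundamental theorem of hypersurface theory, and you route the first step through \Cref{prop:signm} rather than invoking the Hopf lemma directly, but the logical structure is identical.
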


Next, we will give our first proof of the zero mass case of \Cref{MainThm} worded as \Cref{thm:zero} below, based on the rigidity assertion of the Riemannian positive mass theorem \cite{SchoenYau,Witten} which applies to geodesically complete asymptotically flat Riemannian $3$-manifolds with decay rate $\tau>\tfrac{1}{2}$. In some sense, this can be considered as an analogous proof to the uniqueness proofs for non-degenerate photon spheres given in \cite{carlagregpmt} and for outward directed equipotential photon surfaces given in \cite{CedGalSurface}, except that we restrict to connected $\partial M$ and that we exploit the vanishing of $m$ to bootstrap our decay from $\tau>0$ to $\widehat{\tau}>\tfrac{1}{2}$ by switching into harmonic coordinates (which is weaker than the asymptotically Schwarzschildean decay assumed in \cite{carlagregpmt,CedGalSurface}). We will comment on the case $\tau=0$ in \Cref{rmk:AFM}.

\begin{theorem}[Degenerate case]\label{thm:zero}
Let $(M^{3},g,N)$ be an asymptotically flat static vacuum system with connected boundary $\partial M$, mass $m$, and decay rate $\tau>0$. Assume that $\partial M$ arises as a time-slice of a degenerate equipotential photon surface $P^{3}\subset\R\times M$. Then $(M,g)$ is isometric to a suitable piece of $(\R^{3},\delta)$ and $N\equiv1$ on $M$, as well as $m=0$. Moreover, the entire equipotential photon surface with time-slice $\partial M$ is everywhere degenerate.
\end{theorem}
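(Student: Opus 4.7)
The plan is to reduce the theorem to the rigidity case of the Riemannian positive mass theorem. By \Cref{lem:zero}, it is already established that $m=0$, $N\equiv 1$ on $M$, $(M,g)$ is flat, $\partial M$ is isometric to a round sphere $\sphere_{r_{0}}$, and a one-sided collar of $\partial M$ in $M$ is isometric to a one-sided collar of $\sphere_{r_{0}}$ inside $(\R^{3},\delta)$. What remains is to upgrade these pointwise/local pieces of information to the global statement that $(M,g)$ is isometric to the exterior of a closed round ball in $(\R^{3},\delta)$.

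First, I would use the isometric collar around $\partial M$ provided by \Cref{lem:zero} to smoothly glue a closed Euclidean ball $\overline{B}_{r_{0}}\subset(\R^{3},\delta)$ onto $M$ along $\partial M$. Because the tubular neighborhoods on the two sides agree isometrically (and because inside a Euclidean ball, $\sphere_{r_{0}}$ carries the analogous one-sided tubular neighborhood), the resulting Riemannian manifold $\widetilde{M}\definedas M\cup_{\partial M}\overline{B}_{r_{0}}$ is a smooth, boundaryless, asymptotically flat Riemannian $3$-manifold of the same decay rate $\tau>\tfrac{1}{2}$, with nonnegative (in fact vanishing) scalar curvature everywhere, and with ADM mass still equal to $m=0$, since gluing a compact region does not affect the asymptotic structure at infinity.

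Next, I would invoke the rigidity statement of the Riemannian positive mass theorem~\cite{SchoenYau,Witten}: any smooth, asymptotically flat Riemannian $3$-manifold of decay rate $\tau>\tfrac{1}{2}$ with nonnegative scalar curvature and vanishing ADM mass is isometric to $(\R^{3},\delta)$. This yields a global isometry $\Phi\colon\widetilde{M}\to(\R^{3},\delta)$. Under $\Phi$, the glued-in Euclidean ball is mapped onto a closed subset of $\R^{3}$ that is intrinsically isometric to a closed round ball of radius $r_{0}$; such a subset is necessarily itself a closed round ball $\overline{B}_{r_{0}}(x_{0})\subset\R^{3}$ for some center $x_{0}$. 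Restricting $\Phi$ to $M\subset\widetilde{M}$ then identifies $(M,g)$ isometrically with $\R^{3}\setminus B_{r_{0}}(x_{0})$, i.e., with a suitable piece of $(\R^{3},\delta)$, and $N\equiv 1$ transfers to the constant lapse $1$ under this isometry. The remaining assertions of the theorem are already contained in \Cref{lem:zero}.

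The main technical subtlety is ensuring that the gluing across $\partial M$ is genuinely $C^{\infty}$, so that the positive mass theorem is applicable in its classical smooth formulation; this is exactly what the tubular neighborhood assertion of \Cref{lem:zero} provides, and it is the only nontrivial ingredient beyond \Cref{lem:zero} itself. The decay assumption $\tau>\tfrac{1}{2}$ enters solely to bring the proof within the scope of the standard statement of the positive mass theorem.
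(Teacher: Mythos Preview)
Your argument is correct and follows essentially the same route as the paper: glue a round Euclidean ball along $\partial M$ using the isometric collar from \Cref{lem:zero}, then apply the rigidity case of the Riemannian positive mass theorem to the resulting complete manifold. The one step the paper treats more carefully is the claim that the ADM mass of the glued manifold vanishes: the parameter $m$ in this paper is defined through the asymptotic expansion \eqref{eq:asyN} of the lapse $N$, not through the asymptotics of $g$, so ``ADM mass $=m=0$'' is not automatic from the definitions; the paper closes this by invoking Herzlich's result that Ricci-flat asymptotically flat manifolds have vanishing ADM mass (or, alternatively, by the known coincidence of the lapse mass with the ADM mass for static vacuum systems when $\tau>\tfrac{1}{2}$), and you should insert one such sentence.
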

\begin{proof}
We have already established $N\equiv1$ on $M$ and $m=0$ as well as $\Ric=0$  on $M$  in \Cref{lem:zero}. Before we move on, we note that one can bootstrap the decay assumption $\tau>0$ to $\widehat{\tau}>\frac{1}{2}$ by switching to harmonic coordinates and exploiting $\Ric=0$: First, we note that our assumption \eqref{eq:asyg} implies that $(M,g)$ is indeed asymptotically flat with respect to the same asymptotic coordinates $(x^i)$ in the sense of weighted Sobolev spaces for any decay rate $0<\tau^*<\tau$ and all exponents $3< q < \infty$, that is, 
\begin{equation}\label{a priori decay}
g_{ij} - \delta_{ij} \in W^{2, q}_{-\tau^*}(\R^3\setminus \overline{B})
\end{equation}
with respect to $(x^i)$. From Bartnik's result  \cite[Theorem 3.1]{Bartnik}, we thus know that there exist harmonic coordinates $(y^i)$ (on a possibly smaller asymptotic end of $M$) such that \eqref{a priori decay} also holds with respect to $(y^i)$ (but possibly outside a different ball $\widetilde{B}$). We can now apply Bartnik's result \cite[Proposition 3.3]{Bartnik} to deduce that, for every $0<\widehat{\tau}<1$ and every $3<q<\infty$, we indeed have 
\begin{equation}
g_{ij} - \delta_{ij} \in W^{2, q}_{-\widehat{\tau}}(\R^3\setminus \overline{\widetilde{B}})
\end{equation}
with respect to $(y^i)$. In particular, the asymptotic assumptions of the positive mass theorem hold with respect to $(y^i)$.

To see that $(M^{3},g)$ is globally isometric to a piece of $(\R^{3},\delta)$ by virtue of the rigidity case of the Riemannian positive mass theorem, glue in a round ball of radius $r_{0}$ into $\partial M$ to smoothly extend $(M,g)$ to a smooth\footnote{This follows in Riemann normal coordinates from flatness near $\d M$.}, geodesically complete, asymptotically flat manifold $(\overline{M}^{3},\overline{g})$ with decay rate $\widehat{\tau}>\frac{1}{2}$ by \Cref{lem:zero} and the assertions on the local geometry of time-slices of degenerate equipotential photon surfaces above \Cref{lem:zero}. By the work of Herzlich~\cite[Theorem 2.3]{Herzlich}, we know that the ADM mass of $(\overline{M},\overline{g})$ vanishes by (Ricci) flatness of $(M,g)$. Hence by the rigidity case of the Riemannian positive mass theorem, $(\overline{M},\overline{g})$ is globally isometric to Euclidean space which shows that $(M^{3},g)$ is isometric to a suitable piece of $(\R^{3},\delta)$ as claimed. 
\end{proof}

Moreover, we can see that the zero mass case $m=0$ and the degenerate equipotential photon surface case are actually identical, as stated in the following remark.

\begin{rmk}[Zero mass versus degenerate]\label{rem:zerodegen}
Let $(M^{3},g,N)$ be an asymptotically flat static vacuum system with connected boundary $\partial M$, mass $m$, and decay rate $\tau\geq0$. Assume that $\partial M$ arises as a time-slice of an equipotential photon surface $P^{3}\subset\R\times M$. Then $m=0$ if and only if $P$ is degenerate. 
\end{rmk}
\begin{proof}
If $m=0$, the Smarr formula \eqref{eq:Smarr} implies that $\int_{\partial M}\nu(N)\,dA=0$. By the Hopf lemma, either $N\equiv\text{const}$ or $\nu(N)$ has a sign on $\partial M$ with the latter contradicting this integral inequality; thus $N\equiv1$ on $M$ by our asymptotic assumption \eqref{eq:asyN}. Consequently, $\nabla N\equiv0$ on $M$ and $P$ is degenerate. If, on the other hand, $P$ is degenerate then by \Cref{lem:zero} we know that $m=0$.
\end{proof}

\begin{rmk}[Alternative proof with weaker assumptions on the decay rate]\label{rmk:AFM}
Alternatively, one can argue via the Willmore-type inequality proved by Agostiniani--Fogagnolo--Mazzieri~\cite[Theorem 1.1]{Ago_Fog_Maz-1} also for $\tau=0$. To see this, let us consider the geodesically complete Riemannian manifold $(\overline{M}^{3},\overline{g})$ constructed in the proof of \Cref{thm:zero} and recall that it is smooth and Ricci flat, hence satisfies $\overline{\Ric}\geq0$. Also, by the asymptotic flatness condition and via work of Borghini and Fogagnolo~\cite{BF}, the asymptotic volume ratio satisfies $\operatorname{AVR}(\overline{g})=\operatorname{AVR}(g)=1$. Hence the Willmore-type inequality holds for the domain $\Omega$ coinciding with the flat round ball we glued in in the proof of \Cref{thm:zero}, and indeed with equality by \Cref{lem:zero} as $\partial\Omega=\partial M$. As the induced metric on $\partial\Omega=\partial M$ is round, we conclude from the rigidity assertion of \cite[Theorem 1.1]{Ago_Fog_Maz-1} that $(M,g)$ is isometric to the exterior region of a round ball of radius $r_{0}$ in $(\R^{3},\delta)$.
\end{rmk}

Before we move on, let us prepare the stage for the three other proofs we will give of \Cref{thm:zero}. In the zero mass (or degenerate equipotential photon surface) case (see \Cref{rem:zerodegen}), we know from the above that $N\equiv1$ on $M$ and hence a level set approach via $N$ as used by the strategies based on the approaches by Israel and Agostiniani--Mazzieri and also indirectly as well as in the rigidity analysis of the strategy based on the approach by Robinson does not work. However, a level set approach works if one instead uses the \emph{electrostatic potential}, that is, the unique smooth function $u\colon M\to\R$ satisfying
\begin{align}\label{eq:harmonic}
\Delta u=0
\end{align}
on $M$, the boundary condition
\begin{align}\label{eq:bdryu}
u\vert_{\partial M}=1,
\end{align}
and the asymptotic condition $u\to0$ as $\vert x\vert\to\infty$ in the asymptotic end of $(M,g,N)$. This unique solution $u$ exists by standard elliptic theory (see e.g.\ \cite[Chapters 2, 3]{Gilbarg.2001}), noting that the Sobolev spaces with respect to $(M,g)$ are "asymptotically equivalent" to those on $(\R^3,\delta)$. If $(M,g)$ is isometric to $(\R^{3}\setminus\overline{\Omega},\delta)$ for some bounded domain $\Omega$ with smooth boundary, it is well-known that the unique solution decays as
\begin{align}\label{eq:asyu}
u=\frac{R_{0}}{\vert y\vert}+o_{2}(\vert y\vert^{-1})
\end{align}
for some constant $R_{0}\in\R$ as $\vert y\vert\to\infty$ (see e.g.\ \cite{Kellogg.1967}), with respect to the Cartesian coordinates $(y^{i})$ on $\R^{3}$ pulled back to $M$. In our setting, we only know that $(M,g)$ is asymptotically flat with decay rate $\tau\geq0$ and flat, but not (to the best knowledge of the authors) whether there exist Cartesian\footnote{Here, by \emph{Cartesian coordinates}, we mean coordinates with respect to which the flat metric appears as $\delta_{ij}$.} coordinates $(y^{i})$ in some exterior region $M\setminus K$ diffeomorphic to the exterior region of a closed ball in Euclidean space, for some $K\subset M$ compact. If such coordinates exist and relate to the originally given asymptotically flat coordinates $(x^{i})$ by a transformation asymptotically satisfying $y^{i}=x^{i}+o_{3}(\vert x\vert^{1-\tau})$ as $\vert x\vert\to\infty$, we have \eqref{eq:asyu} also with respect to $(x^{i})$. We will assume for simplicity throughout this paper that \eqref{eq:asyu} applies with respect to the given asymptotically flat coordinates $(x^{i})$. One could possibly circumvent this problem by arguments along the line of those given in \cite{Ago_Fog_Maz-1,BF} or exploiting Bartnik style results similar to those used in the proof of \Cref{thm:zero} but this would lead too far away from the focus of this work.

Under the assumption that \eqref{eq:asyu} holds with respect to $(x^{i})$ and arguing as for the Smarr formula \eqref{eq:Smarr}, one finds that
\begin{align}\label{eq:Smarrb}
\int_{\Sigma^{2}_{s}}\nu_{s}(u)\,dA=-4\pi R_{0}
\end{align}
on regular level set $\Sigma^{2}_{s}\definedas \{u=s\}$ of $u$, where $dA_{s}$ denotes the area measure induced on $\Sigma_{s}$ by $g$. Moreover, by the maximum principle, we know that $0<u<1$ on $M$ and thus $b\geq0$. By the Hopf lemma, we know that $\nu(u)<0$ on $\partial M$ so that in particular $R_0>0$ by \eqref{eq:Smarrb} as $\partial M$ is a regular level set of $u$.

\section{Deducing \Cref{coro:sphere} from \Cref{MainThm}}\label{sec:coro}
Before we move on to proving \Cref{MainThm}, let us demonstrate how \Cref{MainThm} implies \Cref{coro:sphere}. First, applying \Cref{MainThm} to a non-degenerate photon sphere gives the desired claim as non-degenerate photon spheres are necessarily outward directed by \cite[Lemma 2.6, Remark 2.7]{carlagregpmt} and thus $m>0$. Now suppose towards a contradiction that there exists a degenerate photon sphere. By \Cref{lem:zero} and the considerations above it, the photon sphere is locally embedded in the Minkowski spacetime -- which we know not to host any photon spheres (even locally) other than pieces of flat timelike hyperplanes. On the other hand, we know that the photon sphere is locally isometric to a straight cylinder over a round sphere, hence cannot be part of a flat timelike hyperplane. This excludes the possibility of a degenerate photon sphere and finishes the proof that \Cref{MainThm} implies \Cref{coro:sphere}. We have also proved the following previously unknown result.

\begin{coroll}[No connected degenerate photon spheres]
Let $(M^{3},g)$ be a Riemannian $3$-manifold with connected inner boundary $\partial M$. Suppose that $M$ carries a positive lapse function $N\colon M\to\R^{+}$ such that $(M,g,N)$ solves the static vacuum equations and is asymptotically flat of mass $m\in\R$ and decay rate $\tau\geq0$. Suppose further that $\R\times\partial M$ is  a photon sphere in $
(\R\times M,-N^{2}dt^{2}+g)$. Then this photon sphere is necessarily non-degenerate and $m\neq0$.
\end{coroll}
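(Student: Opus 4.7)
The plan is to argue by contradiction. Suppose a connected degenerate photon sphere of the form $\R\times\partial M$ exists in $(\R\times M,-N^{2}dt^{2}+g)$. The strategy is to exploit \Cref{lem:zero} to extract very rigid local information on both the ambient spacetime and the photon sphere, and then to confront this information with the (well-known) local classification of photon surfaces in Minkowski spacetime stated in \Cref{sec:zero}, namely that any such surface is locally a piece of a timelike hyperplane or of a rotationally symmetric one-sheeted hyperboloid.

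First I would invoke \Cref{lem:zero} to conclude $m=0$, $N\equiv1$ on $M$, that $(M,g)$ is flat, and that $\partial M$ is isometric to a round sphere $\sphere_{r_{0}}$ of some radius $r_{0}>0$ with constant mean curvature $H_{0}=2/r_{0}\neq0$. Moreover, each point of $\partial M$ has a one-sided tubular neighborhood in $M$ isometric to a one-sided tubular neighborhood of $\sphere_{r_{0}}$ in $(\R^{3},\delta)$. Since $N\equiv1$, the spacetime metric is $-dt^{2}+g$ near $\R\times\partial M$, so by the tubular neighborhood statement a two-sided tube of the photon sphere is isometric to a two-sided tube of the straight Lorentzian cylinder $\R\times\sphere_{r_{0}}$ sitting in the Minkowski spacetime.

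Next I would exclude both alternatives in the local classification. A piece of a timelike hyperplane in Minkowski is intrinsically flat and carries vanishing second fundamental form, hence its time-slices have vanishing mean curvature in the orthogonal spacelike hyperplane. This is incompatible with our time-slice having constant mean curvature $H_{0}=2/r_{0}>0$ in $(M,g)$. A rotationally symmetric one-sheeted hyperboloid, on the other hand, is intrinsically of constant non-zero sectional curvature and in particular does not contain any isometrically embedded complete timelike Minkowski line; this is incompatible with the straight product structure $\R\times\sphere_{r_{0}}$ of our photon sphere. This yields the desired contradiction and shows that the photon sphere is non-degenerate. Finally, \cite[Lemma 2.6, Remark 2.7]{carlagregpmt} guarantees that non-degenerate photon spheres are outward directed, so $\nu(N)>0$ on $\partial M$, and then \Cref{prop:signm} forces $m>0$; in particular $m\neq0$, as claimed.

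The main obstacle I anticipate is making the cylinder-versus-hyperboloid incompatibility precise without invoking heavy machinery; but since the information extracted from \Cref{lem:zero} is both intrinsic (a round sphere cross a timelike line) and extrinsic (constant non-zero mean curvature of the slice), either a direct comparison of induced metrics or of the umbilicity factor suffices. The outward directedness step is entirely standard once non-degeneracy is in place.
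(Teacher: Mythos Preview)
Your approach is essentially the paper's: assume degeneracy, use \Cref{lem:zero} to reduce to a local Minkowski picture, invoke the local classification of photon surfaces in Minkowski, and reach a contradiction. Two points deserve correction.

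First, a minor slip: since $\partial M$ is the boundary of $M$, you only obtain a \emph{one-sided} tubular neighborhood of $\R\times\partial M$ in the spacetime, not a two-sided one. This is harmless because the classification argument only uses the intrinsic and extrinsic data of the hypersurface (total umbilicity and the induced metric), both determined already by a one-sided neighborhood; the paper makes exactly this point in \Cref{sec:zero}.

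Second, your exclusion of the hyperboloid alternative is not correct as written. De Sitter space (the induced metric on a one-sheeted hyperboloid in Minkowski) \emph{does} contain complete timelike geodesics isometric to $(\R,-d\tau^{2})$, so the claim that it ``does not contain any isometrically embedded complete timelike Minkowski line'' fails. The clean way to exclude the hyperboloid---and the one the paper uses---is simply to observe that a hyperboloid is not of the form $\R\times\Sigma$ with time-independent slices, hence is never a photon \emph{sphere} in the sense of \Cref{sec:photo}. If you prefer an intrinsic argument, compare sectional curvatures: the product cylinder $(\R\times\mathbb{S}^{2}_{r_{0}},-dt^{2}+r_{0}^{2}g_{\mathbb{S}^{2}})$ has vanishing sectional curvature in every plane containing $\partial_{t}$, whereas de Sitter space has constant non-zero sectional curvature, so the two cannot be locally isometric. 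Either route closes the gap. Your final step deducing $m>0$ from non-degeneracy via outward directedness and \Cref{prop:signm} is correct and in fact appears in the paper just above the corollary.
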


\begin{rmk}[No degenerate photon spheres whatsoever]
In fact, arguing as above via the Hopf lemma and local arguments, one can see that a static vacuum asymptotically flat static spacetime with inner boundary consisting of (individually) degenerate photon spheres cannot exist.
\end{rmk}

\section{Photon Surface Uniqueness  \`a la Israel} \label{sec:israel}
\subsection{The positive and negative mass cases  \`a la Israel}\label{subsec:nonzero}
In \cite{Ced}, Cederbaum adapted Israel's proof of static vacuum black hole uniqueness~\cite{israel1967event} to show that the Schwarzschild spacetimes of positive mass are the only static vacuum asymptotically flat spacetimes that possess a connected non-degenerate photon sphere inner boundary, assuming asymptotic spherical symmetry and the technical condition $\nabla N\neq0$ on $M$ (as does Israel). We will now generalize this procedure to the non-degenerate equipotential photon surface case and to our much weaker decay assumptions \eqref{eq:asyg}, \eqref{eq:asyN} for decay rate $\tau\geq0$, see \Cref{thm:Israel} below. As we will see, this will give a proof of \Cref{MainThm} in the positive and negative mass cases under the additional assumption $\nabla N\neq0$ on $M$. Our proof reduces to that given in \cite{Ced} when $c=1$ and thus in particular in the photon sphere case.

\begin{theorem}[Non-degenerate case \`a la Israel]\label{thm:Israel}
Let $(M^{3},g,N)$ be an asymptotically flat static vacuum system of mass $m\in\R$ and decay rate $\tau\geq0$ with connected inner boundary $\partial M$ and assume that $\nabla N\neq0$ on $M$. Suppose further that $\partial M$ arises as a time-slice of a non-degenerate equipotential photon surface. Then $(M,g)$ is isometric to the exterior region of a round annulus in the Schwarzschild system $(M_{m}^{3},g_{m})$ of mass $m$, with $N$ corresponding to the according restriction of the Schwarzschild lapse $N_{m}$ under this isometry. Furthermore $m\neq0$, $c > -\tfrac{1}{2}$ and $N_{0}>1$ if $m>0$ while $0<N_{0}<1$ if $m<0$, where $N_{0}\definedas N\vert_{\partial M}$.
\end{theorem}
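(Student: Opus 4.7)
The plan is to adapt the Israel-Cederbaum strategy \cite{israel1967event,Ced} from the non-degenerate photon sphere setting ($c=1$) to arbitrary $c\neq0$ in the non-degenerate equipotential photon surface setting, and to make the argument work under the weak decay rate $\tau\geq0$.

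\emph{Preliminary reductions.} First I would rule out $m=0$: if $m=0$, then \Cref{prop:signm} would force $N_{0}=1$, and then the maximum principle applied to \eqref{eq:SVE2} together with the asymptotic condition $N\to1$ would force $N\equiv1$ on $M$, contradicting the hypothesis $\nabla N\neq0$. Hence $m\neq0$, and \Cref{prop:signm} then simultaneously yields the asserted sign of $1-N_{0}$ and the identification $\nu=\lambda\,\nabla N/\abs{\nabla N}$ of the outward unit normal to $\partial M=\Sigma^{2}_{N_{0}}$. Since $\nabla N$ is nowhere zero on $M$, $N$ foliates $M$ smoothly by the regular level sets $\Sigma^{2}_{s}\definedas\{N=s\}$, with $s$ ranging between $N_{0}$ and $1$; this is the foliation along which the rigidity argument proceeds.

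\emph{Israel-type divergence identity.} Using the static vacuum equations \eqref{eq:SVE1}, \eqref{eq:SVE2}, the vanishing of $\Scal$ (see \eqref{scal0}), and the Gauss and Codazzi equations of $\Sigma^{2}_{s}\hookrightarrow M$, I would derive a divergence identity on $M$ of the schematic form
\[
\diver(X)\;=\;A\,\abs{\mathring{h}}^{2}+B\,\abs{\mathring{\nabla}^{\Sigma}\abs{\nabla N}}^{2}+(\text{non-negative curvature term}),
\]
where $X$ is a vector field on $M$ built explicitly from $N$, $\abs{\nabla N}$, and the mean curvature $H(s)$ of $\Sigma^{2}_{s}$, $\mathring{h}$ denotes the trace-free second fundamental form of $\Sigma^{2}_{s}$, $\mathring{\nabla}^{\Sigma}$ denotes the trace-free tangential Hessian on $\Sigma^{2}_{s}$, and $A,B>0$ are positive weights depending on $N$ and $\abs{\nabla N}$. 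This is essentially the identity derived in \cite[Sec.~4]{Ced} for the photon sphere case, with the weights adapted to arbitrary $c$; its derivation is a direct tensorial computation.

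\emph{Boundary terms and rigidity.} Integrating the identity over $M$ and applying the divergence theorem produces an inner boundary contribution over $\partial M$ and an asymptotic contribution at infinity. On $\partial M$, umbilicity of the time-slice ($\mathring{h}=0$) together with constancy of $N$ and $\abs{\nabla N}$ thereon, and the photon surface constraints \eqref{normalLapse}, \eqref{scalarMeanCurv}, allow me to evaluate the inner integral as an explicit multiple of $N_{0}H_{0}^{2}\abs{\partial M}$, which via \eqref{eq: Gauss-Bonnet for Sigma} and the Smarr formula \eqref{eq:Smarr} at $s=N_{0}$ reduces further to an explicit multiple of $m$. At infinity, the asymptotic expansions \eqref{eq:asyg}, \eqref{eq:asyN} and their consequences for $\abs{\nabla N}$, $H$, and $\Scal_{\Sigma}$ on large coordinate spheres allow me to evaluate the asymptotic term as the same multiple of $m$, cancelling the inner contribution. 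This forces the bulk integrand to vanish, so that every $\Sigma^{2}_{s}$ is umbilic with $\abs{\nabla N}$ constant thereon; the Gauss--Bonnet identity \eqref{eq: Gauss-Bonnet for Sigma} applied to $\partial M$ then forces $c>-\tfrac{1}{2}$ and $\Sigma^{2}_{s}\cong\sphere$. Consequently, the metric takes the warped-product form $g=\abs{\nabla N}^{-2}\,dN^{2}+r(N)^{2}g_{\sphere}$, and the static vacuum equations \eqref{eq:SVE1}, \eqref{eq:SVE2} reduce to ODEs in $N$ that are uniquely solved (up to isometry) by the Schwarzschild system of the mass fixed by \eqref{eq:asyN}.

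\emph{Main obstacle.} The principal technical difficulty is the derivation of the Israel-type divergence identity with weights chosen so that the bulk term is non-negative \emph{and} the boundary contributions at $\partial M$ and at infinity cancel exactly, independently of $c\neq0$ and of the sign of $m$; in particular, handling the negative mass case $m<0$ (where $\lambda=-1$ and $N_{0}$ lies on the opposite side of $1$) requires care in tracking signs throughout. A secondary subtlety is evaluating the asymptotic term under the weak decay rate $\tau\geq0$, which requires verifying that the leading-order behaviour of $\abs{\nabla N}$, $H$, and $\Scal_{\Sigma}$ on large coordinate spheres is universal and governed only by the mass $m$, despite the potentially slow decay of $g-\delta$.
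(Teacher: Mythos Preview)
Your overall strategy (foliate by level sets of $N$, derive Israel-type identities from the static vacuum equations, integrate, and compare boundary terms) is the right one, but the mechanism you describe in the ``Boundary terms and rigidity'' paragraph does not match how the Israel--Cederbaum argument actually closes, and this is where the gap lies.

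You posit a \emph{single} divergence identity $\diver(X)=(\text{non-negative bulk})$ whose boundary contributions at $\partial M$ and at infinity ``cancel exactly,'' directly forcing the bulk to vanish. The paper does not (and, in the Israel framework, cannot) do this. Instead, from the static vacuum equations one derives \emph{two} differential inequalities on each level set $\Sigma_{N}$, of the form
\[
\partial_{N}\!\left(\frac{\lambda H\sqrt{\mathfrak{s}}}{\sqrt{\rho}\,N}\right)\leq -\frac{2}{N}\,\Delta_{\sigma}\sqrt{\rho}\,\sqrt{\mathfrak{s}},\qquad
\partial_{N}\!\left(\frac{\lambda}{\rho}\Big[HN+\frac{4\lambda}{\rho}\Big]\sqrt{\mathfrak{s}}\right)\leq -N\big(\Scal_{\Sigma}+\Delta_{\sigma}\log\rho\big)\sqrt{\mathfrak{s}},
\]
with $\rho=\abs{\nabla N}^{-1}$. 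Upon integration from $N_{0}$ to $1$ and over $\Sigma_{N}$, each yields a genuine \emph{inequality} relating $N_{0},H_{0},r_{0},m$; neither is an equality, and neither cancels by itself. The second one requires Gau\ss--Bonnet on \emph{every} level set $\Sigma_{N}$, which is why spherical topology of all $\Sigma_{N}$ must be established \emph{before} the integration (from the asymptotics plus the global foliation), not afterwards as you suggest; this is also why the method is specific to $3$ dimensions.

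Rigidity then comes not from cancellation but from \emph{algebra}: the two integrated inequalities are combined with the identity $\tfrac{4mH_{0}}{N_{0}}+H_{0}^{2}r_{0}^{2}=4$ (obtained from \eqref{normalLapse}, \eqref{scalarMeanCurv}, Gau\ss--Bonnet on $\partial M$, and the Smarr formula) to squeeze out $N_{0}^{2}=1-\tfrac{2m}{r_{0}}$, forcing equality in both inequalities simultaneously. The photon surface constraints thus enter not merely to evaluate a boundary term but to close a chain of inequalities. Your ``main obstacle'' paragraph correctly senses that constructing a single vector field with exact cancellation is the crux --- but that is essentially Robinson's approach, not Israel's, and is treated separately in the paper; within the Israel scheme the two-inequality-plus-algebra route is the actual argument.
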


\begin{proof}
First, observe that the assumption $\nabla N\neq0$ on $M$ guarantees that $N$ regularly foliates $M$. In particular, we know that $N_{0}\neq1$ by the maximum principle. Hence we can use $N$ as a global coordinate function on $M$ and construct a local coordinate system by flowing local coordinates $(y^{I})$, $I=1,2$, on $\partial M$ along the nowhere vanishing vector field $\frac{\nabla N}{\vert \nabla N\vert^{2}}$ when $N_{0}>1$ and along $-\frac{\nabla N}{\vert \nabla N\vert^{2}}$ when $0<N_{0}<1$, recalling that $N_{0}>0$ by assumption and thus $N>0$ by the maximum principle. In these coordinates, the metric $g$ takes the form
\begin{align*}
g=\rho^{2}dN^{2}+\sigma,
\end{align*}
where $\sigma=\sigma(N)$ denotes the Riemannian metric induced on the regular level set $\Sigma_{N}$ of $N$ and $\rho \definedas  \frac{1}{|\nabla N|}$. Next, from \Cref{rem:zerodegen}, we know that $m\neq0$ by the non-degeneracy assumption on $\partial M$. Thus, the asymptotic assumption \eqref{eq:asyN} guarantees that large level sets $\Sigma_{N}$ of $N$ are topological spheres. As $N$ regularly foliates $M$, all level sets $\Sigma_{N}$ have the same topology, hence they are all topological spheres, including $\partial M$. Let 
\begin{align}\label{def:arearadius}
r(N) \definedas  \sqrt{\frac{|\Sigma_{N}|}{|\mathbb{S}^2|_{ g_{\sphere}}}}
\end{align}
denote the \emph{area radius} of $\Sigma_{N}$. Set $r_{0}\definedas r(N_{0})$. Using this, the properties of equipotential photon surfaces and in particular the photon surface constraint \eqref{normalLapse}, and the Smarr formula \eqref{eq:Smarr}, we find
\begin{equation}\label{NewEqc}
\frac{cN_0 H_{0}}{2} =  \frac{m}{r_0^2}.
\end{equation}
From the photon surface constraint \eqref{scalarMeanCurv} and the Gau{\ss}--Bonnet theorem combined with the fact that $\partial M$ is a topological sphere, we get
\begin{equation*}
  \left(c + \tfrac{1}{2}\right)  H_{0}^2 r_0^2 = 2
\end{equation*}
which, replacing $c$ via \eqref{NewEqc}, can be reformulated as
\begin{equation}\label{eq: israel_mc2}
\frac{4mH_{0}}{N_{0}} +  H^2_{0} r_0^2 = 4.
\end{equation}

In particular, as $H_{0}>0$, it follows that $c > -\tfrac{1}{2}$ as claimed. Recall from \eqref{eq:signs} that $\lambda=\sign(m)=\sign(1-N_{0})=\sign(\nu(N)\vert_{\partial N})$. As $m\neq0$, this proves the claims on the signs of $m$ and the size of $N_{0}$. Using the same coordinates and only local arguments,  it is shown in \cite{Ced} that the static vacuum equations \eqref{eq:SVE1}, \eqref{eq:SVE2} imply
\begin{align}\label{Is1}
    0 & = \frac{\lambda}{\rho} \left( \frac{H}{N} - H,_N - \frac{\lambda \rho}{2} H^2 \right) - \frac{2}{\sqrt{\rho}}\,\Delta_{\sigma} \sqrt{\rho} - \frac{1}{2}\left[ \frac{|\nabla_{\!\sigma} \rho|_{\sigma}^2}{\rho^2} + 2|\mathring{h}|_{\sigma}^2 \right], \\ \label{Is2}
    0 & = \frac{\lambda}{\rho} \left( \frac{3H}{N} - H,_N \right) - \Scal_{\Sigma} - \Delta_{\sigma} \log \rho -\left[ \frac{|\nabla_{\!\sigma} \rho|_{\sigma}^2}{\rho^2} + 2|\mathring{h}|_{\sigma}^2 \right], \\ \label{Is3}
    0 & = \rho,_N-\lambda \rho^2 H,
\end{align}
on each level set $\Sigma_{N}$, where $\Delta_{\sigma}$, $\nabla_{\!\sigma}$, and $\vert\cdot\vert_{\sigma}$ denote the Laplace--Beltrami operator, covariant gradient, and pointwise tensor norm with respect to $\sigma$, respectively, and $\mathring{h}$ denotes the trace-free part of the second fundamental form of $\Sigma_{N}$. Using \eqref{Is3} and the non-negativity of the terms in the square brackets, \eqref{Is1} and \eqref{Is2} give
\begin{align}\label{IsIn00}
\partial_N \left( \frac{\lambda H \sqrt{\mathfrak{s}}}{\sqrt{\rho}N} \right) & \leq - \frac{2}{N} \Delta_{\sigma} \sqrt{\rho} \,\sqrt{\mathfrak{s}},
\\ \label{IsIn01}
\partial_N \left( \frac{\lambda}{\rho}\left[ H N + \frac{4\lambda}{\rho} \right] \sqrt{\mathfrak{s}}\right) & \leq -N\left(\Scal_{\Sigma} +\Delta_{\sigma} \log \rho\right)\sqrt{\mathfrak{s}} ,
\end{align}
on each level set $\Sigma_{N}$, where $\mathfrak{s}\definedas\operatorname{det}(\sigma_{IJ})$ and $\partial_{N}\sqrt{\mathfrak{s}}=\lambda\rho H\sqrt{\mathfrak{s}}$. In these inequalities\footnote{Note that \cite[Equation 3.27]{Ced} contains a typo, namely it misses the factor $\lambda$ on the left hand side appearing in \eqref{IsIn01}.}, equality holds if and only if the terms in the square brackets of \eqref{Is1}, \eqref{Is2} both vanish. Integrating \eqref{IsIn00} from $N_{0}$ to $1$ (recalling that $N_{0}\neq1$ by the above) and subsequently over the domain of definition $U$ of the coordinates $(y^{I})$, one obtains
\begin{align*}
\lim_{t\to1}\int_{\Sigma_{t}}\frac{\lambda H}{\sqrt{\rho}N}\,dA-\int_{\partial M} \frac{\lambda H}{\sqrt{\rho}N}\,dA&\leq-2\int_{N_{0}}^{1}\!\frac{1}{N}\! \int_{\Sigma_{N}}\Delta_{\sigma} \sqrt{\rho} \,dA\, dN=0,\\
\lim_{t\to1}\int_{\Sigma_{t}} \!\frac{\lambda}{\rho}\left[ H N + \frac{4\lambda}{\rho} \right]dA-\!\int_{\partial M}\! \frac{\lambda}{\rho}\left[ H N + \frac{4\lambda}{\rho} \right]dA&\leq-\int_{N_{0}}^{1}\!\!N\!\!\int_{\Sigma_{N}}\left( \Scal_{\Sigma} +\Delta_{\sigma} \log \rho \right)dA\,dN\\
&=-8\pi\int_{N_{0}}^{1}N\,dN=-4\pi\left(1-N_{0}^{2}\right)
\end{align*}
where we have used the fundamental theorem of calculus, Fubini's theorem, a partition of unity, the divergence theorem, the Gauss--Bonnet theorem together with our knowledge that all $\Sigma_{N}$ are topological spheres, and where $dA$ denotes the area element induced by $\sigma$ on $\Sigma_{N}$. Using our asymptotic assumptions \eqref{eq:asyg} and \eqref{eq:asyN}, we can evaluate the surface integrals at infinity, obtaining
\begin{align*}
   \lambda\left(2 \sqrt{\lambda m}- \frac{H_{0} \sqrt{\lambda \nu(N)}r_{0}^{2}}{N_0}\right) & \leq 0, \\
   \nu(N)r_{0}^{2}\left(H_{0} N_0+ 4 \nu(N)\right)& \geq 1- N_0^2
\end{align*}
on $\partial M$, where we have used the properties of equipotential photon surfaces and the fact that $\frac{1}{\rho}=\lambda\nu(N)$ on $\partial M$. Recalling $H_{0}>0$ and using $\nu(N)=\frac{m}{r_{0}^{2}}$ by the Smarr formula \eqref{eq:Smarr}, we find
\begin{align}\label{ineq:1}
\lambda\left(\frac{2N_{0}}{H_{0}r_{0}}-1\right)&\leq0,\\\label{ineq:2}
m\left(H_{0}N_{0}+\frac{4m}{r_{0}^{2}}\right)&\geq 1-N_{0}^{2}.
\end{align}
For $\lambda=1$, this gives
\begin{align}\label{ineq:11}
1-N_{0}^{2}&\stackrel{\eqref{ineq:2},\eqref{eq: israel_mc2}}{\leq}  \frac{2m}{r_{0}}\frac{2N_{0}}{H_{0}r_{0}}\stackrel{\eqref{ineq:1}}{\leq} \frac{2m}{r_{0}},\\\label{ineq:12}
1&\stackrel{\eqref{eq: israel_mc2}}{=}\frac{H_{0}r_{0}^{2}}{4N_{0}}\left(H_{0}N_{0}+\frac{4m}{r_{0}^{2}}\right)=\frac{H^{2}_{0}r_{0}^{2}}{4}+\frac{mH_{0}}{N_{0}}\stackrel{\eqref{ineq:1}}{\geq} N_{0}^{2}+\frac{2m}{r_{0}}.
\end{align}
Taken together, this gives $N_{0}^{2}=1-\frac{2m}{r_{0}}$. Similarly, for $\lambda=-1$, \eqref{ineq:11} also follows as $\lambda=\sign(m)$. Instead of arguing as in \eqref{ineq:12}, we use that $N_{0}^{2}>1$ to find
\begin{align*}
m&\stackrel{\eqref{eq: israel_mc2}}{=}\frac{H_{0}r_{0}^{2}}{4N_{0}} m\left(H_{0}N_{0}+\frac{4m}{r_{0}^{2}}\right)\stackrel{\eqref{ineq:2}}{\geq} \frac{H_{0}r_{0}^{2}}{4N_{0}}\left(1-N_{0}^{2}\right)\stackrel{\eqref{ineq:1}}{\geq}\frac{r_{0}}{2}(1-N_{0}^{2}).
\end{align*}
Again, taken together, this gives $N_{0}^{2}=1-\frac{2m}{r_{0}}$. Consequently, in both cases, we have equality in \eqref{ineq:1}, \eqref{ineq:2} and thus in \eqref{IsIn00}, \eqref{IsIn01} from which we learn that all level sets $\Sigma_{N}$ are totally umbilic, that is $\mathring{h}=0$, and have constant $\rho$. Thus, by \eqref{Is3}, we know that the mean curvature $H$ of each level set $\Sigma_{N}$ is constant and given by $H=\frac{\rho'}{\lambda\rho^{2}}$, where $'$ denotes the total $N$-derivative. Thus, by \eqref{Is2}, $\Scal_{\Sigma_{N}}$ is also constant and indeed positive on each level set $\Sigma_{N}$ by the Gauss--Bonnet theorem as $\Sigma_{N}$ is a topological sphere. From this and the definition of the area radius $r(N)$ in \eqref{def:arearadius}, we find (up to a global diffeomorphism on $\mathbb{S}^{2}$) that $\sigma=r(N)^{2}\, g_{\sphere}$ on $\Sigma_{N}$. From the Smarr formula \eqref{eq:Smarr}, we learn that $\rho(N)=\frac{r(N)^{2}}{m}$ on $\Sigma_{N}$. Moreover, suppose towards a contradiction that $r'(N_{1})=0$ for some $N_{1}$ in the image of $N$, i.e., in $[N_{0},1)$ or $(1,N_{0}]$, respectively. Then $\rho'(N_{1})=H(N_{1})=0$ and thus $H'(N_{1})=0$ by \eqref{Is1} which contradicts \eqref{Is2} as $\Sigma_{N_{1}}$ has constant positive Gauss curvature. Hence $r=r(N)$ is invertible and we can write $N=N(r)$ on $[r_{0},\infty)$, giving
\begin{equation*}
g=\frac{r^{4}}{m^{2}}\dot{N}(r)^{2}dr^{2}+r^{2}g_{\sphere}
\end{equation*}
on $[r_{0},\infty)\times\mathbb{S}^{2}$, with $\dot{}$ denoting an $r$-derivative,  and thus in particular spherical symmetry. The claim then follows from Birkhoff's theorem or via explicit computations as those performed in \cite{Ced}.
\end{proof}

\begin{rmk}[Higher dimensions]\label{rmk:high}
As we have applied the Gauss--Bonnet theorem to each level set $\Sigma_{N}$, this proof does not naturally lend itself to an extension to higher dimensions. If one tries to use the same strategy in higher dimensions, one needs to at least assume in addition that all $\Sigma_{N}$ with their induced metrics $\sigma(N)$ are conformal to each other and hence by the prescribed asymptotics conformal to the round sphere. As the condition $\nabla N\neq0$ on $M$ is already rather strong, we do not follow up on this idea, here.
\end{rmk}

\subsection{The zero mass case \`a la Israel}\label{subsec:zeroIsrael}
As discussed in \Cref{sec:zero}, we will exploit the electrostatic potential of $(M,g)$ and work on its level sets, otherwise adapting the strategy of \Cref{subsec:nonzero}. This will lead to the following result.

\begin{theorem}[Degenerate case \`a la Israel]\label{thm:Israeldeg}
Let $(M^{3},g,N)$ be an asymptotically flat static vacuum system of mass $m\in\R$ and decay rate $\tau\geq0$ with respect to asymptotic coordinates $(x^{i})$. Suppose that $M$ has a connected inner boundary $\partial M$ and denote its electrostatic potential by $u$. Suppose further that $\partial M$ arises as a time-slice of a degenerate equipotential photon surface. Assume that $\nabla u\neq0$ on $M$ and that \eqref{eq:asyu} holds asymptotically on $(M,g)$ with respect to $(x^{i})$. Then $(M,g)$ is isometric to the exterior region of a round ball in Euclidean space $(\R^{3},\delta)$, $N\equiv1$ on $M$, and $m=0$.
\end{theorem}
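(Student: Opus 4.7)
The plan is to adapt the Israel-style strategy of \Cref{thm:Israel} to the zero mass setting, foliating $(M,g)$ by level sets of the electrostatic potential $u$ in place of those of the lapse $N$. As a first step I would invoke \Cref{lem:zero} and \Cref{rem:zerodegen} to extract $m=0$, $N\equiv1$, flatness of $(M,g)$, and the fact that $\partial M$ is isometric to a round sphere of some radius $r_{0}>0$, umbilic in $(M,g)$ with constant mean curvature $H_{0}=\frac{2}{r_{0}}$. The expansion \eqref{eq:asyu} together with \eqref{eq:Smarrb} and the Hopf lemma applied to $u$ then supplies $R_{0}>0$ and $\nu(u)<0$ on every regular level set of $u$, where $\nu$ denotes the unit normal pointing to infinity. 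The remaining and nontrivial task is to promote this local boundary information into a global isometry with the exterior of a round ball in $(\R^{3},\delta)$.

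Since $\nabla u\neq0$ by assumption and the image of $u$ is $(0,1]$ with $u\vert_{\partial M}=1$, $u$ regularly foliates $M$ and one can introduce coordinates in which
\begin{align*}
g=\rho^{2}du^{2}+\sigma(u),
\end{align*}
with $\rho\definedas\frac{1}{\vert\nabla u\vert}$ and $\sigma(u)$ the Riemannian metric induced on $\Sigma_{s}\definedas\{u=s\}$; in these coordinates $\partial_{u}=-\rho\nu$. The expansion \eqref{eq:asyu} forces large level sets of $u$ to be topological spheres, hence all $\Sigma_{s}$ are topological spheres and Gau{\ss}--Bonnet is available on every level set. Next I would derive Israel-type identities on each $\Sigma_{s}$ by combining the Ricci-flatness $\Ric=0$ (which follows from $N\equiv1$ via \eqref{eq:SVE1}), the harmonicity $\Delta u=0$, and the Gau{\ss}--Codazzi equations with the standard evolution equations for $\sigma$, the second fundamental form, and the mean curvature $H$ along the foliation. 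The analogue of \eqref{Is3} in this Ricci-flat, harmonic setting is simply
\begin{align*}
\rho,_{u}+\rho^{2}H=0,
\end{align*}
and I expect two further identities in the spirit of \eqref{Is1} and \eqref{Is2} in which the non-negative terms $\vert\mathring{h}\vert_{\sigma}^{2}$ and $\vert\nabla_{\!\sigma}\rho\vert_{\sigma}^{2}/\rho^{2}$ appear with a definite sign. Dropping these non-negative terms should yield two divergence-type differential inequalities on each $\Sigma_{s}$ of the same flavor as \eqref{IsIn00}, \eqref{IsIn01}, with $u$ playing the role that $N$ played in \Cref{subsec:nonzero}.

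Integrating these inequalities from $u=1$ down to $u=0$, using Fubini's theorem, a partition of unity, the divergence theorem on each $\Sigma_{s}$, and Gau{\ss}--Bonnet, then converts them into integral estimates between boundary data on $\partial M=\Sigma_{1}$ and asymptotic data at infinity. The boundary data are supplied by \Cref{lem:zero}: $\partial M$ is intrinsically and extrinsically a round sphere of radius $r_{0}$, with constant $H_{0}=\frac{2}{r_{0}}$ and constant $\rho$ on $\partial M$ by the local Euclidean tubular structure. The asymptotic terms follow from \eqref{eq:asyu} and \eqref{eq:Smarrb} in the same spirit as the use of \eqref{eq:asyN}, \eqref{eq:Smarr} in the non-degenerate case. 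Combining the two resulting integral inequalities in the style of \eqref{ineq:11}, \eqref{ineq:12} should force equality throughout, yielding $\mathring{h}\equiv0$ and constant $\rho$ on each $\Sigma_{s}$, hence spherical symmetry of $(M,g)$. Integrating the foliation ODE $\rho,_{u}=-\rho^{2}H$ together with the area radius ODE will then recover the flat metric $g=dr^{2}+r^{2}g_{\sphere}$ on $[r_{0},\infty)\times\sphere$, completing the proof. The main obstacle is to derive the two supplementary Israel-type identities with exactly the correct signs and boundary/asymptotic behavior, so that the comparison between the boundary radius $r_{0}$ from \Cref{lem:zero} and the asymptotic charge $R_{0}$ pins down the global geometry. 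Conceptually this step is equivalent to a sharp Willmore-type inequality $\int_{\partial M}H^{2}\,dA\geq16\pi$ whose equality case is supplied by \Cref{lem:zero} and then upgraded by the monotonicity to the desired global isometry with $(\R^{3}\setminus\overline{B_{r_{0}}},\delta)$.
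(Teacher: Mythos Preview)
Your outline matches the paper's proof almost exactly: foliate by level sets of $u$, derive the Ricci-flat analogues of \eqref{Is1}--\eqref{Is3}, integrate the resulting differential inequalities to compare boundary data on $\partial M$ with the asymptotic charge $R_{0}$, and conclude spherical symmetry from the equality case. There is, however, one genuine gap.

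You assert that $\rho$ is constant on $\partial M$ ``by the local Euclidean tubular structure'' coming from \Cref{lem:zero}. This is not justified: \Cref{lem:zero} gives only local information about the metric $g$ near $\partial M$, whereas $\rho=\vert\nabla u\vert^{-1}$ depends on the \emph{globally} determined harmonic function $u$. Even though $(M,g)$ is flat and $\partial M$ is an intrinsically and extrinsically round sphere, nothing prevents $\vert\nabla u\vert$ from varying along $\partial M$ a priori. Consequently, your plan to combine the two integrated inequalities ``in the style of \eqref{ineq:11}, \eqref{ineq:12}'' does not close as stated, since the boundary terms $\int_{\partial M}H/\sqrt{\rho}\,dA$ and $\int_{\partial M}H/\rho\,dA$ cannot simply be evaluated.

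The fix, which is what the paper does, is to use only the first inequality together with Cauchy--Schwarz. Since $H_{0}=\tfrac{2}{r_{0}}$ is constant, the first integral inequality reads $\tfrac{2}{r_{0}}\int_{\partial M}\vert\nabla u\vert^{1/2}\,dA\geq 8\pi\sqrt{R_{0}}$, while Cauchy--Schwarz combined with \eqref{eq:Smarrb} and $\vert\partial M\vert=4\pi r_{0}^{2}$ gives $\int_{\partial M}\vert\nabla u\vert^{1/2}\,dA\leq 4\pi r_{0}\sqrt{R_{0}}$. These squeeze to equality in the first inequality, whence $\mathring{h}\equiv0$ and $\rho$ is constant on \emph{every} level set (so in particular, a posteriori, on $\partial M$); equality in the second inequality then yields $R_{0}=r_{0}$, and the rest of your argument goes through unchanged.
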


The following corollary is asserted by the proof of \Cref{thm:Israeldeg} and the discussion in \Cref{sec:zero} and gives a new proof of the well-known Willmore inequality in $3$-dimensional Euclidean space \cite[Theorem 1]{Willmore.1968} under the technical assumption that the electrostatic potential $u$ satisfies $\nabla u\neq0$ on $\R^{3}\setminus\overline{\Omega}$. Note however that this technical assumption necessarily implies in particular that $\partial\Omega$ is a topological sphere.

\begin{coroll}[Willmore inequality]\label{thm:Willmore}
Let $\Omega\subset\R^{3}$ be a bounded domain with smooth boundary $\partial\Omega$. Let $u$ be the electrostatic potential of $\Omega$, i.e., the unique smooth solution of $\Delta u=0$ in $\R^{3}\setminus\Omega$ satisfying $u\vert_{\partial\Omega}=1$ and $u\to0$ as $\vert x\vert\to\infty$, where $(x^{i})$ denote the Cartesian coordinates on $\R^{3}$. Assume that $\nabla u\neq0$ on $\R^{3}\setminus\overline{\Omega}$. Then $\partial\Omega$ is diffeomorphic to $\sphere$ and the \emph{Willmore inequality}
\begin{align}\label{eq:Will}
\int_{\partial\Omega} H^{2}\,dA\geq16\pi
\end{align}
holds with equality if and only if $\partial\Omega$ is a round ball. 
\end{coroll}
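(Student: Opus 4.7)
The plan is to deduce the Willmore inequality from the same level-set foliation structure that underlies the proof of \Cref{thm:Israeldeg}, combined with the Gauss equation on the Ricci-flat background $(\R^{3},\delta)$ and the Gauss--Bonnet theorem, so that the role of the hypothesis $\nabla u \neq 0$ is to transport spherical topology from infinity inwards to $\partial\Omega$.

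First, I would show that $\partial\Omega$ is diffeomorphic to $\sphere$. Since $\nabla u \neq 0$ on $\R^{3}\setminus\overline{\Omega}$ and $u$ equals $1$ on $\partial\Omega$ and tends to $0$ at infinity, the level sets $\Sigma_{t}\definedas\{u=t\}$ form a regular foliation of $\R^{3}\setminus\overline{\Omega}$ as $t$ ranges over $(0,1)$. By the asymptotic expansion \eqref{eq:asyu} with $R_{0}>0$, for $t$ sufficiently small the leaf $\Sigma_{t}$ is a $C^{2}$-small perturbation of the coordinate sphere of radius $\approx R_{0}/t$, and is therefore a connected topological $2$-sphere. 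Because a regular foliation preserves the diffeomorphism type of its leaves, every $\Sigma_{t}$, and in particular $\partial\Omega=\Sigma_{1}$, is diffeomorphic to $\sphere$. This is the main step of the proof and is where the hypothesis $\nabla u\neq 0$ enters decisively.

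Next I would combine the Gauss equation with Gauss--Bonnet to obtain the inequality. Since $(\R^{3},\delta)$ is Ricci-flat, the Gauss equation yields on any closed $2$-surface $\Sigma\subset\R^{3}$ the pointwise identity $\Scal_{\Sigma}=H^{2}-|h|^{2}=\tfrac{H^{2}}{2}-|\mathring{h}|^{2}$, where $h$ denotes the second fundamental form of $\Sigma$ and $\mathring{h}$ its trace-free part. Applied on $\partial\Omega$, integrated, and combined with $\int_{\partial\Omega}\Scal_{\partial\Omega}\,dA=8\pi$ from Gauss--Bonnet for the topological sphere $\partial\Omega$, this produces
\[
\int_{\partial\Omega} H^{2}\,dA \;=\; 16\pi + 2\int_{\partial\Omega} |\mathring{h}|^{2}\,dA \;\geq\; 16\pi,
\]
which is precisely \eqref{eq:Will}. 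For the rigidity assertion, equality forces $\mathring{h}\equiv 0$ on $\partial\Omega$, so that $\partial\Omega$ is a compact, connected, totally umbilic surface in $(\R^{3},\delta)$, and hence a round sphere bounding a round ball.
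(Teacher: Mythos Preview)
Your proof is correct, but it takes a genuinely different route from the paper. After establishing spherical topology of $\partial\Omega$ via the regular foliation by level sets of $u$ (which the paper also does), you invoke the Gauss equation in flat space, $\Scal_{\partial\Omega}=\tfrac{H^{2}}{2}-|\mathring{h}|^{2}$, integrate, and apply Gauss--Bonnet. This is essentially Willmore's classical proof of the inequality in $(\R^{3},\delta)$, and the electrostatic potential plays no further role once the topology is fixed.

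The paper's intent, by contrast, is precisely to give a \emph{new} proof that stays inside the Israel-style level-set machinery. Following the proof of \Cref{thm:Israeldeg}, one derives the integral inequality
\[
\int_{\partial\Omega}H\sqrt{|\nabla u|}\,dA\;\geq\;8\pi\sqrt{R_{0}},
\]
and then applies the Cauchy--Schwarz inequality together with $\int_{\partial\Omega}|\nabla u|\,dA=4\pi R_{0}$ to obtain \eqref{eq:Will}. Rigidity is read off from equality in this estimate: all level sets $\Sigma_{u}$ become umbilic with constant $\rho$, forcing $\partial\Omega$ to be round. Your argument is shorter and more elementary; the paper's approach is what justifies calling \Cref{thm:Willmore} a corollary of \Cref{thm:Israeldeg} and supports the broader theme that the Willmore inequality is the ``zero mass limit'' of equipotential photon surface uniqueness.
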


\begin{rmk}[Extending the Willmore inequality]\label{rmk:Willmore}
In fact, our proofs of \Cref{thm:Israeldeg} and \Cref{thm:Willmore} also show that the Willmore inequality \eqref{eq:Will} holds in any flat, asymptotically flat Riemannian manifold $(M,g)$ with connected boundary $\partial M$ and decay rate $\tau\geq0$, provided that the electrostatic potential $u$ satisfies $\nabla u\neq0$ on $M$ as well as the asymptotic decay condition \eqref{eq:asyu} in the given asymptotically flat coordinates. Moreover, equality holds in \eqref{eq:Will} if and only if $(M,g)$ is isometric to the exterior region of a round ball in $(\R^{3},\delta)$. This is somewhat related to the results in \cite{Ago_Fog_Maz-1}, where the technical assumption $\nabla u\neq0$ on $M$ is not needed.
\end{rmk}

\begin{rmk}[Higher dimensions]
\Cref{rmk:high} applies similarly in this context and illustrates why our proof of the Willmore inequality is specific to $3$ dimensions.
\end{rmk}

\begin{proof}[Proof of \Cref{thm:Israeldeg}]
By the assumption that $\nabla u\neq0$ on $M$, we know that $u$ regularly foliates $M$ into level sets of the same topology. By the Hopf lemma and the maximum principle, we know that $\nabla u\neq0$ on $\partial M$, too. By the asymptotic assumption that \eqref{eq:asyu} holds with respect to $(x^{i})$, we know that these level sets must be topological spheres. In particular, the level set $\partial M=\{u=1\}=\Sigma_{1}$ is a topological sphere. As in the proof of \Cref{thm:Israel}, we use $u$ as a global coordinate and write
\begin{align}\label{eq:gdeg}
g=\rho^{2}du^{2}+\sigma,
\end{align}
where $\sigma$ denotes the Riemannian metric on the level sets $\Sigma_{u}$ and $\rho\definedas\frac{1}{\vert \nabla u\vert}$. From \Cref{lem:zero}, we know that $\Ric=0$. Rewriting $\frac{1}{\rho^{2}}\Ric_{uu}=0$, $\sigma^{IJ}\Ric_{IJ}-\frac{2}{\rho^{2}}\Ric_{uu}=0$, and $\Delta u=0$ with the help of \eqref{eq:gdeg}, we find that
\begin{align}\label{Is1deg}
    0 & = \frac{1}{\rho} \left( H,_u -\frac{\rho}{2} H^2 \right) - \frac{2}{\sqrt{\rho}}\,\Delta_{\sigma} \sqrt{\rho} - \frac{1}{2}\left[ \frac{|\nabla_{\!\sigma} \rho|_{\sigma}^2}{\rho^2} + 2|\mathring{h}|_{\sigma}^2 \right], \\ \label{Is2deg}
    0 & = \frac{H,_u}{\rho} - \Scal_{\Sigma} - \Delta_{\sigma} \log \rho -\left[ \frac{|\nabla_{\!\sigma} \rho|_{\sigma}^2}{\rho^2} + 2|\mathring{h}|_{\sigma}^2 \right], \\ \label{Is3deg}
    0 & = \rho,_u+ \rho^2 H
\end{align}
on $\Sigma_{u}$. Using the same notation as in the proof of \Cref{thm:Israel} and $\partial_{u}\sqrt{\mathfrak{s}}=-\rho H\sqrt{\mathfrak{s}}$, this gives
\begin{align}\label{IsIn00deg}
\partial_u \left( \frac{H \sqrt{\mathfrak{s}}}{\sqrt{\rho}} \right) & \geq 2 \Delta_{\sigma} \sqrt{\rho} \,\sqrt{\mathfrak{s}},
\\ \label{IsIn01deg}
\partial_u \left( \frac{H\sqrt{\mathfrak{s}}}{\rho}\right) & \geq \left(\Scal_{\Sigma} +\Delta_{\sigma} \log \rho\right)\sqrt{\mathfrak{s}} ,
\end{align}
on each level set $\Sigma_{u}$, with equality if and only if the terms in the square brackets in \eqref{Is1deg}, \eqref{Is2deg} vanish on $\Sigma_{u}$. Integrating this as in the proof of \Cref{thm:Israel}, we obtain
\begin{align*}
\int_{\partial M} \frac{H}{\sqrt{\rho}}\,dA-\lim_{r\to\infty}\int_{\sphere_{r}}\frac{H}{\sqrt{\rho}}\,dA&\geq2\int_{0}^{1} \int_{\Sigma_{u}}\Delta_{\sigma} \sqrt{\rho} \,dA\, du=0,\\
\int_{\partial M}\frac{H}{\rho}\,dA-\lim_{r\to\infty}\int_{\sphere_{r}}\frac{H}{\rho}\,dA&\geq\int_{0}^{1}\int_{\Sigma_{u}}\left( \Scal_{\Sigma} +\Delta_{\sigma} \log \rho\right)dA\,du=8\pi
\end{align*}
Exploiting the asymptotic assumptions \eqref{eq:asyg}, \eqref{eq:asyu} as well as \eqref{eq:Smarrb} and the definition of $\rho$, this gives
\begin{align}\label{eq:Will1}
\int_{\partial M} \frac{H}{\sqrt{\rho}}\,dA&\geq 8\pi\sqrt{R_{0}},\\\label{eq:Will2}
\int_{\partial M}\frac{H}{\rho}\,dA&\geq8\pi.
\end{align}
Next, we use that $H_{0}= H=\frac{2}{r_{0}}$ holds on $\partial M$, with $\partial M$ isometric to the round sphere $\sphere_{r_{0}}$ by our considerations in \Cref{sec:zero}. Inserting this into \eqref{eq:Will1} and using \eqref{eq:Smarrb}, we get equality in \eqref{eq:Will1}. By \eqref{Is1deg} or \eqref{Is2deg}, this allows us to conclude that all level sets $\Sigma_{u}$ are totally umbilic, that is $\mathring{h}=0$, and have constant $\rho$. This then implies equality in \eqref{eq:Will2} or in other words $R_{0}=r_{0}$ as expected\footnote{as one would have been immediately able to deduce if we already knew that $(M,g)$ was globally isometric to $(\R^{3}\setminus\overline{B_{r_{0}}},\delta)$ by verifying that $u=\frac{r_{0}}{r}$ is the electrostatic potential of $\Omega=B_{r_{0}}$.}. Moreover, by \eqref{Is3deg}, 
we know that the mean curvature $H$ of each level set $\Sigma_{u}$ is constant and given by $H=-\frac{\rho'}{\rho^{2}}$, where $'$ denotes the total $u$-derivative. Thus, by \eqref{Is2deg}, $\Scal_{\Sigma}$ is also constant and indeed positive on each level set $\Sigma_{u}$ by the Gauss--Bonnet theorem as $\Sigma_{u}$ is a topological sphere. Introducing the area radius
\begin{align*}
r(u) \definedas  \sqrt{\frac{|\Sigma_{u}|}{|\mathbb{S}^2|_{ g_{\sphere}}}},
\end{align*}
 we find (up to a global diffeomorphism on $\mathbb{S}^{2}$) that $\sigma=r(u)^{2}\, g_{\sphere}$ on $\Sigma_{u}$, with $r(1)=r_{0}$. From  \eqref{eq:Smarrb} and $R_{0}=r_{0}$, we learn that $\rho(u)=\frac{r(u)^{2}}{r_{0}}$ on $\Sigma_{u}$. Now suppose towards a contradiction that $r'(u_{1})=0$ for some $u_{1}$ in $(0,1]$. Then $\rho'(u_{1})=H(u_{1})=0$ and thus $H'(u_{1})=0$ by \eqref{Is1deg} which contradicts \eqref{Is2deg} as $\Sigma_{u_{1}}$ has constant positive Gauss curvature. Hence $r=r(u)$ is invertible and we can write $u=u(r)$ on $[r_{0},\infty)$, giving
\begin{equation*}
g=\frac{r^{4}}{r_{0}^{2}}\dot{u}(r)^{2}dr^{2}+r^{2} g_{\sphere}
\end{equation*}
on $[r_{0},\infty)\times\mathbb{S}^{2}$, with $\dot{}$ denoting an $r$-derivative, and thus in particular spherical symmetry. Finally, we know from \eqref{Is1deg}, \eqref{Is2deg}, and $\Scal_{\Sigma_{u(r)}}=\frac{2}{r^{2}}$ (which follows by the Gauss--Bonnet theorem) as well as $H_{0}>0$ and continuity of $H=H(u)$ that
\begin{equation*}
H(u(r))=\frac{2}{r}
\end{equation*}
for all $r\in[r_{0},\infty)$. By the above, this allows us to compute that
\begin{equation*}
\dot{u}(r)=\frac{2r}{r_{0}\rho'(u(r))}\stackrel{\eqref{Is3deg}}{=}-\frac{2r_{0}}{r^{3}H(u(r))}=-\frac{r_{0}}{r^{2}}
\end{equation*}
so that $r^{4}\dot{u}(r)^{2}=r_{0}^{2}$ which implies that $g=dr^{2}+r^{2} g_{\sphere}$ on $[r_{0},\infty)\times\sphere$ (up to an $r$-independent diffeomorphism on $\sphere$). All remaining claims follow from \Cref{lem:zero}.
\end{proof}

\begin{proof}[Proof of \Cref{thm:Willmore}]
Following the proof of \Cref{thm:Israeldeg} for $(M,g)=(\R^{3}\setminus\overline{\Omega},\delta)$, we get that \eqref{eq:Will1} and \eqref{eq:Will2} hold with equality if and only if the terms in the square brackets in \eqref{Is1deg} vanish on all level sets $\Sigma_{u}$. Here, we have exploited that the asymptotic decay assumption \eqref{eq:asyu} holds in this case, see \Cref{sec:zero}. Applying the Cauchy--Schwarz inequality to \eqref{eq:Will1} as well as the identity
\begin{align*}
\int_{\partial M}\vert\nu(u)\vert\,dA=4\pi R_{0}
\end{align*}
which follows from \eqref{eq:Smarrb} and $\nabla u\neq0$ on $\partial M$ as well as the facts that $R_{0}>0$ by the above and $\nu(u)<0$ by the Hopf lemma, we find the Willmore inequality \eqref{eq:Will}. Moreover, rigidity in the Willmore inequality \eqref{eq:Will} holds if and only if rigidity holds in \eqref{eq:Will1} and simultaneously in the Cauchy--Schwarz inequality for $H$ and $\vert\nu(u)\vert$ on $\partial M$. Assume first that equality holds in \eqref{eq:Will} and thus in \eqref{eq:Will1}. Then, arguing as in the proof of \Cref{thm:Israeldeg}, we learn that all level sets $\Sigma_{u}$ are totally umbilic, have constant $\rho$, constant mean curvature $H$, and that we have equality in \eqref{eq:Will2} which gives that all level sets $\Sigma_{u}$ have constant scalar curvature $\Scal_{\Sigma_{u}}$. As this applies in particular to $\partial M$, we have proved that $\partial M=\partial\Omega$ is a round sphere and hence $\Omega$ is a round ball as claimed. Conversely, if $\Omega$ is a round ball, one can check by hand that equality holds in the Willmore inequality \eqref{eq:Will}.
\end{proof}

\section{Photon Surface Uniqueness \`a la Robinson}\label{sec:robinson}
\subsection{The positive and negative mass cases  \`a la Robinson}\label{subsec:nonzeroRob}
In \cite{Rob}, Robinson gave a proof of static vacuum black hole uniqueness for connected horizons. It is a conceptually simple proof based on the divergence theorem and relies on straightforward computations and an ingeniously constructed divergence identity on $(M,g,N)$. We will now generalize this procedure to the non-degenerate equipotential photon surface case and to our much weaker decay assumptions \eqref{eq:asyg}, \eqref{eq:asyN} for decay rate $\tau\geq0$, see \Cref{thm:Robinson} below. Note however that we need to assume that $c>-\frac{1}{2}$ or in other words that the photon surface time-slice $\partial M$ is a topological sphere in the negative mass case.

\begin{theorem}[Non-degenerate case \`a la Robinson]\label{thm:Robinson}
Let $(M^{3},g,N)$ be an asymptotically flat static vacuum system of mass $m\in\R$ and decay rate $\tau\geq0$ with connected inner boundary $\partial M$. Suppose further that $\partial M$ arises as a time-slice of a non-degenerate equipotential photon surface with $c>-\frac{1}{2}$. Then $(M,g)$ is isometric to the exterior region of a round annulus in the Schwarzschild system $(M_{m}^{3},g_{m})$ of mass $m$, with $N$ corresponding to the according restriction of the Schwarzschild lapse $N_{m}$ under this isometry. Furthermore $m\neq0$ and $N_{0}<1$ if $m>0$ while $N_{0}>1$ if $m<0$.
\end{theorem}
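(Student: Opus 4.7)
The plan is to adapt Robinson's proof of static vacuum black hole uniqueness by constructing a vector field $X$ on $(M,g,N)$ whose divergence can, modulo the static vacuum equations \eqref{eq:SVE1}--\eqref{eq:SVE2}, be written pointwise as a nonnegative combination of squares, and then integrating $\diver X$ over $M$. Schematically, one looks for $X$ of the form
\begin{equation*}
X \;=\; a(N)\,\nabla\abs{\nabla N}^{2} \;+\; b(N)\,\abs{\nabla N}^{2}\,\nabla N,
\end{equation*}
so that on every regular level set $\Sigma_{N}$ of $N$ the divergence $\diver X$ equals a nonnegative multiple of $\abs{\mathring h}^{2}$ plus a nonnegative multiple of $\bigl\vert \nabla^{\Sigma_{N}}\!\abs{\nabla N} \bigr\vert^{2}$, with weights depending on $N$, $\abs{\nabla N}$, and $\lambda=\sign m$. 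Here $\nabla^{\Sigma_{N}}$ denotes the tangential gradient on $\Sigma_{N}$ and $\mathring h$ its trace-free second fundamental form. Formulating the identity globally in divergence form makes the approach insensitive to whether $N$ has critical points, so the hypothesis $\nabla N\neq0$ used in \Cref{thm:Israel} is not required; at critical points of $N$ the two squares vanish by continuity (and may equivalently be rephrased in terms of the globally defined trace-free Hessian of $N$).

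Applying the divergence theorem, the nonnegative bulk integral equals a boundary contribution at infinity plus a contribution on $\partial M$. The asymptotic term is evaluated from \eqref{eq:asyg}, \eqref{eq:asyN} and the Smarr formula \eqref{eq:Smarr}, producing an explicit expression in $m$ and $a(1),b(1)$. The term on $\partial M$ is evaluated using the photon surface data: $N\equiv N_{0}$, umbilicity $\mathring h \equiv 0$, the constant $\nu(N)$ tied to the constant mean curvature $H_{0}$ by \eqref{normalLapse}, the scalar curvature identity \eqref{scalarMeanCurv} coupled with Gau\ss{}--Bonnet \eqref{eq: Gauss-Bonnet for Sigma} (where the assumption $c>-\tfrac{1}{2}$ is used to guarantee that $\partial M$ is a topological sphere), the algebraic relation \eqref{eq: israel_mc2}, and $\nu(N)=m/r_{0}^{2}$ from Smarr. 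The coefficients $a$ and $b$ are to be chosen so that these two boundary contributions cancel identically, thus forcing $\diver X\equiv 0$ and hence the two squares to vanish pointwise on $M$.

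The vanishing of the squares then gives that every regular level set of $N$ is totally umbilic with $\abs{\nabla N}$ constant along it. This forces $g$ to be a warped product of the form $\rho(N)^{2}\,dN^{2}+r(N)^{2}\, g_{\sphere}$ in a neighbourhood of any regular value of $N$, and matching the photon surface data $N_{0},H_{0},r_{0},c$ with the ODE system \eqref{Is1}--\eqref{Is3} (as performed in the final part of the proof of \Cref{thm:Israel}) identifies $(M,g,N)$ with the claimed exterior region of $(M^{3}_{m},g_{m},N_{m})$. The assertions $m\neq 0$ and $\sign(1-N_{0})=\sign m$ then follow from \Cref{rem:zerodegen} and \Cref{prop:signm}.

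The hard part is to actually find the weights $a$ and $b$ that simultaneously render the bulk integrand a sum of squares \emph{and} make the two boundary contributions cancel. The photon surface constraints \eqref{normalLapse}, \eqref{scalarMeanCurv}, \eqref{eq: israel_mc2} rigidly tie $N_{0},H_{0},r_{0},c$, and $m$ together, so there is essentially a unique admissible pair $(a,b)$, with sign choices encoded through $\lambda$ in order to accommodate the negative mass regime $m<0$, $N_{0}>1$. The higher-dimensional Robinson-type identity constructed in \cite[Section 6]{CCLP} should serve as a direct template; in four spacetime dimensions and positive mass this reduces essentially to Robinson's original identity \cite{Rob}, applied outside a non-degenerate equipotential photon surface in place of outside a horizon.
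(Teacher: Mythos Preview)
Your overall framework is correct and matches the paper: a Robinson-type vector field of the form $p(N)\nabla W + q(N)W\nabla N$ with $W=\abs{\nabla N}^{2}$, whose divergence is a sign-definite combination of squares (the paper writes it as $\tfrac{3p}{4NW}\bigl\vert\nabla W+\tfrac{8NW}{1-N^{2}}\nabla N\bigr\vert^{2}+\tfrac{pN^{3}}{4W}\abs{C}^{2}$ with $C$ the Cotton tensor), integrated via the divergence theorem with the asymptotics \eqref{eq:asyg}, \eqref{eq:asyN} and the photon surface constraints on $\partial M$.

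The substantive divergence from the paper is in how you propose to pin down the coefficients. You aim for a \emph{single} choice of $(a,b)$ that makes the inner and outer boundary contributions cancel exactly, forcing the bulk integrand to vanish. The paper does \emph{not} do this: it runs the inequality \emph{twice}, with $(a,b)=(1,-N_{0}^{2})$ and $(a,b)=(-1,1)$, obtaining the pair
\[
1-N_{0}^{2}\leq \frac{2m}{r_{0}},\qquad 1-N_{0}^{2}\geq \frac{4mN_{0}}{r_{0}^{2}H_{0}},
\]
and then closes the sandwich using \eqref{eq: israel_mc2}. This two-pass strategy is essential because the boundary data $(N_{0},H_{0},r_{0},m)$ are \emph{not} fully determined by the photon surface constraints alone: after imposing \eqref{normalLapse}, \eqref{scalarMeanCurv}, Gau\ss--Bonnet, and Smarr, one degree of freedom remains, and it is precisely the relation $N_{0}^{2}=1-\tfrac{2m}{r_{0}}$ that the proof must extract. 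A single $(a,b)$ chosen to annihilate the boundary term would have to depend on this as-yet-unknown relation, so your ``hard part'' is not just a computation to be filled in---it is circular as stated. The fix is exactly the paper's: use the two-parameter freedom to produce two opposing inequalities. Once equality is forced, your proposed rigidity argument (umbilic level sets with constant $\abs{\nabla N}$, then the ODE analysis \eqref{Is1}--\eqref{Is3}) is fine and is essentially what the paper does.
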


Before we present the proof of \Cref{thm:Robinson}, let us first cite some facts due to and from \cite{Rob}, extending them to negative $m$ (i.e., to $\lambda=-1$) which does not change the computations in any essential way.
\begin{lemma}[The Cotton tensor in a static vacuum system]\label{lem:Cotton}
Let $(M^{3},g,N)$ be a static vacuum system and let $C\in\Gamma(T^{*}M^{3})$ be the \emph{Cotton tensor of $(M^{3},g)$} given by
\begin{align}\label{eq:Cotton}
        C_{ijk} :&= \nabla_k\! \Ric_{ij} - \nabla_j \!\Ric_{ik} + \tfrac{1}{4} \left( g_{ik} \nabla_j\!\Scal - g_{ij} \nabla_k\!\Scal \right)
    \end{align}
  in a local frame $\{X_{i}\}$, $i,j,k=1,2,3$. 
 Then, abbreviating
 \begin{align}
 W \definedas \abs{\nabla N}^2,
 \end{align}
 one has
  \begin{align}
  \begin{split}\label{eq:Cottonid}
        \abs{C}^2 &= \frac{8}{N^4}\left( W \abs{\nabla^2 N}^2 - \frac{3}{8} \abs{\nabla W}^2 \right)\\
        &= \frac{4}{N^4} \left( W \Delta W -  \frac{W \nabla^i W \nabla_i N}{N} - \frac{3}{4} \abs{\nabla W}^2 \right).
   \end{split}
   \end{align} 
\end{lemma}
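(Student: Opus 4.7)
The plan is to express the Cotton tensor as a purely algebraic expression in $N$, $\nabla N$, and $\nabla^2 N$ using the static vacuum equations together with the vanishing of the Weyl tensor in dimension three, and then to compute $|C|^2$ by squaring.

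First, \eqref{scal0} removes the scalar-curvature terms in \eqref{eq:Cotton}, leaving $C_{ijk} = \nabla_k\Ric_{ij} - \nabla_j\Ric_{ik}$. Substituting $\Ric_{ij} = N^{-1}\nabla_i\nabla_j N$ from \eqref{eq:SVE1} produces a third-order piece $(\nabla_k\nabla_i\nabla_j N - \nabla_j\nabla_i\nabla_k N)/N$ plus a lower-order correction. I would then commute the covariant derivatives in the third-order piece via the Ricci identity, using the symmetry of the Hessian to express the result as $\nabla N$ contracted with a combination of Riemann tensor components. Since $\dim M = 3$ and $\Scal = 0$, the Weyl tensor vanishes and
\begin{equation*}
R_{abcd} = g_{ac}\Ric_{bd} + g_{bd}\Ric_{ac} - g_{ad}\Ric_{bc} - g_{bc}\Ric_{ad}.
\end{equation*}
Combined with $\Ric = N^{-1}\nabla^2 N$ and the identity $\Ric_{am}\nabla^m N = \tfrac{1}{2N}\nabla_a W$, this reduces $C_{ijk}$ to a closed expression in $N$, $\nabla N$, and $\nabla^2 N$ of the schematic form
\begin{equation*}
2N^2 C_{ijk} = 4\,\nabla_j N\,\nabla_i\nabla_k N - 4\,\nabla_k N\,\nabla_i\nabla_j N + g_{ik}\nabla_j W - g_{ij}\nabla_k W.
\end{equation*}

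Next, I would square this and contract with the inverse metric. Three observations tame the bookkeeping: the Hessian is symmetric and traceless (since $\Delta N = 0$ by \eqref{eq:SVE2}), which kills all terms involving $\tr\nabla^2 N$; the Cotton tensor is antisymmetric in $j,k$, which halves the number of independent contractions; and the identity $\nabla_i W = 2\,\nabla^j N\,\nabla_i\nabla_j N$ identifies every ``doubly contracted'' Hessian-gradient product with $\tfrac{1}{4}|\nabla W|^2$. Assembling these yields the first form of \eqref{eq:Cottonid}.

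To derive the second form, I would apply the Bochner formula $\tfrac{1}{2}\Delta W = |\nabla^2 N|^2 + \Ric(\nabla N, \nabla N) + \langle\nabla\Delta N, \nabla N\rangle$. By \eqref{eq:SVE1} and \eqref{eq:SVE2}, the last term vanishes and $\Ric(\nabla N, \nabla N) = \tfrac{1}{2N}\langle\nabla W, \nabla N\rangle$, so $|\nabla^2 N|^2 = \tfrac{1}{2}\Delta W - \tfrac{1}{2N}\,\nabla^i W\,\nabla_i N$, and substitution into the first form of \eqref{eq:Cottonid} produces the second. The main obstacle is the intricate index algebra in the squaring step, where many cross terms involving the metric, $\nabla^2 N$, and $\nabla N$ must combine into precisely $W|\nabla^2 N|^2$ and $|\nabla W|^2$ with the stated coefficients; three-dimensionality enters essentially via the vanishing of the Weyl tensor, and in higher dimensions the analogous identity would have to include Weyl-type contributions.
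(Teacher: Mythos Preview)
Your proposal is correct. The paper does not actually supply a proof of this lemma: it merely cites it as a fact ``due to and from \cite{Rob}'' (Robinson's original black hole uniqueness paper), so there is no paper-proof to compare against beyond the observation that your computation is precisely the kind of direct verification Robinson carried out. Your schematic closed form for $2N^{2}C_{ijk}$ is correct (I checked that squaring it and using $\Delta N=0$ together with $\nabla^{k}N\,\nabla_{i}\nabla_{k}N=\tfrac{1}{2}\nabla_{i}W$ yields $32W\lvert\nabla^{2}N\rvert^{2}-12\lvert\nabla W\rvert^{2}$, matching the first line of \eqref{eq:Cottonid}), and your Bochner argument for the second line is exactly right.
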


\begin{prop}[Robinson's identity]\label{PropRobId}
Let $(M^{3},g,N)$ be a static vacuum system with $N(p)\neq1$ for all $p\in M$, let $a,b\in\R$, and set $p,q\colon(0,1)\cup(1,\infty)\to\R$ to be the functions given by
\begin{align*}
       p(x) & \definedas  \frac{ax^2 + b}{(1-x^2)^3}, \\
        q(x) & \definedas  -\frac{2a}{(1-x^2)^3} + \frac{6p(x)}{1-x^2}
 \end{align*}
for $x\in(0,1)\cup(1,\infty)$. Then the \emph{Robinson identity}
\begin{equation}\label{RobId}
\diver\!\left(\frac{p(N)}{N} \nabla W + q(N) W \nabla N \right) = \frac{3p(N)}{4NW} \left\vert\nabla W + \frac{8 N W}{1-N^2} \nabla N \right\vert^2 + \frac{p(N) N^3}{4W} |C|^2
    \end{equation}
holds on $M\setminus\CritN$, where $\CritN$ denotes the set of critical points of $N$.
\end{prop}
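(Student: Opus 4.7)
The plan is to verify \eqref{RobId} by a direct pointwise computation on $M\setminus\CritN$, where $W>0$ so that both sides are well-defined. I first expand the divergence on the left-hand side via the product rule, using the chain rule for composite functions of $N$, the static vacuum equation \eqref{eq:SVE2} to eliminate $\Delta N$, and $\vert\nabla N\vert^2 = W$ to contract one inner product. This yields
\begin{equation*}
\diver\!\left(\frac{p(N)}{N}\nabla W + q(N)\, W\nabla N\right) = \frac{p(N)}{N}\Delta W + \left(\frac{p'(N)}{N} - \frac{p(N)}{N^2} + q(N)\right)\nabla N\cdot\nabla W + q'(N)\,W^2.
\end{equation*}

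Next, I unpack the right-hand side. Invoking \Cref{lem:Cotton} to trade $\vert C\vert^2$ for a linear combination of $\Delta W$, $\nabla W\cdot\nabla N$, and $\vert\nabla W\vert^2$ gives
\begin{equation*}
\frac{p(N)\, N^3}{4W}\,\vert C\vert^2 = \frac{p(N)}{N}\Delta W - \frac{p(N)}{N^2}\,\nabla W\cdot\nabla N - \frac{3\,p(N)}{4NW}\,\vert\nabla W\vert^2.
\end{equation*}
Expanding the square $\vert \nabla W + \tfrac{8NW}{1-N^2}\nabla N\vert^2$ and using $\vert\nabla N\vert^2=W$ produces a $\vert\nabla W\vert^2$-term that exactly cancels the last summand above, so that after simplification the full RHS becomes
\begin{equation*}
\frac{p(N)}{N}\Delta W + \left(\frac{12\,p(N)}{1-N^2} - \frac{p(N)}{N^2}\right)\nabla W\cdot\nabla N + \frac{48\,N\,p(N)}{(1-N^2)^2}\,W^2.
\end{equation*}

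Matching coefficients of $\Delta W$, $\nabla N \cdot \nabla W$, and $W^2$ between LHS and RHS reduces the identity to the two algebraic relations
\begin{equation*}
q(N) = \frac{12\,p(N)}{1-N^2} - \frac{p'(N)}{N} \quad \text{and} \quad q'(N) = \frac{48\,N\,p(N)}{(1-N^2)^2},
\end{equation*}
which I verify by directly differentiating the explicit $p(x)=(ax^2+b)/(1-x^2)^3$. Indeed, the closed form for $q$ given in the statement is engineered precisely so that its derivative matches $48\,N\,p(N)/(1-N^2)^2$, and the first relation then follows by rewriting the same $q$ in the form $12\,p/(1-N^2) - p'/N$.

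The argument is entirely algebraic and local, so the main obstacle is bookkeeping rather than conceptual: one must carefully track several rational functions of $N$ and make sure to expand all products against the same independent tensor basis $\{\Delta W,\ \nabla W\cdot\nabla N,\ \vert\nabla W\vert^2,\ W^2\}$ with consistent sign conventions. The delicate step is the cancellation of the $\vert\nabla W\vert^2$-term coming from \Cref{lem:Cotton} against the one produced by expanding the square on the RHS, which forces the specific numerical coefficients $8$ in $8NW/(1-N^2)$ and $3/(4NW)$ in front of the square. Once these cancellations are in place, the remaining matching of coefficients is the simple ODE system above, whose solutions are exactly the $p$ and $q$ stated.
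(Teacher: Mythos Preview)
Your computation is correct: the expansion of the divergence, the substitution of \Cref{lem:Cotton} for $\vert C\vert^{2}$, the cancellation of the $\vert\nabla W\vert^{2}$-terms, and the verification of the two algebraic relations for $p$ and $q$ all check out line by line. Note, however, that the paper does not actually supply a proof of this proposition; it simply cites the identity as a fact ``due to and from \cite{Rob}'', remarking that allowing $N>1$ does not affect the computation. Your proposal therefore fills in a proof that the paper omits, and does so by the natural direct method: expand both sides against the basis $\{\Delta W,\ \nabla W\cdot\nabla N,\ \vert\nabla W\vert^{2},\ W^{2}\}$ and reduce to an ODE system for $p$ and $q$. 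This is essentially Robinson's original derivation and is the expected approach; there is no genuinely different route being taken.
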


We will make use of the following corollary which differs slightly from the approach taken by Robinson as we include the negative mass case.

\begin{coroll}[Divergence inequality]\label{coro:divineq}
Let $(M^{3},g,N)$ be a static vacuum system with $N(p)\neq1$ for all $p\in M$, and let $a,b\in\R$ be such that $\lambda p>0$ on $\operatorname{im}(N)$. Then 
\begin{equation}\label{divineq}
\lambda\diver\!\left(\frac{p(N)}{N} \nabla W + q(N) W \nabla N \right) \geq0
 \end{equation}
holds on $M$, with equality if and only if $C\equiv0$ (i.e., $(M,g)$ is locally conformally flat) and 
\begin{equation}\label{eq:prop}
\nabla W=-\frac{8NW}{1-N^{2}}\nabla N
\end{equation}
holds on $M$.
\end{coroll}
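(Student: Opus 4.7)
The plan is to deduce the corollary directly from the Robinson identity in Proposition 5.3, using the hypothesis $\lambda p > 0$ on $\operatorname{im}(N)$ to turn both terms on the right-hand side of \eqref{RobId} into manifestly non-negative quantities, and then extending the resulting inequality across the critical set $\CritN$ by a density/continuity argument.

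First, I would work on the open set $M \setminus \CritN$, where $W = |\nabla N|^{2} > 0$ and Robinson's identity \eqref{RobId} applies. Multiplying \eqref{RobId} by $\lambda$, the right-hand side becomes
\begin{equation*}
\frac{3\lambda p(N)}{4NW}\left|\nabla W + \frac{8NW}{1-N^{2}}\nabla N\right|^{2} + \frac{\lambda p(N) N^{3}}{4W}|C|^{2}.
\end{equation*}
Since $N > 0$ on $M$ and $\lambda p(N) > 0$ by assumption, both coefficients $\frac{\lambda p(N)}{4NW}$ and $\frac{\lambda p(N) N^{3}}{4W}$ are strictly positive, and hence the entire expression is non-negative. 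This gives \eqref{divineq} pointwise on $M \setminus \CritN$, and simultaneously identifies the pointwise equality case there as: $|C|^{2}=0$ together with $\nabla W = -\frac{8NW}{1-N^{2}}\nabla N$.

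Next, I would extend both assertions to all of $M$. The expression inside the divergence on the left-hand side of \eqref{divineq} is a smooth vector field on $M$: the scalar factors $\frac{p(N)}{N}$ and $q(N)$ are smooth in $N$ on the open interval $(0,1)\cup(1,\infty)$, which contains $\operatorname{im}(N)$ by the standing assumptions $N > 0$ and $N \neq 1$, while $\nabla W$ and $W\nabla N$ are themselves smooth on $M$. Hence $\lambda\diver(\tfrac{p(N)}{N}\nabla W + q(N)W\nabla N)$ is a continuous function on $M$ which is non-negative on the open subset $M \setminus \CritN$. The main (mild) obstacle is therefore to show that this subset is dense: if $\CritN$ had non-empty interior $U$, then $N$ would be locally constant on $U$, and by unique continuation for the harmonic function $N$ (using \eqref{eq:SVE2}), $N$ would be globally constant on the corresponding connected component; combined with $N \to 1$ in the asymptotic end and $N \neq 1$ on $M$, this yields a contradiction. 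Hence $M \setminus \CritN$ is dense, and \eqref{divineq} extends to all of $M$ by continuity.

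Finally, for the equality characterization, I would argue in both directions using the density of $M\setminus\CritN$. If pointwise equality holds in \eqref{divineq} throughout $M$, then in particular it holds on $M \setminus \CritN$, so by the analysis of \eqref{RobId} above we obtain $C \equiv 0$ and $\nabla W = -\tfrac{8NW}{1-N^{2}}\nabla N$ on $M \setminus \CritN$; both identities then extend by continuity to $M$ (noting that at critical points $W = 0$, so both sides of the gradient identity vanish automatically). Conversely, if $C \equiv 0$ and $\nabla W = -\tfrac{8NW}{1-N^{2}}\nabla N$ hold on $M$, then \eqref{RobId} shows that the left-hand side of \eqref{divineq} vanishes identically on $M \setminus \CritN$, and hence on $M$ by continuity. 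This completes the plan.
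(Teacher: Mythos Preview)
Your argument follows the same outline as the paper's: Robinson's identity on $M \setminus \CritN$ gives \eqref{divineq} there together with the stated equality characterization, and the remaining task is to extend across $\CritN$. The extension mechanisms differ slightly. The paper observes that at any critical point $W = 0$ is a global minimum of $W$, hence $\nabla W = 0$ there as well, and uses this to argue that the right-hand side of \eqref{RobId} is continuous across $\CritN$, so that the Robinson identity itself (not merely the inequality) extends. You instead extend only the inequality, relying on the continuity of the left-hand side and the density of $M \setminus \CritN$.

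Your density step has a small gap: you invoke ``$N \to 1$ in the asymptotic end'' to rule out $N$ being constant on a component, but the corollary as stated is for a general static vacuum system with $N \neq 1$ and does not assume asymptotic flatness. The fix is immediate: if $N$ were constant on some connected component, then $W$, $\nabla W$, and $\nabla N$ all vanish identically there, so the divergence in \eqref{divineq} is zero and the equality conditions $C \equiv 0$ (from \eqref{eq:Cottonid}) and \eqref{eq:prop} hold trivially on that component; hence one may assume $N$ is non-constant on each component, and then unique continuation for the harmonic $N$ indeed gives that $\CritN$ has empty interior. With this adjustment your proof is complete and gives a clean alternative to the paper's continuity-of-the-right-hand-side argument.
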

\begin{proof}[Proof of \Cref{coro:divineq}]
Clearly, if $\sign(p)=\lambda$ on $\operatorname{im}(N)$, \eqref{divineq} holds on $M\setminus\CritN$ by \eqref{RobId}, with the equality case characterized as claimed. To see that this continues to hold on $\CritN$, we note that the vector field inside the divergence in \eqref{RobId} is smooth on $M$ and that the right hand side of \eqref{RobId} is continuous across critical points of $N$ because $W=0$ on $\CritN$ and hence $W$ attains a global minimum there so that $\nabla W=0$ on $\CritN$. Thus, the Robinson identity \eqref{RobId} and the rigidity characterization extend to $\CritN$.
\end{proof}

We will now proceed to proving \Cref{thm:Robinson}.

\begin{proof}[Proof of \Cref{thm:Robinson}]
First, by non-degeneracy of $\partial M$, we know that $W>0$ on $\partial M$ and $\partial M\cap\CritN=\emptyset$. 
We set $N_{0}\definedas N\vert_{\partial M}$ as before. Then by \Cref{prop:signm}, the non-degeneracy assumption, and the maximum principle, we know that $N_{0}\neq1$ and $N\neq1$ on $M$ so that \Cref{coro:divineq} applies. In particular, we know that $m\neq0$. Now set $W_{0}\definedas W\vert_{\partial M}$ and let $H_{0}$ and $r_{0}$ denote the mean curvature and area radius of $\partial M$ as before. Recall that
\begin{equation}\label{eq:H0id}
H_{0}=-\left.\frac{\nabla^{2}N(\nu,\nu)}{\nu(N)^{3}}\right\vert_{\partial M}=-\lambda\frac{\nabla^{2}N(\nabla N,\nabla N)\vert_{\partial M}}{W_{0}^{\frac{3}{2}}},
\end{equation}
by \eqref{eq:SVE2} and a standard $2+1$-decomposition, using that $\lambda=\sign(\nu(N))=\sign(m)\neq0$ by the above. Integrating the left hand side of \eqref{RobId} over $M$, applying the divergence theorem, and exploiting the asymptotic assumptions \eqref{eq:asyg}, \eqref{eq:asyN} as well as \eqref{eq:H0id}, we find the \emph{$pq$-identity}
\begin{equation}\label{eq: pqid}
\int_{M} \!\diver\! \left( \frac{p(N)}{N} \nabla W + q(N) W \nabla N \right)\!d\mu = 4\pi\!\left(\frac{2H_{0} p_{0} W_{0}r_{0}^{2}}{N_{0}} \!- \!\lambda q_{0} W_{0}^\frac{3}{2} r_{0}^{2} - \frac{a+b}{8m}\right)\!, 
 \end{equation}
where $p_{0}\definedas p(N_{0})$ and $q_{0}\definedas q(N_{0})$, and $d\mu$ denotes the volume element induced by $g$ on $M$. Choosing $a,b$ such that $\lambda(a+b)\geq0$ and $\lambda(aN_{0}^{2}+b)\geq0$ gives $\lambda p>0$ on $\operatorname{im}(N)$ and thus by \Cref{coro:divineq} and \eqref{eq: pqid}, we find
\begin{equation*}
\frac{2\lambda H_{0} p_{0} W_{0}r_{0}^{2}}{N_{0}} - q_{0} W_{0}^\frac{3}{2} r_{0}^{2} \geq \frac{\lambda(a + b)}{8m},
\end{equation*} 
with equality if and only if $C\equiv0$ and \eqref{eq:prop} hold on $M$. Now, we pick $a=1$, $b=-N_{0}^{2}$ which are admissible values of $a,b$ as $\lambda=\sign(1-N_{0}^{2})$. This gives $p_{0}=0$, $q_{0}=-\frac{2}{(1-N_{0}^{2})^{3}}$ and thus, using that $W_{0}=\nu(N)^{2}=\frac{m^{2}}{r_{0}^{4}}$ by the Smarr formula \eqref{eq:Smarr}, we have
\begin{align}\label{eq:R1}
1-N_{0}^{2}\leq \frac{2m}{r_{0}}.
\end{align}
On the other hand, picking $a=-1$, $b=1$ which are also admissible values of $a,b$ as $\lambda=\sign(1-N_{0}^{2})$, we find $p_{0}=\frac{1}{(1-N_{0}^{2})^{2}}$, $q_{0}=\frac{8}{(1-N_{0}^{2})^{3}}$ and thus
\begin{equation}\label{eq:R2}
1-N_{0}^{2}\geq\frac{4mN_{0}}{r_{0}^{2}H_{0}}.
\end{equation}
For $\lambda=1$, \eqref{eq:R1} and \eqref{eq:R2} combine to  
\begin{align*}
\frac{2N_{0}}{r_{0}H_{0}}\leq1
\end{align*}
which allows to conclude that equality holds in both \eqref{eq:R1} and \eqref{eq:R2} as in the proof of \Cref{thm:Israel}, exploiting the photon surface constraints and the fact that $c>0$ by $\sign(c)=\lambda$ so that $\partial M$ is necessarily a round sphere, or in other words applying \eqref{eq: israel_mc2}. For $\lambda=-1$,  \eqref{eq:R1} and \eqref{eq:R2} combine to  
\begin{align}\label{eq:R3}
\frac{2N_{0}}{r_{0}H_{0}}\geq1
\end{align}
and thus
\begin{equation}
m\left(H_{0}N_{0}+\frac{4m}{r_{0}^{2}}\right)\stackrel{\eqref{eq:R3}, \eqref{eq:R1}}{\geq} \frac{2m}{r_{0}}\left(N_{0}^{2}+\frac{2m}{r_{0}}\right)\stackrel{\eqref{eq:R1}}{\geq}1-N_{0}^{2}
\end{equation}
so that we can again conclude  that equality holds in both \eqref{eq:R1} and \eqref{eq:R2} as in the proof of \Cref{thm:Israel}, as we have assumed that $c>-\frac{1}{2}$ so that $\partial M$ has spherical topology and hence \eqref{eq: israel_mc2} holds. This shows equality in \eqref{eq:R1} and \eqref{eq:R2} and thus equality in \eqref{divineq}. Hence by \Cref{coro:divineq}, we have $C\equiv0$ as well as \eqref{eq:prop} on $(M,g)$. From \eqref{eq:prop} and the fact that $W\geq0$, we find that on each connected component $U$ of $\{W\neq0\}$ there is a constant $\alpha>0$ such that
\begin{equation}\label{eq:W(N)}
W=\alpha(1-N^{2})^{4}
\end{equation}
holds on $U$. Next, as $N\neq1$ on $M$ by the above and as $W$ does not vanish entirely e.g.\ because $W_{0}\neq0$, we find that $W\neq0$ on $M$. Using the boundary conditions $W_{0}=\frac{m^{2}}{r_{0}^{4}}$, $1-N_{0}^{2}=\frac{2m}{r_{0}}$, we conclude that
\begin{align*}
W=W(N)&=\frac{\left(1-N^{2}\right)^{4}}{16m^{2}}
\end{align*}
holds on $M$. In particular, there are no critical points of $N$ in $M$. Hence we can introduce coordinates as in the proof of \Cref{thm:Israel} and deduce that \eqref{Is1}, \eqref{Is2}, and \eqref{Is3} hold, with $\rho=W^{-\frac{1}{2}}$. From \eqref{Is3} we find
\begin{equation*}
H=H(N)=\frac{4\lambda N W^{\frac{1}{2}}}{1-N^{2}}>0
\end{equation*} 
on each level set $\Sigma_{N}$ of $N$. Inserting this into \eqref{Is1} gives that each level set $\Sigma_{N}$ is umbilic, while inserting it into \eqref{Is2} gives that $\Sigma_{N}$ has constant scalar curvature $R_{\Sigma_{N}}>0$. One can now conclude as in the proof of \Cref{thm:Israel}.
\end{proof}

\begin{rmk}[Alternative ways of conclusion]
In the above proof, we exploited equality in \eqref{divineq} by deducing the functional dependence \eqref{eq:W(N)} of $W$ on $N$ and then inserting this into Israel's equations \eqref{Is1}--\eqref{Is3}. We have not made use of the vanishing of the Cotton tensor at all. Instead, one can argue as in K\"unzle's work \cite{Kuenzle} and deduce isometry to Schwarzschild from the vanishing of the Cotton tensor. See also \cite{Rob,Hagen,Boucher1984AUT,Lindblom} for related approaches and results.
\end{rmk}

\subsection{The zero mass case  \`a la Robinson}\label{subsec:zeroRobinson}
As in \Cref{subsec:zeroIsrael}, we will exploit the electrostatic potential of $(M,g)$, otherwise adapting the strategy of \Cref{subsec:nonzeroRob}. Most of the work has already been done by Cederbaum and Miehe \cite{Anabel} to prove the Willmore inequality in $(\R^{n},\delta)$ (see \cite[Corollary 1.3]{Anabel}) and we will explain the necessary adaptations, not repeating the technicalities.  The proof in \cite{Anabel} is in $n\geq3$ dimensions and contains a parameter $\beta\geq\frac{n-2}{n-1}$. We would like to point out that our $n=3$ argument works effectively the same way with any $\beta\geq\frac{1}{2}$ and that the proof of the Willmore inequality in \cite[Corollary 1.3]{Anabel} uses $\beta=1$. However, as the analysis is much simpler in the case $\beta=2$, we will only present that case, here\footnote{Other than analytic subtleties related to $\Crit$ which can all be handled exactly the same way as in \cite{Anabel}, the only new ingredient for using $\beta\neq2$ in our proof is the refined Kato inequality (see e.g.\ \cite[Proposition 2.8]{Anabel} and the references given there); the refined Kato inequality applies locally in any Riemannian manifold and hence also in our context.}. Moreover, note that we give a different rigidity argument as the rigidity argument given in the proof of \cite[Theorem 1.7]{Anabel} exploits that $M$ embeds isometrically into $(\R^{3},\delta)$ which we don't know a priori in our setting. This will lead to the following result.

\begin{theorem}[Degenerate case \`a la Robinson]\label{thm:Robinsondeg}
Let $(M^{3},g,N)$ be an asymptotically flat static vacuum system of mass $m\in\R$ and decay rate $\tau\geq0$ with respect to asymptotic coordinates $(x^{i})$. Suppose that $M$ has a connected inner boundary $\partial M$ and denote its electrostatic potential by $u$. Suppose further that $\partial M$ arises as a time-slice of a degenerate equipotential photon surface. Assume  that \eqref{eq:asyu} holds asymptotically on $(M,g)$ with respect to $(x^{i})$. Then $(M,g)$ is isometric to the exterior region of a round ball in Euclidean space $(\R^{3},\delta)$, $N\equiv1$ on $M$, and $m=0$.
\end{theorem}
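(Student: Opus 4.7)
The plan is to adapt the Cederbaum--Miehe proof of the Willmore inequality~\cite{Anabel} to our setting, using the electrostatic potential $u$ from \Cref{sec:zero} as the level-set function in place of $N$. First, \Cref{lem:zero} reduces the task to showing that $(M,g)$ is globally isometric to the exterior region of a round ball in $(\R^{3},\delta)$: it already gives $m=0$, $N\equiv1$, $\Ric\equiv0$, and that $\partial M$ is an umbilic round sphere of some radius $r_{0}>0$ with constant mean curvature $H_{0}=2/r_{0}$. The electrostatic potential $u$ provides the required replacement for $N$: it is harmonic with $u\vert_{\partial M}=1$, $u\to0$ at infinity with asymptotic expansion \eqref{eq:asyu}, and satisfies the identity \eqref{eq:Smarrb}; moreover $0<u<1$ on $M$ and $\nu(u)<0$ on $\partial M$ by the maximum principle and the Hopf lemma.

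Setting $W\definedas\abs{\nabla u}^{2}$, Ricci-flatness together with harmonicity of $u$ yields, via Bochner's formula, $\tfrac{1}{2}\Delta W=\abs{\nabla^{2}u}^{2}$. Following the $\beta=2$ case of \cite{Anabel}, I would construct a Robinson-style vector field $P$ built algebraically from $u$, $W$, and $\nabla u$ such that the divergence inequality
\[
\diver P\;\geq\;0
\]
holds pointwise on $M\setminus\Crit$, with equality characterising both umbilicity of the level sets $\Sigma_{s}=\{u=s\}$ and a rigid functional dependence $W=W(u)$. The critical set $\Crit$ is treated exactly as in \cite{Anabel}: since $u$ is harmonic on a flat $3$-manifold, $\Crit$ has sufficiently small capacity to be excised from the integration without generating spurious boundary terms.

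Integrating $\diver P$ over $M$ and applying the divergence theorem, the asymptotic contribution is expressible purely in terms of $R_{0}$ via \eqref{eq:asyu} and \eqref{eq:Smarrb}, while the contribution on $\partial M$ is expressible in terms of $H_{0}$, $r_{0}$, and $\int_{\partial M}\nu(u)\,dA=-4\pi R_{0}$, using $W\vert_{\partial M}=\nu(u)^{2}$ together with the roundness of $\partial M$ from \Cref{lem:zero}. A Cauchy--Schwarz step relating $\int_{\partial M}\abs{\nu(u)}\,dA$ to $\int_{\partial M}\nu(u)^{2}\,dA$, combined with $H_{0}=2/r_{0}$ and $\abs{\partial M}=4\pi r_{0}^{2}$, saturates the resulting Willmore-type inequality $\int_{\partial M}H^{2}\,dA\geq16\pi$. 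Consequently, equality must hold throughout, forcing every regular level set of $u$ to be totally umbilic with $W$ a prescribed function of $u$ alone.

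The rigidity step then proceeds as in the final paragraph of the proof of \Cref{thm:Israeldeg}: combining the constancy of $W$ along each $\Sigma_{s}$ with the Gauss--Bonnet theorem on the topological sphere $\partial M$ and introducing the area radius $r(u)$, one recovers $g=dr^{2}+r^{2}g_{\sphere}$ on $[r_{0},\infty)\times\sphere$, from which the claimed isometry follows. The remaining assertions $N\equiv 1$ and $m=0$ are already contained in \Cref{lem:zero}. The principal obstacle I foresee is precisely the rigidity step: unlike in \cite[Theorem 1.7]{Anabel}, we may not presuppose an isometric embedding of $(M,g)$ into $(\R^{3},\delta)$, so the radial coordinate must be constructed intrinsically from the equality case of the Robinson-type identity, with umbilicity and the functional relation $W=W(u)$ propagated across any a priori critical values of $u$.
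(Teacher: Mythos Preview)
Your overall plan---adapt the Cederbaum--Miehe divergence identity from \cite{Anabel} with $\beta=2$ to the flat manifold $(M,g)$, integrate, and read off rigidity---is exactly the paper's route. You also correctly anticipate the main obstacle: the rigidity argument in \cite{Anabel} relies on an ambient embedding into $(\R^{3},\delta)$, which is unavailable here, so one must instead extract the explicit relation $W=W(u)$ from the equality case, show $\Crit=\emptyset$, and finish via the level-set analysis of \Cref{thm:Israeldeg}. This is precisely what the paper does.

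The one substantive difference is how you force equality. You propose to pass through the Willmore inequality via a Cauchy--Schwarz step and then observe that $\partial M$, being a round sphere, saturates it. The paper instead uses the characteristic Robinson trick of making \emph{two} choices of the free parameters $(a,b)$ in the $p,q$-functions of \cite{Anabel}: the choice $(a,b)=(1,0)$ yields $R_{0}\leq r_{0}$, while $(a,b)=(-1,1)$ yields $R_{0}\geq r_{0}$, so $R_{0}=r_{0}$ and equality holds in the divergence inequality for both choices. This is cleaner and more in the spirit of Robinson's original argument (compare the two choices of $(a,b)$ in the non-degenerate proof of \Cref{thm:Robinson}). Your Cauchy--Schwarz route is closer in flavour to the Israel-style Willmore argument of \Cref{thm:Willmore}; it can be made to work, but note that for $\beta=2$ the boundary terms are $\int_{\partial M} H\,\abs{\nabla u}^{2}\,dA$ and $\int_{\partial M}\abs{\nabla u}^{3}\,dA$, so the Cauchy--Schwarz you describe (relating $\int\abs{\nu(u)}$ to $\int\nu(u)^{2}$) does not plug in directly---the Willmore inequality in \cite{Anabel} actually comes from $\beta=1$. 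The two-parameter squeeze sidesteps this entirely and never invokes Willmore.

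Once equality is established, the paper obtains the pointwise relation $\nabla\abs{\nabla u}^{2}=\tfrac{4\abs{\nabla u}^{2}}{u}\nabla u$, integrates it to $\abs{\nabla u}^{2}=u^{4}/R_{0}^{2}$ (whence $\Crit=\emptyset$), and concludes exactly as you outline.
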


\begin{rmk}[Willmore inequality]
Our proof of \Cref{thm:Robinsondeg} --- or rather the steps taken to address that $(M,g)$ is not a priori isometric to $(\R^{3}\setminus\overline{\Omega},\delta)$ --- shows that the assertion in \Cref{rmk:Willmore} also holds without the technical assumption $\nabla u\neq0$ on $M$ (but it needs to be transferred to the parameter $\beta=1$ as we are using $\beta=2$, here).
\end{rmk}

\begin{proof}[Proof of \Cref{thm:Robinsondeg}]
First, recall from \Cref{lem:zero} that $\Ric=0$ on $M$ so that $M$ is flat. Moreover, we know that $N\equiv1$ on $M$ and that $m=0$.

In \cite[Theorem 1.7]{Anabel}, Cederbaum and Miehe prove in particular that, for any bounded domain $\Omega\subset\R^{3}$ with smooth connected boundary $\partial\Omega$ and its (Euclidean) electrostatic potential $u\colon\R^{3}\setminus\Omega\to\R$, the divergence inequality 
\begin{align}\label{RobDivId5}
\begin{split}
&\diver\!\left(p(u)\nabla|\nabla u|^{2}+q(u)|\nabla u|^{2}\, \nabla u\right) \\\
&\quad\geq  \frac{3p(u)}{4} \left|\nabla|\nabla u|^2 - \frac{4|\nabla u|^2\,\nabla u}{u}\right|^2
\end{split}
\end{align}
holds on $\R^3\setminus\overline{\Omega}$, with $p,q\colon(0,1]\to\R$ given by
\begin{align}\label{solution F}
p(x)&\definedas (ax+b) x^{-3},\\ \label{solution G}
 q(x)&\definedas-4p(x)x^{-1}+bx^{-4}
 \end{align}
 for any fixed constants $a,b\in\R$ satisfying $a+b\geq0$, $b\geq0$ and all $x\in(0,1]$. Here, $\diver$ denotes the Euclidean divergence. Moreover,
 \begin{align}\label{divPos}
\diver\!\left(p(u)\nabla |\nabla u|^{2}+q(u)|\nabla u|^{2}\, \nabla u\right)\geq 0
 \end{align}
holds on $\R^3\setminus\overline{\Omega}$. In fact, the proof in \cite{Anabel} directly carries over to the setting of \Cref{thm:Robinsondeg}, asserting that \eqref{RobDivId5} and \eqref{divPos} hold on $M$ with respect to the metric $g$ and $p,q$ given by \eqref{solution F}, \eqref{solution G}, respectively. To see this, let us quickly run through the argument given in \cite{Anabel}. To see that \eqref{RobDivId5} and \eqref{divPos} hold, they use harmonicity of $u$ with respect to $\delta$ and the Bochner formula as well as some straightforward local computations relying only on the fact that $(\R^{3}\setminus\overline{\Omega},\delta)$ is flat as well as on \cite[Lemma 2.6]{Anabel} providing ODEs solved by $p$ and $q$. These insights hence carry over directly to our setting and prove the inequalities \eqref{RobDivId5} and \eqref{divPos} on $M$ with respect to $g$. Next, in \cite[Lemma 4.1]{Anabel}, the authors establish that the vector field 
\begin{equation*}
Z\definedas p(u)\nabla|\nabla u|^{2}+q(u)|\nabla u|^{2}\, \nabla u
\end{equation*}
and its divergence $\diver Z$ continuously extend to $\partial\Omega$ and that
\begin{enumerate}
\itemsep0em
\item $\diver Z\in L^{1}(\R^{3}\setminus\Omega)$,
\item $\delta(Z,\nu)\in L^{1}(\{u=u_{0}\}; dA)$ for any regular level set $\{u=u_{0}\}$ of $u$, with
\begin{align*}
\lim_{u_{0}\to0+}\int_{\{u=u_{0}\}}\delta(Z,\nu)\,dA&=4\pi R_{0},
\end{align*}
\item and the divergence theorem
\begin{align*}
\int_{U} \diver Z\,d\mu&=\int_{\partial U} \delta(Z,\eta)\,dA
\end{align*}
holds on any bounded domain $U\subseteq \R^{3}\setminus\Omega$ with smooth boundary $\partial U$ satisfying $\partial U\cap\Crit=\emptyset$. Here, $\eta$ denotes the unit normal to $\partial U$ pointing out of $U$, $dA$ denotes the area measure induced on $\partial U$, and $d\mu$ denotes the volume measure induced on $\R^{3}\setminus\Omega$ by $\delta$.
\end{enumerate}
The continuous extension claims readily transfer to our setting as $\partial M$ is a regular level set of $u$ by the Hopf lemma. Next, we note that $u$ has no critical points near infinity by the assumption \eqref{eq:asyu} and the Smarr-like formula \eqref{eq:Smarrb} which gives $R_{0}\neq0$ by the Hopf lemma. Moreover, by our asymptotic assumption \eqref{eq:asyu} in the asymptotically flat coordinates $(x^{i})$, the asymptotic assertions in \cite[Theorem 2.2 and Proposition 2.5]{Anabel} also hold in our setting. In particular, we have
\begin{align*}
Z^{i}&=-\frac{b x^{i}}{R_{0}\vert x\vert^{3}}+o(\vert x\vert^{-2}),\\
\diver Z&=o(\vert x\vert^{-3})
\end{align*}
as $\vert x\vert\to\infty$. Moreover, we have established Claim 2 in our setting, i.e., on $(M,g)$, namely in  \eqref{eq:Smarrb}. Claims 1 and 3 then follow by low regularity versions of the divergence theorem. As the work of Cheeger--Naber--Valtorta~\cite{Cheeger} and Hardt--Hoffmann-Ostenhof--Hoffmann-Ostenhof--Nadirashvili~\cite{Hardt} extends beyond the Euclidean setup, we know that  $\Crit$ is a set of Lebesgue measure zero and hence $Z$, $\diver Z$, and the normal vector field $\nu=-\frac{\nabla u}{\vert \nabla u\vert}$ extend to $M$ as Lebesgue-measurable functions. As all analyticity claims used in the proof of \cite[Lemma 4.1]{Anabel} (namely \cite[Corollary 2.10 and 2.12]{Anabel}) are local and hence extend to our setting, one can hence perform the cut-off and mollification procedure from the proof of \cite[Lemma 4.1]{Anabel} (in the easier case $\beta=2$) to obtain Claims 1 and 3 also in our setting, i.e., on $(M,g)$.
This gives
\begin{equation}\label{eq:divZdiv}
\int_{M}\diver Z\,d\mu= -\frac{4\pi b}{R_{0}} +2(a+b)\int_{\partial M} H_{0} |\nabla u|^{2}\,dA-(4a+3b)\int_{\partial M} |\nabla u|^{3}\,dA
\end{equation}
for all $a,b\in\R$ with $a+b\geq0$, $b\geq0$. Now recall from \Cref{lem:zero} that $H_{0}=\frac{2}{r_{0}}$ for the area radius $r_{0}>0$ of $\partial M$ and note that the Smarr-like formula \eqref{eq:Smarrb} gives 
\begin{equation*}
\vert\nabla u\vert\vert_{\partial M}=\frac{R_{0}}{r_{0}^{2}}.
\end{equation*}
Inserting this into \eqref{eq:divZdiv} and using that $\diver Z\geq0$ holds on $M$ allows us to conclude that
\begin{equation}\label{eq:4piineq}
 b\leq \frac{4(a+b)R_{0}^{3}}{r_{0}^{3}}-\frac{(4a+3b)R_{0}^{3}}{r_{0}^{4}}
\end{equation}
for all $a,b\in\R$ with $a+b\geq0$, $b\geq0$. Choosing $a=1$, $b=0$, this leads to
\begin{equation*}
R_{0}\leq r_{0}.
\end{equation*}
On the other hand, choosing $a=-1$ and $b=1$, we get
\begin{equation*}
R_{0}\geq r_{0}.
\end{equation*}
In combination, this proves that we have equality in \eqref{eq:4piineq} for these choices of $a,b$. Hence $\diver Z=0$ almost everywhere on $M$ for these choices of $a,b$. From \eqref{RobDivId5} and by continuity of all involved quantities, it then follows that 
\begin{equation}\label{eq:uprop}
\nabla\vert\nabla u\vert^{2}=\frac{4\vert\nabla u\vert^{2}\,\nabla u}{u}
\end{equation}
holds on $M\cup\partial M$. Thus, on each connected component $U$ of $(M\cup\partial M)\setminus\Crit$, there is a positive constant $\kappa_{U}>0$ such that $\vert \nabla u\vert^{2}=\kappa_{U}\,u^{4}$. As $u>0$ on $M\cup\partial M$, this implies that $\Crit=\emptyset$ and
\begin{equation*}
\vert \nabla u\vert^{2}=\frac{u^{4}}{R_{0}^{2}}
\end{equation*}
on $M\cup\partial M$, where we have used the asymptotic decay assumption \eqref{eq:asyu} on $u$. Hence we can use $u$ as a coordinate on $M\cup\partial M$ as in the proof of \Cref{thm:Israeldeg} and write
\begin{equation*}
g=\rho^{2}du^{2}+\sigma,
\end{equation*}
where $\sigma$ denotes the Riemannian metric on the level sets $\Sigma_{u}$ and $\rho\definedas\frac{1}{\vert \nabla u\vert}=\frac{R_{0}}{u^{2}}$. From $\Ric=0$, using \eqref{Is3deg}, \eqref{Is1deg}, and \eqref{Is2deg} from the proof of \Cref{thm:Israeldeg}, we find
\begin{align*}
H=\frac{2u}{R_{0}}, \quad \mathring{h}=0,\quad \Scal_{\Sigma_{u}}=\frac{2u^{2}}{R_{0}^{2}}
\end{align*}
for each (necessarily regular) level set $\Sigma_{u}$ of $u$. This allows us to conclude that $(M,g)$ is isometric to the exterior region of a round ball of radius $r_{0}=R_{0}$ in $(\R^{3},\delta)$ as in the proof of \Cref{thm:Israeldeg}. The remaining claims follow from \Cref{lem:zero}.
\end{proof}

\section{Photon surface uniqueness \`a la Agostiniani and Mazzieri}\label{sec:AM}
\subsection{The positive and negative mass cases  \`a la Agostiniani and Mazzieri}\label{subsec:nonzeroAM}
In \cite{Mazz}, Agostiniani and Mazzieri gave a proof of static vacuum black hole uniqueness for connected horizons. Their proof  is based on a cylindrical ansatz, namely on a \emph{conformal metric $\overline{g}$} conformally related to the given metric $g$ on $M$ which at the end is forced to be cylindrical due to the underlying PDEs, the boundary conditions, and the asymptotic behaviour. The fact that the conformal metric $\overline{g}$ is cylindrical implies that the original static metric $g$ is rotationally symmetric. From this, one can conclude isometry to the Schwarzschild metric e.g.\ by Birkhoff's theorem. 

Proving that the metric $\overline{g}$ is cylindrical is achieved by applying the divergence theorem in $(M,\overline{g})$ to a suitably chosen vector field involving a \emph{pseudo-affine function $\varphi$} outside every level set $\overline{\Sigma}_{s}=\{\varphi=s\}$ of $\varphi$; this leads to a monotonicity formula for a suitably chosen function $\Phi\colon\operatorname{Im}(\varphi)\to\R$. Evoking a splitting principle with splitting direction $\overline{\nabla}\varphi$ then gives the desired rigidity assertion. We will give the adaptation to our situation of the rigidity argument from \cite{Mazz} as well as another rigidity argument more similar to the rigidity argument given in the proof of \Cref{thm:Robinson}.

We will present the relevant parts of the proof from \cite{Mazz} (for $n=3$) applied directly to our situation, i.e., explicitly exploiting the properties of equipotential photon surfaces including the constraints \eqref{normalLapse}, \eqref{scalarMeanCurv} as early on as possible. As \cite{Mazz} very carefully studies singular level sets of $\varphi$, we will not go into detail in these aspects here at all. On the other hand, it is assumed in \cite{Mazz} that $\tau>\frac{1}{2}$ as this allows to allude to the concept of ADM mass. We will show here that this assumption on $\tau$ is not necessary; indeed, all arguments work for $\tau\geq0$ as in the other proofs. Finally, the proof given in \cite{Mazz} contains a parameter $p$ on which they impose various lower bounds; we will only study the case when $p=3$ here; this is fully sufficient for our purposes and simplifies the analysis quite a bit.

\begin{rmk}[Decay assumption in \cite{Mazz}]
Our arguments show that the decay assumption $\tau\geq0$ suffices for proving the claims in \cite{Mazz} for $n=3$, $p=3$, except for making the connection to the ADM mass. They also indicate that the same should hold true for $n>3$, $p\geq3$ (and possibly even smaller values of $p$) as well.
\end{rmk}

We will now generalize the procedure presented in \cite{Mazz} to the non-degenerate equipotential photon surface case, obtaining \Cref{thm:AM} below. Just as in \Cref{thm:Robinson}, we will need to assume $c>-\tfrac{1}{2}$ corresponding to spherical topology of the time-slice of the equipotential photon surface in the negative mass case.
 
\begin{theorem}[Non-degenerate case \`a la Agostiniani and Mazzieri]\label{thm:AM}
Let $(M^{3},g,N)$ be an asymptotically flat static vacuum system of mass $m\in\R$ and decay rate $\tau\geq0$ with connected inner boundary $\partial M$. Suppose further that $\partial M$ arises as a time-slice of a non-degenerate equipotential photon surface with $c>-\frac{1}{2}$. Then $(M,g)$ is isometric to the exterior region of a round annulus in the Schwarzschild system $(M_{m}^{3},g_{m})$ of mass $m$, with $N$ corresponding to the according restriction of the Schwarzschild lapse $N_{m}$ under this isometry. Furthermore $m\neq0$ and $N_{0}<1$ if $m>0$ while $N_{0}>1$ if $m<0$.
\end{theorem}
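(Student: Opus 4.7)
The plan is to adapt the proof strategy of Agostiniani and Mazzieri from \cite{Mazz} (specialized to $p=3$ and $n=3$) from the black hole horizon setting to the equipotential photon surface setting. Following \cite{Mazz}, the first step is to introduce a conformal change $\overline{g}$ of $g$ together with a \emph{pseudo-affine} function $\varphi=\varphi(N)$, chosen so that the static vacuum equations \eqref{eq:SVE1}, \eqref{eq:SVE2} translate into a cleaner system on $(M,\overline{g})$ in which $\varphi$ satisfies a harmonic-type PDE with respect to $\overline{g}$, and so that the Schwarzschild reference solution $(M^{3}_{m},g_{m},N_{m})$ corresponds to $(M,\overline{g})$ being a Riemannian cylinder $\R \times \sphere$ with $\varphi$ as the $\R$-coordinate. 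As in the other non-zero mass proofs in this paper, the concrete choice is sign-sensitive: for $m>0$ one can use the formulas in \cite{Mazz} directly, while for $m<0$ one uses the negative mass adaptation as done for example in \cite{CCLP}.

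Second, I would use the Bochner-based divergence identity from \cite{Mazz} for $p=3$ to establish monotonicity in $s$ of a suitable quantity
\[
\Phi(s) \definedas \int_{\{\varphi=s\}} F\bigl(\varphi,\vert\overline{\nabla}\varphi\vert_{\overline{g}}\bigr)\,dA_{\overline{g}}
\]
defined on regular level sets of $\varphi$. The divergence theorem applied between two regular level sets, combined with the Bochner identity on $(M,\overline{g})$, yields the monotonicity, and following \cite{Mazz} the argument extends across possibly singular level sets using the analyticity of $\varphi$ and the smallness of $\operatorname{Crit}(N)$. I would then evaluate $\Phi$ at its boundary value (corresponding to $\partial M$) using the photon surface constraints \eqref{normalLapse}, \eqref{scalarMeanCurv} --- here the assumption $c>-\tfrac12$ enters through \eqref{eq: Gauss-Bonnet for Sigma}, ensuring that $\partial M$ is a topological sphere --- together with the Smarr formula \eqref{eq:Smarr} expressing $\nu(N)\vert_{\partial M}$ in terms of $m$ and the area radius $r_{0}$. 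The corresponding asymptotic limit of $\Phi$ is computed from \eqref{eq:asyg}, \eqref{eq:asyN} with $\tau\geq 0$; since the relevant integrand decays at coordinate sphere at leading order $\vert x\vert^{-2}$, no decay beyond $\tau\geq 0$ is needed.

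Comparing these two values of $\Phi$ forces equality throughout the monotonicity, and from this rigidity can then be obtained in two alternative ways. The first, following \cite{Mazz}, is via a splitting principle: equality forces $(M,\overline{g})$ to split as a Riemannian cylinder with splitting direction $\overline{\nabla}\varphi/\vert\overline{\nabla}\varphi\vert_{\overline{g}}$, and undoing the conformal change yields spherical symmetry of $(M,g)$ and hence isometry to a piece of Schwarzschild by Birkhoff's theorem. Alternatively, and more in the spirit of the proof of \Cref{thm:Robinson}, one can extract from equality the functional dependence $W=W(N)$ of $W\definedas\vert\nabla N\vert^{2}$ on $N$, insert this into Israel's equations \eqref{Is1}--\eqref{Is3}, and read off spherical symmetry directly. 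The claimed sign relations between $m$ and $N_{0}$ follow from \Cref{prop:signm}, exactly as in the proof of \Cref{thm:Robinson}.

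The main obstacle is twofold: first, carefully extending the \cite{Mazz} analysis across the possibly singular level sets of $\varphi$, for which I would simply import the delicate local arguments from \cite{Mazz}; and second, verifying that the relaxation of the asymptotic decay rate from $\tau>\tfrac{1}{2}$ in \cite{Mazz} (used there to connect to the ADM mass) to $\tau\geq 0$ here does not invalidate any of the surface integral computations at infinity. The latter is plausible because all relevant boundary terms at infinity turn out to be controlled by the asymptotic mass $m$ rather than by the ADM mass, exactly as in \Cref{thm:Israel} and \Cref{thm:Robinson}; but checking this carefully for $\Phi$ at every intermediate step of the \cite{Mazz} argument is the most delicate part of the adaptation.
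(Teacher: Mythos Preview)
Your outline captures the overall architecture of the paper's proof quite faithfully --- conformal change, pseudo-affine function, Bochner-type divergence identity, monotonicity, rigidity via splitting or via $W=W(N)$ --- but there is one genuine gap in the middle of the argument.

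You write that ``comparing these two values of $\Phi$ forces equality throughout the monotonicity.'' This is not what happens in the paper, and it is not clear it could work. The monotonicity $\Phi'\leq 0$ gives only one inequality, namely $\Phi(\varphi_{0})\geq \lim_{s\to\infty}\Phi(s)$, which after unwinding translates to $|\nu(N)|\geq \tfrac{(1-N_{0}^{2})^{2}}{4|m|}$ (the paper's \eqref{eq:opposite}). To force equality, the paper needs a \emph{second, independent} inequality pointing the other way. That second inequality is $\overline{H}_{0}\geq 0$ (equivalently $H_{0}\geq \tfrac{4N_{0}\,\nu(N)}{1-N_{0}^{2}}$, the paper's \eqref{eq:oneside}), and it is \emph{not} obtained from the monotonicity of $\Phi$: it comes from applying the divergence theorem to the vector field $X=\tfrac{\overline{\nabla}|\overline{\nabla}\varphi|^{2}_{\overline{g}}}{\sinh\varphi}$ over the \emph{entire} manifold $M$ and observing that the boundary term at infinity vanishes (this is where the $\sinh\varphi$ weight matters). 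Only by combining \emph{both} inequalities with the photon surface constraints \eqref{normalLapse}, \eqref{scalarMeanCurv} and the Smarr relation \eqref{eq:Smarrnew} does one pin down $N_{0}=\tfrac{1}{\sqrt{2c+1}}$ and hence force equality in both, yielding $\overline{\nabla}^{2}\varphi=0$.

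In short, your plan is missing the $\overline{H}_{0}\geq 0$ step; a single monotone $\Phi$ is not enough in the photon surface setting because, unlike the horizon case in \cite{Mazz}, the boundary value $\Phi(\varphi_{0})$ does not a priori match the asymptotic value. Once you add this second inequality, the rest of your outline (including both rigidity alternatives and the handling of $\tau\geq 0$) lines up with the paper's proof.
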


The following lemma follows from straightforward computations, see \cite[Section 3.1]{Mazz} for the case $0<N<1$. The changes necessary for $N>1$ are obvious. 
\begin{deflemma}[Conformal picture]\label{lem:conformal}
Let $(M^{3},g,N)$ be a static vacuum system with $N(p)\neq1$ for all $p\in M$. The \emph{conformal metric $\overline{g}$} given by 
\begin{equation}\label{eq:conf}
\overline{g} \definedas (1-N^2)^{2} \, g
\end{equation}
is a smooth Riemannian metric on $M$ and the \emph{pseudo-affine function $\varphi\colon M\to\R$} given by
\begin{equation}\label{eq:affine}
\varphi\definedas\ln\left(\frac{1+N}{\vert 1-N\vert}\right)=\ln\left(\frac{\lambda(1+N)}{1-N}\right)
\end{equation}
is a smooth function on $M$, where $\lambda=\sign(m)$ as before. Conversely, one has
\begin{equation}\label{eq:tanh}
N=\left(\tanh\left(\tfrac{\varphi}{2}\right)\right)^{\lambda}.
\end{equation}
The conformal metric $\overline{g}$ and the pseudo-affine function $\varphi$ satisfy 
\begin{align}\label{eq:SVEconf1}
\overline{\Ric} &= \coth (\varphi)\,\overline{\nabla}^2\!\varphi + |\overline{\nabla} \varphi|^2_{\overline{g}}\,\, \overline{g}- d\varphi^2 ,\\\label{eq:SVEconf2}
\overline{\Delta}\varphi&=0
\end{align}
on $M$, where barred quantities are meant to be computed via $\overline{g}$.
\end{deflemma}

For convenience of the reader, let us note that if $(M^{3},g,N)$ is a Schwarzschild system of mass $m\neq0$ then $(M,\overline{g})$ is a round cylinder of radius $2\vert m\vert$ and $\overline{\nabla}\varphi=\frac{1}{2\vert m\vert}\partial_{s}$, where $s=2\vert m\vert\varphi+s_{0}$ for suitable $s_{0}>0$ denotes the height coordinate in this cylinder.

The following lemma is a direct consequence of our decay assumptions and presents a weaker decay version of the asymptotic behavior studied\footnote{choosing $n=3$, $p=3$}  in \cite{Mazz}. See \cite[Lemma 2.5]{CCLP} for details of such decay computations relying only on the decay assumption $\tau\geq0$.

\begin{lemma}[Asymptotics in the conformal picture]\label{lem:confasy}
Let $(M^{3},g,N)$ be an asymptotically flat static vacuum system of mass $m\neq0$ and decay rate $\tau\geq0$. Then
\begin{align}
\overline{g}_{ij}&=\frac{4m^{2}}{\vert x\vert^{2}}\delta_{ij}+o_{2}(\vert x\vert^{-2}),\\
\varphi&=\ln\left(\frac{2\vert x\vert}{\vert m\vert}\right)+o_{2}(1),\\
\sinh\varphi&=\frac{\vert x\vert}{\vert m\vert}(1+o_{2}(1)),\\
\varphi_{,i}&=\frac{x_{i}}{\vert x\vert^{2}}+o_{1}(\vert x\vert^{-1}),\\
\vert\overline{\nabla}\varphi\vert_{\overline{g}}&=\frac{1}{2\vert m\vert}(1+o_{1}(1)),\\
\overline{\nabla}^{2}_{ij}\varphi&=o(\vert x\vert^{-2})
\end{align}
for $i,j=1,2,3$ as $\vert x\vert\to\infty$ with respect to the asymptotic coordinates $(x^{i})$ in the asymptotic end of $M$. Now consider a regular level set $\overline{\Sigma}_{s}\definedas\{\varphi=s\}$ of $\varphi$ and let $\overline{\nu}$ denote the $\overline{g}$-unit normal to $\overline{\Sigma}_{s}$ pointing to the asymptotically cylindrical end. Then
\begin{align}
\overline{\nu}&=\frac{\overline{\nabla} \varphi}{|\overline{\nabla} \varphi|_{\overline{g}}},\\
\overline{\nu}\,^{i}&=\frac{x^{i}}{2\vert m\vert}+o(\vert x\vert)
\end{align}
as $\vert x\vert\to\infty$. The volume element $d\bar{\mu}$ induced on $M$ by $\overline{g}$ behaves as
\begin{align}
d\bar{\mu}&=\frac{8\vert m\vert}{\vert x\vert^{3}}(1+o(1))\,d\mathcal{L},
\end{align}
as $\vert x\vert\to\infty$, where $d\mathcal{L}$ denotes the volume element induced on the asymptotic end of $M$ by the flat metric $\delta$ in the coordinates $(x^{i})$. Moreover, the area element $d\bar{A}$ induced on the level set $\overline{\Sigma}_{s}$ by $\overline{g}$ and the area $\vert\overline{\Sigma}_{s}\vert_{\bar{\sigma}}$ of $\overline{\Sigma}_{s}$ with respect to $\overline{g}$ satisfy
\begin{align}
d\bar{A}&=\left(1-\left(\tanh\left(\tfrac{s}{2}\right)\right)^{2\lambda}\right)^{2}\,dA=4m^{2}(1+o(1))\,d\Omega,\\\label{eq:surfarea}
\vert\overline{\Sigma}_{s}\vert_{\bar{\sigma}}&=\left(1-\left(\tanh\left(\tfrac{s}{2}\right)\right)^{2\lambda}\right)^{2}\,\vert\Sigma_{\left(\tanh\left(\tfrac{s}{2}\right)\right)^{\lambda}}\vert=16\pi m^{2}(1+o(1))
\end{align}
as $s\to\infty$. Here, $d\Omega$ denotes the canonical area element on $\sphere$ with respect to ambient Cartesian coordinates $(x^{i})$, and where $\Sigma_{\left(\tanh\left(\tfrac{s}{2}\right)\right)^{\lambda}}=\{N=\left(\tanh\left(\tfrac{s}{2}\right)\right)^{\lambda}\}$ and $\vert \Sigma_{\left(\tanh\left(\tfrac{s}{2}\right)\right)^{\lambda}}\vert$ denotes its area with respect to~$\sigma$. 
 \end{lemma}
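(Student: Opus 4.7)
The plan is to perform a direct asymptotic computation, reducing every claim to the expansions \eqref{eq:asyg}, \eqref{eq:asyN} via the defining relations \eqref{eq:conf} and \eqref{eq:affine}. Two algebraic identities simplify the task enormously:
\begin{equation*}
\sinh\varphi \;=\; \frac{2N}{|1-N^2|}, \qquad d\varphi \;=\; \frac{2\,dN}{1-N^2},
\end{equation*}
both of which follow by unpacking $e^\varphi = \lambda(1+N)/(1-N)$ using $\lambda^2 = 1$ and $\lambda(1-N) = |1-N|$ (the latter being a consequence of $\lambda = \sign(1-N)$ from \Cref{prop:signm}). These convert everything into elementary manipulations of the asymptotic expansion of $N$.

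From $N = 1 - m/|x| + o_2(|x|^{-1})$ I would first derive $1-N^2 = 2m/|x| + o_2(|x|^{-1})$ and $(1-N^2)^2 = 4m^2/|x|^2 + o_2(|x|^{-2})$. Combined with $g_{ij} = \delta_{ij} + o_2(|x|^{-\tau})$ via \eqref{eq:conf}, this gives the expansion of $\overline{g}_{ij}$ immediately. Plugging the same expansions into $\varphi = \ln(1+N) - \ln|1-N|$ and into the algebraic identity for $\sinh\varphi$ yields the claims for $\varphi$ and $\sinh\varphi$, while one differentiation of $\varphi$ combined with $N_{,i} = mx_i/|x|^3 + o(|x|^{-2})$ produces the expansion of $\varphi_{,i}$. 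Raising indices through the expansion of $\overline{g}^{ij} = \tfrac{|x|^2}{4m^2}\delta^{ij}(1+o(1))$ then immediately yields $|\overline{\nabla}\varphi|_{\overline{g}}$ and $\overline{\nu}^i$; the identity $\overline{\nu} = \overline{\nabla}\varphi / |\overline{\nabla}\varphi|_{\overline{g}}$ follows from the fact that $\varphi \to \infty$ as $|x| \to \infty$ in both sign cases, placing the asymptotic end on the side of increasing $\varphi$.

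The only subtle point is the bound $\overline{\nabla}^2_{ij}\varphi = o(|x|^{-2})$: naive order counting yields $O(|x|^{-2})$ on each of the tensorial terms that appear, so one must exhibit a cancellation of leading coefficients. The plan is to use the chain rule to write
\begin{equation*}
\nabla^2_{ij}\varphi \;=\; \frac{4N\,N_{,i}N_{,j}}{(1-N^2)^2} + \frac{2\,\nabla^2_{ij}N}{1-N^2},
\end{equation*}
then compute the correction $(\overline{\Gamma}^k_{ij} - \Gamma^k_{ij})\varphi_{,k}$ via the conformal transformation formula applied to $\overline{g} = (1-N^2)^2 g$, and substitute the leading-order expansions $N_{,ij} = m(\delta_{ij}/|x|^3 - 3 x_i x_j/|x|^5) + o(|x|^{-3})$ (from twice-differentiating the expansion of $N$) and $|\nabla N|_g^2 = m^2/|x|^4 + o(|x|^{-4})$. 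A direct tabulation then shows that the resulting $\delta_{ij}/|x|^2$ contributions cancel against each other and the $x_i x_j/|x|^4$ contributions likewise cancel, leaving $o(|x|^{-2})$ as claimed. This is the step I expect to be the principal obstacle.

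Finally, the volume element $d\overline{\mu}$ is obtained from $\sqrt{\det \overline{g}} = |1-N^2|^3 \sqrt{\det g}$ in three dimensions together with the expansions above. The area element formula $d\overline{A} = (1-(\tanh(s/2))^{2\lambda})^2\,dA$ is a direct consequence of the set equality $\overline{\Sigma}_s = \Sigma_{(\tanh(s/2))^\lambda}$ (via \eqref{eq:tanh}) combined with the purely conformal rescaling of the induced surface metric by the factor evaluated at $N = (\tanh(s/2))^\lambda$. The asymptotic forms $d\overline{A} = 4m^2(1+o(1))\,d\Omega$ and $|\overline{\Sigma}_s|_{\bar\sigma} = 16\pi m^2(1+o(1))$ then follow by parameterizing large level sets of $N$ asymptotically as the coordinate spheres $\{|x| \approx |m|/|1-N|\}$ using asymptotic flatness, whose unscaled area element $\approx (|m|/|1-N|)^2\,d\Omega$ cancels the conformal factor $(1-N^2)^2 \approx 4m^2/|x|^2$ exactly in the limit $s \to \infty$.
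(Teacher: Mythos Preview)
Your proposal is correct and is precisely the kind of direct asymptotic computation the paper has in mind. Note that the paper does not actually supply a proof of this lemma: it merely states that the claims are ``a direct consequence of our decay assumptions'' and refers the reader to \cite[Lemma~2.5]{CCLP} for details; your write-up is therefore substantially more explicit than what appears in the paper itself. Your identification of the Hessian bound $\overline{\nabla}^{2}_{ij}\varphi = o(|x|^{-2})$ as the only nontrivial step is accurate, and the cancellation you outline---between the $\delta_{ij}/|x|^{2}$ and $x_ix_j/|x|^{4}$ contributions coming from the chain-rule expansion of $\nabla^{2}\varphi$ and those coming from the conformal Christoffel correction---is exactly what makes it work.
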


\begin{proof}[Proof of \Cref{thm:AM}]
Let us consider the smooth vector field
\begin{equation*}
X\definedas  \frac{\overline{\nabla} |\overline{\nabla} \varphi|^{2}_{\overline{g}}}{\sinh \varphi}
\end{equation*} 
on $M$ and let us remark that $(M,\overline{g})$ is geodesically and metrically complete up to $\partial M$. Using \eqref{eq:SVEconf1}, \eqref{eq:SVEconf2} and the Bochner formula, we find that
\begin{align}\label{eq:divbarX}
\overline{\diver}\, X &= \frac{\overline{\Delta} |\overline{\nabla} \varphi|^{2}_{\overline{g}} - \overline{g}(\overline{\nabla} |\overline{\nabla} \varphi|^{2}_{\overline{g}},  \overline{\nabla} \ln (\sinh \varphi))}{\sinh \varphi} = \frac{ 2|\overline{\nabla}^2 \varphi|^2_{\overline{g}}}{\sinh \varphi} \geq 0
\end{align}
on $M$. Exploiting the asymptotic behavior derived in \Cref{lem:confasy}, one finds that $\overline{\diver}\,X$ is integrable on $M$ with respect to $d\bar{\mu}$. Next, we apply the divergence theorem on $M$ with respect to $\overline{g}$ to find
\begin{align*}
\int_{M}\overline{\diver}\,X\,d\bar{\mu}= \lim_{s\to\infty}\int_{\overline{\Sigma}_{s}}\frac{\overline{g}(\overline{\nabla} |\overline{\nabla} \varphi|^{2}_{\overline{g}}, \, \frac{\overline{\nabla} \varphi}{|\overline{\nabla} \varphi|_{\overline{g}}})}{\sinh \varphi} \,  \,d\bar{A}-\int_{\partial M}\frac{\overline{g}(\overline{\nabla} |\overline{\nabla} \varphi|^{2}_{\overline{g}}, \, \frac{\overline{\nabla} \varphi}{|\overline{\nabla} \varphi|_{\overline{g}}})}{\sinh \varphi} \,  \,d\bar{A}.
\end{align*}
Here, we have used that $\partial M$ is a regular level set of $N$ and hence of $\varphi$ and that $\overline{\Sigma}_{s}$ is regular for all $s\geq s_{1}$ and a suitably large $s_{1}$. Appealing again to \Cref{lem:confasy}, we see that the above boundary integral at infinity is well-defined and indeed vanishes. In combination, the above facts combined with $\overline{g}$-harmonicity of $\varphi$ and the properties of time-slices of non-degenerate equipotential photon surfaces give
\begin{align*}
0\leq -\int_{\partial M}\frac{\overline{g}(\overline{\nabla} |\overline{\nabla} \varphi|^{2}_{\overline{g}}, \, \frac{\overline{\nabla} \varphi}{|\overline{\nabla} \varphi|_{\overline{g}}})}{\sinh \varphi} \,d\bar{A}=\frac{2\overline{H}_{0}\vert\overline{\nabla}\varphi\vert_{\overline{g}}^{2}\vert_{\partial M}}{\sinh(\varphi_{0})} \,\vert\partial M\vert_{\overline{\sigma}},
\end{align*}
with $\varphi_{0}$ and $\overline{H}_{0}$ denoting the value of $\varphi$ on $\partial M$ and the mean curvature of $\partial M$ with respect to the normal $\overline{\nu}$ and the metric $\overline{g}$, respectively. This can be summarized as
\begin{equation}\label{eq:signHbar}
\overline{H}_{0}\geq0.
\end{equation}
Moreover, by \eqref{eq:divbarX}, equality holds in \eqref{eq:signHbar} if and only if $\overline{\nabla}^{2}\varphi=0$ on $M$. Repeating the same divergence theorem argument on the domain $\{s>s_{0}\}$ for some regular value $s_{0}$ of $\varphi$, one finds
\begin{equation}\label{eq:nonneg}
\int_{\overline{\Sigma}_{s_{0}}}\overline{g}(\overline{\nabla} |\overline{\nabla} \varphi|^{2}_{\overline{g}}, \, \frac{\overline{\nabla} \varphi}{|\overline{\nabla} \varphi|_{\overline{g}}}) \,d\bar{A}\leq0
\end{equation}
for all regular values $s_{0}$ of $\varphi$. Following the analytic arguments in \cite{Mazz}, this also holds for singular values $s_{0}$ of $\varphi$. We now consider the function $\Phi\colon\operatorname{Im}(\varphi)\to\R$ given by
\begin{equation}\label{eq:Phi}
\Phi(s)\definedas\int_{\overline{\Sigma}_{s}}\vert\overline{\nabla}\varphi\vert_{\overline{g}}^{3}\;d\bar{A}.
\end{equation}
A rather standard computation shows that $\Phi$ is continuous and differentiable at regular values $s$ of $\varphi$, with
\begin{equation}\label{eq:Phi'}
\Phi'(s)\definedas\int_{\overline{\Sigma}_{s}}\overline{g}(\overline{\nabla}\vert\overline{\nabla}\varphi\vert_{\overline{g}}^{2},\frac{\overline{\nabla}\varphi}{\vert\overline{\nabla}\varphi\vert_{\overline{g}}})\;d\bar{A}
\end{equation}
on regular level sets. Following the arguments in \cite{Mazz}, one can see that $\Phi$ is also differentiable at critical values $s$ of $\varphi$ with \eqref{eq:Phi'} extending also to critical values. By \eqref{eq:nonneg}, this shows $\Phi'\leq0$ so that $\Phi$ is monotonically decreasing. Combining the Smarr formula \eqref{eq:Smarr} and the fact that $\overline{\Sigma}_{\varphi_{0}}=\partial M=\Sigma_{N_{0}}$ with the properties of time-slices of equipotential photon surfaces and the asymptotic decay asserted in \Cref{lem:confasy}, we deduce that
\begin{equation*}
\frac{32\pi\vert m\vert \nu(N)^{2}}{(1-N_{0}^{2})^{4}}=\Phi(\varphi_{0})\geq\lim_{s\to\infty}\Phi(s)=\frac{2\pi}{\vert m\vert}.
\end{equation*}
Together with \eqref{eq:signHbar} transformed back into the original variables, we thus have
\begin{align}\label{eq:oneside}
H_{0}&\geq \frac{4N_{0}\,\nu(N)}{1-N_{0}^{2}},\\\label{eq:opposite}
\vert\nu(N)\vert&\geq\frac{(1-N_{0}^{2})^{2}}{4\vert m\vert}
\end{align}
on $\partial M$. Exploiting the photon surface constraint \eqref{normalLapse} and recalling that $c\neq0$ by non-degeneracy of $\partial M$ as well as $\sign(c)=\lambda=\sign(1-N_{0}^{2})$ and\footnote{We would like to point out that $H_0>0$ actually follows from \eqref{eq:oneside} directly in this approach.} $H_{0}>0$, we rewrite \eqref{eq:oneside} as
\begin{equation}\label{eq:first}
\lambda N_{0}\leq \frac{\lambda}{\sqrt{2c+1}}.
\end{equation}
Next, we consider the area radius $r_{0}$ of $\partial M$ so that by the Gauss--Bonnet theorem (recall $c>-\tfrac{1}{2}$), the photon surface constraints \eqref{normalLapse}, \eqref{scalarMeanCurv}, and the Smarr formula \eqref{eq:Smarr}, we have
\begin{equation}\label{eq:Smarrnew}
\frac{m^{2}}{r_{0}^{2}}=\frac{c^{2}N_{0}^{2}}{2c+1}.
\end{equation}
Thus, from \eqref{eq:opposite} and \eqref{eq:first}, we find
\begin{equation}\label{eq:second}
c^{2}N_{0}^{2}\geq\frac{2c+1}{4}(1-N_{0}^{2})^{2}.
\end{equation}
For $\lambda=1$, we have $N_{0}\leq\frac{1}{\sqrt{2c+1}}$ from \eqref{eq:first} and $N_{0}\geq\frac{1}{\sqrt{2c+1}}$ from \eqref{eq:second} with $N_{0}\leq\frac{1}{\sqrt{2c+1}}$ applied to the left hand side. For $\lambda=-1$, \eqref{eq:first} gives $N_{0}\geq\frac{1}{\sqrt{2c+1}}$ and \eqref{eq:second} gives $N_{0}\leq\frac{1}{\sqrt{2c+1}}$ when applying $N_{0}\geq\frac{1}{\sqrt{2c+1}}$ to the right hand side. Thus, we have learned that
\begin{equation}\label{eq:N0}
N_{0}=\frac{1}{\sqrt{2c+1}}
\end{equation}
regardless of the value of $\lambda$. From \eqref{eq:N0} and \eqref{eq:Smarrnew} respectively from the photon surface constraint \eqref{scalarMeanCurv}, we find
\begin{align*}
\frac{m}{r_{0}}&=\frac{c}{2c+1},\\
H_{0}r_{0}&=\frac{2}{\sqrt{2c+1}}.
\end{align*}
Combined with \eqref{eq:N0}, these give equality in \eqref{eq:oneside} which is equivalent to $\overline{H}_{0}=0$. Hence we have equality in \eqref{eq:signHbar} which demonstrates that $\overline{\nabla}^{2}\varphi=0$ on $M$. We will now present two options how to conclude from here, mostly in order to highlight similarities with the proofs of \Cref{thm:Israel,thm:Robinson}: First, similar to the line of thoughts in \cite{Mazz}, we know from $\overline{\nabla}^{2}\varphi=0$ on $M$ that $\overline{\nabla}\varphi$ is a parallel vector field on $(M,\overline{g})$ and thus has constant length $\vert\overline{\nabla}\varphi\vert_{\overline{g}}$. From the asymptotic considerations in \Cref{lem:confasy}, we find that
\begin{equation*}
\vert\overline{\nabla}\varphi\vert_{\overline{g}}=\frac{1}{2\vert m\vert}
\end{equation*}
on $M$. In particular, $\varphi$ has no critical points in $M$ and we can use local coordinates $(\varphi,y^{I})$, $I=1,2$, as in the proof of \Cref{thm:Israel} to write
\begin{equation*}
\overline{g}=4m^{2}d\varphi^{2}+\overline{\sigma}.
\end{equation*}
This leads to
\begin{equation*}
0=\overline{\nabla}^{2}_{IJ}\,\varphi=-\overline{\Gamma}^{\,\varphi}_{IJ}=\frac{\overline{\sigma}_{IJ,\varphi}}{8m^{2}}
\end{equation*}
for $I,J=1,2$ which shows that $\overline{\sigma}$ is independent of $\varphi$. Thus $\overline{\sigma}$ coincides with the metric on $\partial M=\overline{\Sigma}_{\varphi_{0}}$ which is known to be round by the properties of time-slices of equipotential photon surfaces and the Gau{\ss}--Bonnet theorem. This shows that $\overline{g}=4m^{2}d\varphi^{2}+g_{\sphere_{\overline{r}_{0}}}$ up to a global ($\varphi$-independent) diffeomorphism on the spherical factor, where $\overline{r}_{0}$ denotes the area radius of $\partial M$ with respect to $\overline{\sigma}$. This shows that $g$ is spherically symmetric; as the spheres of symmetry of $g$ correspond exactly to the levels sets of $N$, we can conclude by Birkhoff's theorem. More directly, suppressing the global diffeomorphism, we can learn from \eqref{eq:surfarea} that $\overline{r}_{0}=2\vert m\vert$ and conclude that $M\approx(2\vert m\vert,\infty)\times\sphere$, $\partial M\approx \{2\vert m\vert\}\times\sphere$, and $\overline{g}=4m^{2}(d\varphi^{2}+g_{\sphere})$. Setting $r(N)\definedas \frac{2m}{1-N^{2}}$, we find $2\vert m\vert = \overline{r}_{0}=\lambda (1-N_{0}^{2})r_{0}$ so that $r_{0}=r(N_{0})$ as well $N(r)=\sqrt{1-\frac{2m}{r}}$ and thus we find $M=(r_{0},\infty)\times\sphere$  and
\begin{equation*}
g=\frac{4m^{2}}{(1-N^{2})^{2}}\left(\frac{4}{(1-N^{2})^{2}}dN^{2}+g_{\sphere}\right)=\frac{r^{4}}{m^{2}}dN^{2}+r^{2}g_{\sphere}=\frac{1}{N^{2}}dr^{2}+r^{2}g_{\sphere}
\end{equation*}
with $N=N(r)=\sqrt{1-\frac{2m}{r}}$ as desired. 

The second proof for concluding isometry to Schwarzschild from $\overline{\nabla}^{2}\varphi=0$ on $M$ goes as follows: From the definition of $\varphi$ and $\overline{g}$, one computes
\begin{equation*}
0=\overline{\nabla}^{2}\varphi=\frac{2}{1-N^{2}}\nabla^{2}N+\frac{4N}{(1-N^{2})^{2}}(3dN^{2}-\vert\nabla N\vert^{2}g)
\end{equation*}
Plugging in $\nabla N$, this shows that $\nabla\vert\nabla N\vert^{2}$ and $\nabla N$ are parallel, more precisely that they satisfy \eqref{eq:prop} which also appears in the proof of \Cref{thm:Robinson}. We can then conclude as in said proof.
 \end{proof}

\subsection{The zero mass case \`a la Agostiniani and Mazzieri}\label{subsec:zeroAM}
As in \Cref{subsec:zeroIsrael,subsec:zeroRobinson}, we will exploit the electrostatic potential of $(M,g)$, otherwise adapting the strategy of \Cref{subsec:nonzeroAM}. Most of the work has already been done by Agostiniani and Mazzieri \cite{MazzAgo} to prove the Willmore inequality in $(\R^{n},\delta)$ and we will explain the necessary adaptations, not repeating the analytic subtleties. The proof in \cite{MazzAgo} is in $n\geq3$ dimensions and contains a parameter $\beta\geq\frac{n-2}{n-1}$ which plays the same role as the parameter $\beta$ in \cite{Anabel}, see \Cref{subsec:zeroRobinson}. We would like to point out that our argument for $n=3$ works effectively the same way with any $\beta\geq\frac{1}{2}$ and that the proof of the Willmore inequality in \cite{MazzAgo} uses $\beta=1$. However, as the analysis is much simpler in the case $\beta=2$, we will only present that case, here\footnote{Other than analytic subtleties related to $\Crit$ which can all be handled exactly the same way as in \cite{MazzAgo}, the only new ingredient for using $\beta\neq2$ in our proof is again the refined Kato inequality which applies locally in any Riemannian manifold and hence also in our context.}. As in \Cref{subsec:nonzeroAM},  we give a slight adaptation of the rigidity argument given in \cite{MazzAgo}; moreover, as in \Cref{subsec:zeroRobinson}, we also give an alternative rigidity argument showing the similarity to the rigidity argument in \Cref{subsec:nonzeroRob}. This will lead to the following result.

\begin{theorem}[Degenerate case \`a la Agostiniani--Mazzieri]\label{thm:AMdeg}
Let $(M^{3},g,N)$ be an asymptotically flat static vacuum system of mass $m\in\R$ and decay rate $\tau\geq0$ with respect to asymptotic coordinates $(x^{i})$. Suppose that $M$ has a connected inner boundary $\partial M$ and denote its electrostatic potential by $u$. Suppose further that $\partial M$ arises as a time-slice of a degenerate equipotential photon surface. Assume  that \eqref{eq:asyu} holds asymptotically on $(M,g)$ with respect to $(x^{i})$. Then $(M,g)$ is isometric to the exterior region of a round ball in Euclidean space $(\R^{3},\delta)$, $N\equiv1$ on $M$, and $m=0$.
\end{theorem}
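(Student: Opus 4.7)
The plan is to adapt the Agostiniani--Mazzieri monotonicity formula approach from their proof of the Willmore inequality \cite{MazzAgo}, in the simpler case $\beta=2$ hinted at in the paragraph preceding the theorem, to the electrostatic potential $u$ on $(M,g)$. First, recall what \Cref{lem:zero} already provides: $N\equiv 1$, $m=0$, $(M,g)$ is Ricci-flat, and $\partial M$ is intrinsically a round sphere of radius $r_{0}$ with constant mean curvature $H_{0} = 2/r_{0}$ and vanishing trace-free second fundamental form. The remaining task is to upgrade this local data to a global isometry between $(M,g)$ and the exterior of a Euclidean ball of radius $r_{0}$.

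With the conformal construction of \Cref{lem:conformal} unavailable (the conformal factor $(1-N^{2})^{2}$ degenerates identically), we work directly with $u$. Using $\Delta u = 0$, $\Ric \equiv 0$, and the Bochner formula, one derives -- following the $\beta=2$ specialization of \cite{MazzAgo} -- a nonnegative divergence identity for a vector field of the form
\begin{equation*}
X \definedas p(u)\,\nabla|\nabla u|^{2} + q(u)\,|\nabla u|^{2}\,\nabla u
\end{equation*}
with suitable functions $p,q\colon(0,1]\to\R$ depending on a pair of real parameters; this identity agrees, up to the choice of parameters, with the one already employed in the Robinson-style proof in \Cref{subsec:zeroRobinson}. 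Integrating this identity over $M$ and handling $\Crit$ via the cutoff-and-mollify procedure of \cite[Lemma 4.1]{Anabel} together with the critical-set estimates of \cite{Cheeger,Hardt} -- exactly as in the proof of \Cref{thm:Robinsondeg} -- yields a monotone quantity $\Phi(t)$ on $(0,1]$ whose asymptotic limit $\lim_{t\to 0^{+}}\Phi(t)$ is computed via \eqref{eq:asyu} and \eqref{eq:Smarrb}, and whose boundary value $\Phi(1)$ is computed via $H_{0} = 2/r_{0}$, $\vert\partial M\vert = 4\pi r_{0}^{2}$, and the constant value of $|\nabla u|$ on $\partial M$.

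The crucial input from \Cref{lem:zero} is that $\Phi(1)$ precisely saturates the Willmore-type inequality yielded by the monotonicity, just as $H_{0}^{2}|\partial M| = 16\pi$ saturates the Willmore inequality \eqref{eq:Will}. Consequently $\Phi$ is constant on $(0,1]$, every nonnegative term in the divergence identity must vanish, and in particular the pointwise identity
\begin{equation*}
\nabla|\nabla u|^{2} = \frac{4\,|\nabla u|^{2}}{u}\,\nabla u
\end{equation*}
holds on $M$, matching \eqref{eq:uprop}. From here we conclude in one of two equivalent ways, paralleling the non-zero-mass presentation of \Cref{subsec:nonzeroAM}. The Agostiniani--Mazzieri-style route uses this identity together with the vanishing trace-free Hessian to force each level set $\{u=t\}$ to be a totally umbilic round sphere with constant $|\nabla u|$, and then invokes a splitting/warped-product argument to identify $(M,g)$ with the exterior of a Euclidean ball of radius $r_{0}$. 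The alternative, following the end of the proof of \Cref{thm:Robinsondeg}, integrates the above ODE on each connected component of $M\setminus\Crit$ to obtain $|\nabla u|^{2} = u^{4}/R_{0}^{2}$, forces $\Crit = \emptyset$ and $R_{0} = r_{0}$, and then uses $u$ as a global coordinate with $\Ric \equiv 0$ in \eqref{Is1deg}--\eqref{Is3deg} to exhibit $g = dr^{2} + r^{2}g_{\sphere}$ on $[r_{0},\infty)\times\sphere$.

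The main technical obstacles -- passing the divergence theorem across $\Crit$ and evaluating asymptotic integrals under the weak hypothesis $\tau\geq 0$ combined with \eqref{eq:asyu} -- are precisely those already dispatched in \cite{MazzAgo} and in our proof of \Cref{thm:Robinsondeg}, so no essentially new ideas are required beyond translating them to the present setting. The remaining claims of the theorem ($N\equiv 1$ and $m=0$) then follow directly from \Cref{lem:zero}.
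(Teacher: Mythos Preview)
Your argument is essentially the Robinson-style proof of \Cref{thm:Robinsondeg} rather than the Agostiniani--Mazzieri approach. In particular, your opening remark that ``the conformal construction of \Cref{lem:conformal} is unavailable'' is where the two diverge: the paper \emph{does} carry out a conformal construction, replacing the degenerate factor $(1-N^{2})^{2}$ by $u^{2}$ and the pseudo-affine function $\ln\!\left(\frac{1+N}{\lvert 1-N\rvert}\right)$ by $\varphi\definedas -\ln u$. One then works in $(M,\overline{g})$ with $\overline{g}=u^{2}g$, checks $\overline{\Delta}\varphi=0$ and a Ricci identity, and applies the divergence theorem to $Y=e^{-\varphi}\,\overline{\nabla}\lvert\overline{\nabla}\varphi\rvert_{\overline{g}}^{2}$ to obtain $\overline{H}_{0}\geq 0$, i.e.\ $r_{0}\geq R_{0}$; the opposite inequality $R_{0}\geq r_{0}$ comes from the monotonicity of $\Phi(s)=\int_{\overline{\Sigma}_{s}}\lvert\overline{\nabla}\varphi\rvert_{\overline{g}}^{3}\,d\bar{A}$. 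Equality forces $\overline{\nabla}^{2}\varphi=0$, whence $(M,\overline{g})$ is a round cylinder, and undoing the conformal change gives the Euclidean exterior.

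Your route---working in $(M,g)$ with the vector field $p(u)\nabla\lvert\nabla u\rvert^{2}+q(u)\lvert\nabla u\rvert^{2}\nabla u$ and two parameter choices---is valid and reaches the same rigidity identity \eqref{eq:uprop}, but it is exactly the content of \Cref{subsec:zeroRobinson}. What the conformal picture buys is a direct parallel to \Cref{subsec:nonzeroAM}: the cylindrical ansatz, the single vector field $Y$, and the monotone functional $\Phi$ replace the two-parameter juggling, and the rigidity statement $\overline{\nabla}^{2}\varphi=0$ immediately exhibits the splitting without first passing through \eqref{Is1deg}--\eqref{Is3deg}. Both proofs ultimately meet at \eqref{eq:uprop}, as the paper itself notes in its second rigidity argument.
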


\begin{rmk}[Willmore inequality]
Our proof of \Cref{thm:AMdeg} --- or rather the steps taken to address that $(M,g)$ is not a priori isometric to $(\R^{3}\setminus\overline{\Omega},\delta)$ --- suggests that the assertion in \Cref{rmk:Willmore} also holds without the technical assumption $\nabla u\neq0$ on $M$ (but it needs to be transferred to the parameter $\beta=1$ as we are using $\beta=2$, here).
\end{rmk}

\begin{proof}[Proof of \Cref{thm:AMdeg}]
First, recall from \Cref{lem:zero} that $\Ric=0$ on $M$ so that $M$ is flat. Moreover, we already know that $N\equiv 1$ on $M$ and that $m=0$. Next, we note that $\partial M$ is a regular level set of $u$ by the Hopf lemma and that the Smarr-like formula \eqref{eq:Smarrb} gives $R_{0}\neq0$ by the Hopf lemma. Now, as in \cite{MazzAgo}, we perform a conformal change to the \emph{conformal metric}
\begin{equation}\label{eq:gconf}
\overline{g}\definedas u^{2}g
\end{equation}
and define the \emph{pseudo-affine function}
\begin{equation}\label{eq:defphizero}
\varphi\definedas -\ln u
\end{equation}
on $M$. Conversely, one has
\begin{equation}\label{eq:e-}
u=e^{-\varphi}.
\end{equation}
It is straightforward to compute (see \cite[Section 2]{MazzAgo} for details) that
\begin{align}\label{eq:conf1}
\overline{\Ric}-\overline{\nabla}^{2}\varphi+d\varphi^{2} &=\vert\overline{\nabla}\varphi\vert_{\overline{g}}^{2}\;\overline{g},\\\label{eq:conf2}
\overline{\Delta}\varphi&=0
\end{align}
hold on $M$ by the static vacuum equations \eqref{eq:SVE1}, \eqref{eq:SVE2}. From the boundary condition $u\vert_{\partial M}=1$ we learn that
\begin{equation}\label{eq:phi0}
\varphi\vert_{\partial M}=0,
\end{equation}
so that in particular $\partial M$ is a level set of $\varphi$. From $d\varphi =-\frac{du}{u}$, we see that $\partial M$ is indeed a regular level set of $\varphi$ as well. Next, from the asymptotic assumption $u=\frac{R_{0}}{\vert x\vert}(1+o_{2}(1))$ on the electrostatic potential $u$, it follows that
\begin{align}
\overline{g}_{ij}&=\frac{R_{0}^{2}}{\vert x\vert^{2}}\,\delta_{ij}(1+o_{2}(1)),\\
\varphi&=\ln\left(\frac{\vert x\vert}{R_{0}}\right)+o_{2}(1),\\
e^{\varphi}&=\frac{\vert x\vert}{R_{0}}(1+o_{2}(1)),\\
\varphi_{,i}&=\frac{x_{i}}{\vert x\vert^{2}}+o_{1}(\vert x\vert^{-1}),\\
\vert\overline{\nabla}\varphi\vert_{\overline{g}}&=\frac{1}{R_{0}}(1+o_{1}(1)),\\
\overline{\nabla}^{2}_{ij}\varphi&=o(\vert x\vert^{-2})
\end{align}
for $i,j=1,2,3$ as $\vert x\vert\to\infty$. Now consider a regular level set $\overline{\Sigma}_{s}\definedas\{\varphi=s\}$ of $\varphi$ and let $\overline{\nu}$ denote the $\overline{g}$-unit normal to $\overline{\Sigma}_{s}$ pointing to the asymptotically cylindrical end. Then
\begin{align}
\overline{\nu}&=\frac{\overline{\nabla} \varphi}{|\overline{\nabla} \varphi|_{\overline{g}}},\\
\overline{\nu}\,^{i}&=x^{i}(1+o(1))
\end{align}
as $\vert x\vert\to\infty$. The volume element $d\bar{\mu}$ induced on $M$ by $\overline{g}$ behaves as
\begin{align}
d\bar{\mu}&=\frac{R_{0}^{3}}{\vert x\vert^{3}}(1+o(1))\,d\mathcal{L},
\end{align}
as $\vert x\vert\to\infty$ respectively $r\to\infty$, where $d\mathcal{L}$ denotes the volume element induced on the asymptotic end of $M$ induced by the flat metric $\delta$ in the coordinates $(x^{i})$. Moreover, the area element $d\bar{A}$ induced on the level set $\overline{\Sigma}_{s}$ by $\overline{g}$ and the area $\vert\overline{\Sigma}_{s}\vert_{\bar{\sigma}}$ of $\overline{\Sigma}_{s}$ with respect to $\overline{g}$ satisfy
\begin{align}
d\bar{A}&=e^{-2\varphi}dA=R_{0}^{2}(1+o(1))\,d\Omega,\\\label{eq:surfarea0}
\vert\overline{\Sigma}_{s}\vert_{\bar{\sigma}}&=e^{-2\varphi}\,\vert\Sigma_{e^{-\varphi}}\vert=4\pi R_{0}^{2}(1+o(1))
\end{align}
as $s\to\infty$, where $d\Omega$ denotes the canonical area element on $\sphere$ with respect to $(x^{i})$, $\Sigma_{e^{-\varphi}}=\{u=e^{-\varphi}\}$, and $\vert \Sigma_{e^{-\varphi}}\vert$ denotes its area with respect to $\sigma$. Next, let us consider the smooth vector field
\begin{equation}\label{def:Y}
Y\definedas\frac{\overline{\nabla}\vert\overline{\nabla}\varphi\vert_{\overline{g}}^{2}}{e^{\varphi}}
\end{equation}
on $M$ and let us remark that $(M,\overline{g})$ is geodesically and metrically complete up to $\partial M$. Its divergence with respect to $\overline{g}$ satisfies
\begin{equation}\label{eq:divY}
\overline{\diver}\,Y=\frac{2\vert\overline{\nabla}^{2}\varphi\vert_{\overline{g}}}{e^{\varphi}}\geq0
\end{equation}
on $M$, where we have used the Bochner formula as well as \eqref{eq:conf1}, \eqref{eq:conf2}. Exploiting the asymptotic behavior derived above, one finds that $\overline{\diver}\,Y$ is integrable on $M$ with respect to $d\bar{\mu}$. Next, we apply the divergence theorem on $M$ with respect to $\overline{g}$ to find
\begin{align*}
\int_{M}\overline{\diver}\,Y\,d\bar{\mu}= \lim_{s\to\infty}\int_{\overline{\Sigma}_{s}}\frac{\overline{g}(\overline{\nabla} |\overline{\nabla} \varphi|^{2}_{\overline{g}}, \, \frac{\overline{\nu} \varphi}{|\overline{\nabla} \varphi|_{\overline{g}}})}{e^{\varphi}} \,  \,d\bar{A}-\int_{\partial M}\frac{\overline{g}(\overline{\nabla} |\overline{\nabla} \varphi|^{2}_{\overline{g}}, \, \frac{\overline{\nabla} \varphi}{|\overline{\nabla} \varphi|_{\overline{g}}})}{e^{\varphi}} \,  \,d\bar{A},
\end{align*}
where we have used that $\partial M$ is a regular level set of $\varphi$  and that $\overline{\Sigma}_{s}$ is regular for all $s\geq s_{1}$ and a suitably large $s_{1}$. Appealing again to the above decay assertions, we see that the above boundary integral at infinity is well-defined and indeed vanishes. In combination, the above facts combined with $\overline{g}$-harmonicity of $\varphi$, the Smarr-like formula \eqref{eq:Smarrb} leading to $\vert\overline{\nabla}\varphi\vert_{\overline{g}}\vert_{\partial M}=\frac{R_{0}}{r_{0}^{2}}$ and the properties of time-slices of degenerate equipotential photon surfaces established in \Cref{lem:zero} give
\begin{align*}
0&\leq -\int_{\partial M}\frac{\overline{g}(\overline{\nabla} |\overline{\nabla} \varphi|^{2}_{\overline{g}}, \, \frac{\overline{\nabla} \varphi}{|\overline{\nabla} \varphi|_{\overline{g}}})}{e^{\varphi}} \,d\bar{A}=2\int_{\partial M} \overline{H}_{0}\vert\overline{\nabla}\varphi\vert_{\overline{g}}^{2}\,d\bar{A}=\frac{8\pi r_{0}^{4} \overline{H}_{0}}{R_{0}^{4}},
\end{align*}
with $\overline{H}_{0}$ denoting the (necessarily constant) mean curvature of $\partial M$ with respect to $\overline{g}$. This can be summarized as $\overline{H}_{0}\geq0$ and translates to
\begin{equation*}
\frac{2}{r_{0}}=H_{0}\geq2\vert\nabla u\vert=\frac{2R_{0}}{r_{0}^{2}}
\end{equation*}
or, equivalently,
\begin{equation}\label{eq:signH}
r_{0}\geq R_{0}
\end{equation}
by transforming back to the original variables and appealing to \eqref{lem:zero} and the Smarr-like formula \eqref{eq:Smarrb}. Moreover, by \eqref{eq:divY}, equality holds in \eqref{eq:signH} if and only if $\overline{\nabla}^{2}\varphi=0$ on $M$. Repeating the same divergence theorem argument on the domain $\{s>s_{0}\}$ for some regular value $s_{0}$ of $\varphi$, one finds
\begin{equation}\label{eq:nonneg0}
\int_{\overline{\Sigma}_{s_{\tiny0}}}\overline{g}(\overline{\nabla} |\overline{\nabla} \varphi|^{2}_{\overline{g}}, \, \frac{\overline{\nabla} \varphi}{|\overline{\nabla} \varphi|_{\overline{g}}}) \,d\bar{A}\leq0
\end{equation}
for all regular values $s_{0}$ of $\varphi$. Following the analytic arguments in \cite{MazzAgo}, this also holds for singular values $s_{0}$ of $\varphi$ using the work of Cheeger--Naber--Valtorta~\cite{Cheeger} and Hardt--Hoffmann-Ostenhof--Hoffmann-Ostenhof--Nadirashvili~\cite{Hardt} which extends beyond the Euclidean setup. Using the same function $\Phi$ of $\varphi$ given by \eqref{eq:Phi} as in the proof of \Cref{thm:AM}, we again find that $\Phi'(s)\leq0$ for all values $s$ of $\varphi$, i.e., all $s\in(0,\infty)$, so that $\Phi$ is monotonically decreasing. Using the fact that $\overline{\Sigma}_{0}=\partial M=\Sigma_{1}$ and the asymptotic decay asserted above, we deduce that
\begin{equation*}
\frac{4\pi R_{0}^{3}}{r_{0}^{4}}=\Phi(0)\geq\lim_{s\to\infty}\Phi(s)=\frac{4\pi}{R_{0}}
\end{equation*}
or in other words
\begin{equation}\label{eq:signHopp}
R_{0}\geq r_{0}.
\end{equation}
Combining \eqref{eq:signH} with \eqref{eq:signHopp}, we have $R_{0}=r_{0}$ and can conclude that $\overline{\nabla}^{2}\varphi=0$ from equality in \eqref{eq:signH}. We will now present two options how to conclude from here, mostly in order to highlight similarities with the proofs of \Cref{thm:Israel,thm:Robinson}: First, similar to the line of thoughts in \cite{MazzAgo}, we know from $\overline{\nabla}^{2}\varphi=0$ on $M$ that $\overline{\nabla}\varphi$ is a parallel vector field on $(M,\overline{g})$ and thus has constant length $\vert\overline{\nabla}\varphi\vert_{\overline{g}}$. From the above asymptotic considerations in, we find that
\begin{equation*}
\vert\overline{\nabla}\varphi\vert_{\overline{g}}=\frac{1}{R_{0}}
\end{equation*}
on $M$. In particular, $\varphi$ has no critical points in $M$ and we can use local coordinates $(\varphi,y^{I})$, $I=1,2$, as in the proof of \Cref{thm:Israel} to write
\begin{equation*}
\overline{g}=R_{0}^{2}\,d\varphi^{2}+\overline{\sigma}.
\end{equation*}
This leads to
\begin{equation*}
0=\overline{\nabla}^{2}_{IJ}\,\varphi=-\overline{\Gamma}^{\,\varphi}_{IJ}=\frac{\overline{\sigma}_{IJ,\varphi}}{2R_{0}^{2}}
\end{equation*}
for $I,J=1,2$ which shows that $\overline{\sigma}$ is independent of $\varphi$. Thus $\overline{\sigma}$ coincides with the metric $\overline{\sigma}=\sigma$ on $\partial M=\overline{\Sigma}_{\varphi_{0}}$ which is known to be round by the properties of time-slices of degenerate equipotential photon surfaces and the Gau{\ss}--Bonnet theorem. This shows that $\overline{g}=R_{0}^{2}\,d\varphi^{2}+g_{\sphere_{R_{0}}}$ up to a global ($\varphi$-independent) diffeomorphism on the spherical factor. Next, suppressing the global diffeomorphism, we find from the definitions of $\varphi$ and $\overline{g}$ in \eqref{eq:gconf}, \eqref{eq:defphizero} that $g=u^{-2}\,\overline{g}=R_{0}^{2}(u^{-4}du^{2}+u^{-2}g_{\sphere})$. Setting $r(u)\definedas \frac{R_{0}}{u}$ then gives $g=dr^{2}+r^{2}g_{\sphere}$ on $M\approx(R_{0},\infty)\times\sphere$ as claimed.

The second proof for concluding isometry to the exterior region of a ball from $\overline{\nabla}^{2}\varphi=0$ on $M$ goes as follows: From the definition of $\varphi$ and $\overline{g}$ in \eqref{eq:gconf}, \eqref{eq:defphizero}, one computes
\begin{equation*}
0=\overline{\nabla}^{2}\varphi=-\frac{\nabla^2u}{u}+\frac{1}{u^2}\left(3du^2-\vert\nabla u\vert^2g\right)
\end{equation*}
Plugging in $\nabla u$, this shows that $\nabla\vert\nabla u\vert^{2}$ and $\nabla u$ are parallel, more precisely that they satisfy \eqref{eq:uprop} which also appears in the proof of \Cref{thm:Robinsondeg}. We can then conclude as in said proof.
\end{proof}

%%%%% Bibliography
\bibliographystyle{amsplain}
\bibliography{sample}

\end{document}